\DeclareMathAlphabet{\mathbbold}{U}{bbold}{m}{n}
\newtheorem{theorem}{Theorem}
\newtheorem{definition}[theorem]{Definition}
\newtheorem{problem}[theorem]{Problem}
\newtheorem{lemma}[theorem]{Lemma}
\newtheorem{remark}[theorem]{Remark}
\newcommand{\comment}[1]{}
\newcommand{\studentnote}[1]{}
\newcommand{\rrnote}[1]
\newcommand{\ignore}[1]{}
\newcommand{\comment}[1]{\textsl{\small[#1]}\marginpar{\tiny\textsc{To Do!}}}
\newcommand{\ignore}[1]{}
\newcommand{\LC}{\text{LocalCharge}}
\newcommand{\GC}{\text{UniqueGlobalCharge}}
\newcommand{\TSGC}{\text{CommonGlobalCharge}}
\newcommand\Algitem{\medskip\par\noindent$\bullet$\ \justifying}
\title{Putting Off the Catching Up:
Online Joint Replenishment Problem with Holding and Backlog Costs}
\author[]{Benjamin Moseley}
\author[]{Aidin Niaparast}
\author[]{R. Ravi\thanks{This material is based upon work supported in part by the Air Force Office of Scientific Research under award number FA9550-23-1-0031, Google Research Award, an Infor Research Award, a Carnegie Bosch Junior Faculty Chair, NSF grants CCF-2121744 and CCF-1845146 and ONR Grant N000142212702.}}
\affil[]{Tepper School of Business, Carnegie Mellon University, USA. \texttt{{moseleyb,aniapara,ravi}@andrew.cmu.edu}}
\date{}
\begin{document}
\maketitle
\begin{sloppypar}
\begin{abstract}
    We study an online generalization of the classic Joint Replenishment Problem (JRP) that models the trade-off between ordering costs, holding costs, and backlog costs in supply chain planning systems. 
    A retailer places orders to a supplier for multiple items over time: each request is for some item that the retailer needs in the future, and has an arrival time and a soft deadline. 
    If a request is served before its deadline, the retailer pays a holding cost per unit of the item until the deadline. 
    However, if a request is served after its deadline, the retailer pays a backlog cost per unit. 
    Each service incurs a fixed joint service cost and a fixed item-dependent cost for every item included in a service. These fixed costs are the same irrespective of the units of each item ordered. 
    The goal is to schedule services to satisfy all the online requests while minimizing the sum of the service costs, the holding costs, and the backlog costs.  
    
   Constant competitive online algorithms have been developed for two special cases: the make-to-order version when the deadlines are equal to arrival times~\cite{buchbinder2013online}, and the make-to-stock version with hard deadlines with zero holding costs~\cite{bienkowski2014better}. Our general model with holding and backlog costs has not been investigated earlier, and no online algorithms are known even in the make-to-stock version with hard deadlines and non-zero holding costs.
   We develop a new online algorithm for the general version of online JRP with both holding and backlog costs and establish that it is 30-competitive.  Along the way, we develop a 3-competitive algorithm for the single-item case that we build on to get our final result. Our algorithm uses a greedy strategy and its competitiveness is shown using a dual fitting analysis.
\end{abstract}
\section{Introduction}

The \emph{JRP} is a fundamental problem in supply chain management ~\cite{levi2,LRS06,khouja2008review,PengWangWang,buchbinder2013online,bienkowski2014better}, and it has been studied both in the offline and online models.

\paragraph{Offline JRP.} In the offline version of the JRP (see~\cite{levi2} for pointers to relevant work), there is a set of $n$ commodities (also called items) that a retailer stocks and sells over a planing horizon $T$. The demand for each item at each time period is assumed to be known (either as a given set of deterministic quantities or via a description of the demand distribution for each item over time). The demand for an item at a given time $t$ must be fulfilled by units of the item ordered at or before $t$, i.e., no backlogging (delaying the satisfaction of the demand) is allowed. 

With no costs for ordering, the retailer will simply order as many units of each item as is demanded as they arise over time. 
Ordering items, however, incurs costs for the retailer: every order irrespective of the set of items ordered incurs a joint ordering cost $c(r)$; additionally, if units of item $v$ are part of the order, there is an item-dependent cost of $c(v)$ irrespective of the number of units of item $v$ ordered.

With nonzero ordering costs, the retailer will then order the total demand for all items at the beginning of the planning horizon in a single order, but this ignores practical constraints and costs in storing these items.
 JRP models this with a holding cost function $h^v_{tt'}$ for holding one unit of item $v$ for the period $t$ (when it is ordered) to $t'$ (when it is used to satisfy demand). 
 In this paper, we assume that the holding cost is $h$ per unit time for all items, so that $h^v_{tt'} = h \cdot(t' - t)$  for all items $v$. The goal of JRP is to find an ordering policy that minimizes the sum of ordering costs and holding costs. This version has also been called the {\em make-to-stock} JRP because items are ordered to be stored until they are consumed. 

In the {\em make-to-order} version of JRP (studied in~\cite{buchbinder2013online}), holding inventory is not allowed. Hence, the demand for each item at $t$ can only be satisfied by items ordered at or after $t$. Instead of holding costs, there are backlog or delay costs ($b^v_{tt'}$) to satisfy the orders as above, which must be minimized in addition to the ordering costs. 
The make-to-stock and make-to-order versions of JRP are equivalent in their offline versions~\cite{LRS06}.

The bulk of the research on JRP assumes either offline deterministic demand, where the demands for each item and each time period are known in advance, or stochastic demand, where the probability distributions of future demands are known (see the surveys ~\cite{khouja2008review,PengWangWang}). However, these assumptions may be unrealistic in practical scenarios, and motivated the study of online algorithms~\cite{borodin2005online} for the problem, which assumes future demands are completely unknown and generated by an adversary.

\paragraph{Online JRP with holding and backlog.}
Buchbinder et al.~\cite{buchbinder2013online} study the online version of make-to-order JRP (with no holding), and give an elegant 3-competitive algorithm using the online primal-dual framework. They also provide a lower bound of approximately 2.64 on the competitiveness of any deterministic online algorithm for the problem.
On the other hand, for the online version of the make-to-stock version of the JRP (with no backlogs) in the special case where holding costs are zero, Bienkowski et al.~\cite{bienkowski2014better} designed a 2-competitive algorithm and showed that this competitive ratio was optimal.

In reality, the problem faced by retailers involves ordering for items jointly, while allowing for both holding inventory and backlogging some demand when it is not available. In particular, if we wish to model the adversarial (non-stochastic) demand model online, there is a specific time (arrival time) when the retailer is cognizant of the demand for a future day (as a result of either a planning exercise or a bulk order with a specific target delivery date). This naturally sets up the online arrival of demand information for each item, which is associated with an arrival time, a deadline, and associated costs for early or late fulfillment. 
In this paper, we model and study the online version of this general variant of JRP. 

Motivated by the scheduling literature~\cite{Pinedo_2012}, we assume that each demand has an arrival time and a deadline. 
The arrival time can be thought of as the earliest time at which the requirement for a deterministic demand at the deadline materializes. 
Each demand can only be satisfied using services at times that are not earlier than its arrival time. 
Our model captures the scenario where a retailer anticipates at time $a$ that it is going to run out of item $v$ at time $d \geq a$. So, it can only send a request after the arrival time $a$ to the supplier for the item $v$ to try and meet the deadline $d$.
We note that papers addressing the traditional offline versions of the JRP do not consider the notion of an arrival time for demands and assume instead that they are known at the beginning of the planning horizon (or equivalently, that their arrival time is zero). With the introduction of the arrival time in the offline models, the equivalence of the make-to-order and make-to-stock versions (described in~\cite{LRS06}) no longer holds. However, it is natural and important to introduce this notion of an arrival time in the online formulation.

In the definition below, we follow the literature on multi-level aggregation problems~\cite{chrobak2014online} and model the demand requests as arising in the leaves of a tree with two levels: the root node represents the joint ordering cost and each leaf represents an item or commodity with its node cost being the item-specific ordering cost.
\begin{problem}[Online Joint Replenishment Problem]
\label{prob:Online JRP}
    Let $T$ be a two-level tree with root $r$ and leaf nodes $v_1,\ldots,v_n$.
    Each node $v \in V(T)$ has a cost $c(v)$ associated with it: the root cost represents the joint ordering cost (also called fixed service cost), and the leaf costs represent the item-specific ordering costs (also called fixed item costs).
    
    A set of requests arrive at different leaves over time, with possibly more than one request arriving at the same time. Each \textit{request} $\rho$ is specified by a tuple ${\rho=(v_\rho, a_\rho, d_\rho)}$, where $v_\rho$ is the leaf node corresponding to the commodity that makes the request, $a_\rho$ is the \textit{arrival time} of the request, and $d_\rho \geq a_\rho$ is the (soft) \textit{deadline} of the request. 
    We assume each request is for one unit of the commodity since we can replicate demands for more units accordingly. Moreover, we assume that there is no ordering cost associated with each unit of each item\footnote{This is because such costs must be paid by any solution and they are the same among different solutions; hence they can be ignored.}.
    In the online version, at each time $t$, only the requests with arrival time before or at time $t$ are known, while nothing is known about future requests. In particular, the total number of requests and the length of the time horizon of the problem are unknown.
    
    Each request can only be satisfied at or after its arrival time, but possibly after its deadline. The holding and backlog costs are assumed to be linear and uniform among different requests. In particular, if a request $\rho$ is satisfied at a time $t \in [a_\rho,d_\rho]$, we incur a holding cost of $h \cdot (d_\rho - t)$, and if it is satisfied at a time $t \in (d_\rho,\infty)$, we incur a backlog cost of $b \cdot(t-d_\rho)$.

    The requests must be satisfied using a series of \textit{services} (the term we use for orders in this paper). A service is a subtree of $T'$ of $T$ that contains $r$, and can satisfy any subset of as yet unsatisfied requests made by nodes in $V(T')$. The service cost is $\sum_{v \in V(T')} c(v)$, which is the sum of the fixed joint cost of the service (root) and the costs of the items (leaves) included in the service. The problem is to minimize the total cost of servicing all the demand, which is the sum of the service costs, holding costs, and backlog costs.
\end{problem}

\subsection{Solution Approaches and Our Contribution}
The online make-to-order JRP with only backlog cost that is studied in~\cite{buchbinder2013online} is naturally suited to the online primal-dual approach surveyed in~\cite{buchbinder2009design}. In the algorithm suggested in~\cite{buchbinder2013online}, once a new order for some item arrives, the dual variables associated with satisfying that demand start to grow over time, and when a dual constraint becomes tight, a service is triggered. However, this approach cannot be used for cases where a demand can be satisfied before its deadline (i.e., when holding inventory is allowed), and completely fails if a demand can not be satisfied after a deadline (i.e., if backlog costs are infinite). 

On the other hand, the (make-to-stock) JRP with deadlines model (JRP-D) studied in~\cite{bienkowski2014better} is also considerably simpler to handle. When a deadline triggers service, all active requests can be satisfied due to the simplifying zero holding costs assumption. Their algorithm then uses future deadlines in increasing order to accumulate more items until the joint ordering cost has been spent to preemptively satisfy some items. Then, a simple analysis shows that any optimal schedule must satisfy requests at least at half the rate (in costs) as this algorithm.

To understand the fundamental difference between handling general holding cost and backlog cost, consider the special case of the problem with only one item (i.e. one leaf). When there is only backlog cost (i.e., no holding is allowed), once a service is triggered at time $t$, the optimal decision is to include all the requested units of the item that have arrived before or at $t$ in that service, as more backlog cost is incurred if the algorithm satisfies them using some later service. Therefore, the algorithm only needs to decide the times at which services are triggered. 
However, when there is only holding cost (i.e., no backlog is allowed), the algorithm can wait until the deadline of the first unsatisfied request to trigger a service since an earlier service would be wasteful and would incur additional holding costs. But when a service is triggered, the algorithm needs to decide which unsatisfied requests with deadlines in the future it wants to include in this service. The additional decision of which active requests to include in each service is a fundamental difficulty in handling holding costs. 
When both holding and backlog are allowed, the algorithm faces both challenges of when to trigger service and which active requests to fulfill preemptively at this service.

To appreciate this, consider the single-item problem with just backlog: the optimal deterministic algorithm waits until it accumulates backlog cost $s$, where $s$ is the service cost. This accumulated backlog easily ``justifies'' this service, i.e., the dual variables associated with these requests can each be set to the amount of their backlog, which sums up to $s$.
This way, it is easily proved that this algorithm is 2-competitive.
With just holding costs, the situation is already more complicated. This is because when we trigger a service, there might not be enough active requests with deadlines in the future to fulfill by paying their holding cost up to value $s$, thus disallowing the local accumulation of dual values for this service.
The problem becomes even more algorithmically challenging when there are multiple items (i.e. multiple leaves) as in the general JRP, as the algorithm has to also decide not only additional requests but also which subset of additional {\em items} it wants to include in the current service. 

Our main contribution is a constant competitive algorithm for this problem, the first known algorithm with non-trivial worst-case guarantees. Note that the results of ~\cite{buchbinder2013online,bienkowski2014better} imply a lower bound of 2.754 for the competitive ratio of deterministic online algorithms even for the special version with no holding costs. 
To the best of our knowledge, no online algorithm is currently known even for the very special case of Online JRP with only one item and only holding costs and a strict deadline (no backlog allowed).
Along the way to developing the full algorithm, we illustrate our key ideas for the general singe-item case.
\begin{theorem}
\label{thm:single-edge}
    There exists a 3-competitive deterministic polynomial-time algorithm for Online JRP with backlog and holding costs with a single item.
\end{theorem}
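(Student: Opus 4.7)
The plan is to design a continuous-time greedy algorithm and analyze it via dual fitting against the natural LP relaxation. The primal LP has variables $y_t$ (whether a service occurs at time $t$) and $x_{\rho,t}$ (whether request $\rho$ is served at $t$), with coverage constraints $\sum_t x_{\rho,t}\geq 1$, linking constraints $x_{\rho,t}\leq y_t$, and objective $s\sum_t y_t + \sum_{\rho,t} c_{\rho,t}\,x_{\rho,t}$, where $c_{\rho,t}=h(d_\rho-t)$ for $t\in[a_\rho,d_\rho]$ and $c_{\rho,t}=b(t-d_\rho)$ for $t>d_\rho$. The dual has $\alpha_\rho\geq 0$ and $\beta_{\rho,t}\geq 0$ with $\alpha_\rho\leq c_{\rho,t}+\beta_{\rho,t}$ and $\sum_\rho \beta_{\rho,t}\leq s$; weak duality gives $\sum_\rho\alpha_\rho\leq\text{OPT}$. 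It therefore suffices to exhibit $\alpha,\beta$ with $\sum_\rho\alpha_\rho=\text{ALGO}$ and feasibility at scale $3$, namely $\alpha_\rho\leq 3\,c_{\rho,t}+\beta_{\rho,t}$ and $\sum_\rho\beta_{\rho,t}\leq 3s$, since this is equivalent to $(\alpha/3,\beta/3)$ being dual-feasible.

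The proposed algorithm tracks the accumulated backlog $B(t)=\sum_{\rho\text{ unserved}} b\max(0,t-d_\rho)$ and triggers a service at the first $t$ at which $B(t)=s$. In that service it includes all currently backlogged requests and additionally includes unserved requests with $d_\rho>t$ in increasing order of $d_\rho$, as long as the cumulative preemptive holding cost stays at most $s$. Per triggered service the algorithm pays at most $s$ (service) $+\,s$ (backlog, by the trigger) $+\,s$ (preemptive holding, by the cap) $=3s$. For the dual I would set $\alpha_\rho=c_{\rho,t_\rho}$ when $\rho$ is served preemptively at time $t_\rho$ and $\alpha_\rho=2\,c_{\rho,t_\rho}$ when it is served backlogged, so that $\sum_\rho\alpha_\rho$ per service equals $2s+(\text{preemptive holding})$, i.e. exactly the algorithm's cost for that service. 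For $\beta$ at $t=t_\rho$ I set $\beta_{\rho,t_\rho}=0$, since $\alpha_\rho\leq 2\,c_{\rho,t_\rho}\leq 3\,c_{\rho,t_\rho}$ holds in both cases; at every other $t\geq a_\rho$ I set $\beta_{\rho,t}$ to the minimum nonnegative value making the per-variable constraint $\alpha_\rho\leq 3\,c_{\rho,t}+\beta_{\rho,t}$ hold.

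The main obstacle is verifying $\sum_\rho\beta_{\rho,t}\leq 3s$ at every time $t$, including times strictly between consecutive algorithm services. For intermediate $t$ with $t_{\rm prev}<t<t_{\rm next}$, every backlogged-but-still-unserved request at $t$ is served at $t_{\rm next}$, and its $\beta_{\rho,t}$ is linear in $t_{\rm next}-t$ and $t-d_\rho$. The crucial observation is that the trigger rule forces the collective backlog at $t_{\rm next}$ to equal $s$, which implies $b|B_t|(t_{\rm next}-t)\leq b\sum_{B_t}(t_{\rm next}-d_\rho)\leq s$; this caps the contribution of the backlogged-unserved requests to $\sum_\rho\beta_{\rho,t}$ by roughly $2s$. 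A parallel calculation, exploiting the $s$-cap on preemptive holding per service, bounds the contribution from preemptively served requests whose deadlines fall just after $t$. Combining these pieces establishes dual feasibility at scale $3$, and hence $\text{ALGO}\leq 3\cdot\text{OPT}$ by weak duality.
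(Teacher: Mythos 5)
Your algorithm is exactly the paper's (backlog phase up to budget $s$, then preemptive holding phase up to budget $s$, in deadline order), and your primal/dual setup is the same. The gap is in the dual assignment. You propose $\alpha_\rho = c_{\rho,t_\rho}$ for every preemptively served request and $\beta_{\rho,t} = \max\bigl(0,\ \alpha_\rho - 3c_{\rho,t}\bigr)$, and then argue feasibility by bounding separately the contribution of the backlogged requests of the next service ($\le 2s$) and "the preemptively served requests whose deadlines fall just after $t$" ($\le s$). But that second bound is not true: the preemptive $\beta$'s do not come only from the immediately preceding service. For a request $\rho$ served preemptively at time $t_j$, $\beta_{\rho,t}$ is nonzero on an interval around $d_\rho$, namely roughly $\bigl(\tfrac{2d_\rho + t_j}{3},\ d_\rho + \tfrac{h(d_\rho - t_j)}{3b}\bigr)$, and nothing in your argument confines this interval to $(t_j, t_{j+1}]$. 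Consequently preemptive requests from arbitrarily many past services can have nonzero $\beta$ at the same time $t$, each contributing up to its holding budget $s$, and the sum can exceed $3s$ by an unbounded amount.

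Concretely, take $h = b = 1$, $s = 1$, and a small $\epsilon > 0$. At time $t_{j-1}^+$ (just after service $j-1$), have $\lceil 1/\epsilon \rceil$ requests arrive with deadline $t_{j-1}^+$ (so service $j$ is triggered at $t_j \approx t_{j-1} + \epsilon$), together with a single fresh request $\rho_j$ with deadline $t_j + 1$. At $S_j$ your algorithm serves $\rho_j$ preemptively with full budget, so $\alpha_{\rho_j} = 1$. Now at $t^* = 1$ (taken strictly between two service times), $\beta_{\rho_j,1} = \max(0,\ 1 - 3\, h\,(d_{\rho_j}-1)) = \max(0,\ 1 - 3 j\epsilon)$, which is positive for all $j < \tfrac{1}{3\epsilon}$, and $\sum_j \beta_{\rho_j,1} \approx \tfrac{1}{6\epsilon} \to \infty$ as $\epsilon \to 0$. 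So the cap $\sum_\rho \beta_{\rho,t^*} \le 3s$ fails.

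The paper gets around exactly this by \emph{not} always giving $\alpha$-mass to the preemptive requests. For each service $S_i$, it compares $h_{\max}$ (largest holding cost in $H_{i-1}$) with $b_{\max}$ (smallest backlog cost at $t_i$ among the requests of $B_i$ that were already alive at $t_{i-1}$). In the case $h_{\max} > b_{\max}$ — which is precisely what happens in the example above, since $B_i^1 = \emptyset$ and $b_{\max} = 0$ — the preemptive requests of $S_{i-1}$ are given $\alpha_\rho = 0$ and contribute no $\beta$ at all, and the $s$ is paid entirely by the backlog requests of $S_i$ (whose $\beta$ support lies in $(t_{i-1},t_i]$ because they arrived after $t_{i-1}$). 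In the other case $h_{\max} \le b_{\max}$, the paper proves that the preemptive $\beta$'s of $S_{i-1}$ vanish by time $t_i$, precisely because the holding phase of $S_{i-1}$ proceeds in increasing deadline order and any request in $B_i^1$ is a holding-phase candidate it skipped, which forces $d_{\rho_H^*} \le d_{\rho_B^*}$. That deadline-ordering observation, together with the case split, is the step you are missing; without it, the per-request bound $\beta_{\rho,t}\le\alpha_\rho$ only shows $\sum_\rho \beta_{\rho,t}$ is at most the total $\alpha$-mass of \emph{all} services whose $\beta$-supports reach $t$, which need not be $O(s)$.
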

We then extend this to the general multi-item case with a worse competitive ratio.

\begin{theorem}
    \label{thm:online JRP}
    There exists a 30-competitive deterministic polynomial-time algorithm for Online JRP with backlog and holding costs.
\end{theorem}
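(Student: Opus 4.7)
The plan is to lift the single-item 3-competitive algorithm of Theorem~\ref{thm:single-edge} to the two-level tree by applying its greedy charging scheme twice: once at each leaf to decide when an item is ``ready,'' and once at the root to decide when a joint service is actually triggered. Concretely, for every leaf $v$ the algorithm maintains the same cumulative holding/backlog charges as in the single-item case; when these charges first justify the item cost $c(v)$, the item is marked ready but not yet served. A global service is fired only when the total ready charge across all leaves justifies the root cost $c(r)$. At that moment the service subtree contains (i) every ready item, and (ii) for each such item, the set of additional active requests selected by the same look-ahead rule used in Theorem~\ref{thm:single-edge}. This mirrors the 3-competitive rule at two levels and keeps the algorithm greedy and polynomial.

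For the analysis I would use a dual-fitting argument on the natural LP relaxation that has, for each candidate subtree $S$ and time $t$, a variable indicating that the service $(S,t)$ is triggered, plus assignment variables linking each request to a $(S,t)$ pair. In the dual, each request $\rho$ gets a variable $y_\rho$ that grows while $\rho$ is alive, and each $(S,t)$ has a constraint $\sum_{\rho} y_\rho \le \sum_{v\in S} c(v)$ summed over requests that could be served by $(S,t)$. The key steps would be: (1) define the two-level algorithm precisely; (2) construct $y_\rho$ as a scaled piecewise-linear function of $\rho$'s holding/backlog contribution in the algorithm, split into a ``leaf part'' accumulating toward $c(v_\rho)$ and a ``root part'' accumulating toward $c(r)$; (3) verify dual feasibility for every subtree $(S,t)$; and (4) bound each of the four algorithmic cost buckets---fixed service cost, item costs, holding, backlog---by a constant multiple of $\sum_\rho y_\rho$, invoking Theorem~\ref{thm:single-edge} on each leaf for the item-level accounting and a parallel argument at the root for the joint trigger.

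The main obstacle will be dual feasibility for adversarial subtrees: our greedy commits to item readiness and joint triggering based on local evidence, but the dual constraint must be respected for every possible $S$ that an adversary might propose, including subtrees bundling items that the algorithm has never served together. Guaranteeing this forces us to scale the $y_\rho$ values down by a factor that simultaneously (a) prevents the alive requests on leaves in $S$ from summing past $\sum_{v\in S}c(v)$ in any $(S,t)$, and (b) still lets $\sum_\rho y_\rho$ dominate a constant fraction of the algorithm's cost. Balancing these two demands is where the blowup from $3$ to $30$ is incurred: roughly, the single-item factor of $3$ is paid once at the leaf level for holding and backlog, and an additional multiplicative slack of about $10$ is needed to accommodate the coupling between item-level readiness thresholds and the root threshold, together with the look-ahead at both levels. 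The technical core of the proof will be choosing the two scaling constants to minimize this product while keeping all dual constraints satisfied.
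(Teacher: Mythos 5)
Your high-level outline—accumulate toward $c(v)$ at each leaf, accumulate a surplus toward $c(r)$ at the root, and run the look-ahead holding rule at both levels—captures the skeleton of the paper's algorithm, but it omits an ingredient the paper explicitly identifies as indispensable, and it blurs a distinction that the dual-fitting analysis relies on.

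The missing ingredient is the Premature Backlog Phase. Your algorithm serves exactly the mature items plus their look-ahead requests. The paper notes (citing the lower-bound instances from~\cite{buchbinder2013online}) that this alone is not enough: an adversary can keep many items perpetually ``almost mature,'' forcing your algorithm to pay repeatedly for them without ever bundling them. The paper's fix is to additionally purchase, with a budget of $\Theta(c(r))$, the premature items that would mature soonest, sorted by projected maturity time. This phase is not cosmetic—it is precisely what makes the dual assignment for the root cost $c(r)$ go through. In the paper's argument, the surplus backlog that triggers service $S_i$ must be charged to requests whose $\beta$ variables stay inside $(t_{i-1},t_i)$. For a request on an item $v\in V_{i,1}\setminus V_{i-1}$ that arrived before $t_{i-1}$, the only reason the charge is safe is that either $v$ had no live request at $t_{i-1}$ or $v$ lost the Premature Backlog auction at $S_{i-1}$, which bounds its maturity time below by $t_i$ (Lemma~\ref{lem:global charge arrival}). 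Without that auction there is no such bound, and the root-cost dual fitting fails. Your sketch has no counterpart to this.

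The blurred distinction is between what triggers and what is prepaid. You describe the leaf-level readiness as tracking ``cumulative holding/backlog charges,'' but in the paper only backlog ever triggers: an item becomes mature solely from overdue requests, and the root fills solely from surplus backlog of mature items. Holding cost appears only \emph{after} a service fires, as a prepayment budget in two separate phases—Local Holding Phase with budget $c(v)$ per item, and Global Holding Phase with budget $c(r)$ across items. Both are needed: the local phase supplies the $H$-set for the per-item local charging of $c(v)$, while the global phase supplies the $H$-set for the two-sided global charging of the surplus toward $c(r)$. Collapsing these into one generic ``look-ahead rule'' obscures where each budget is spent and, in the analysis, which dual constraint ($\beta$-cap per item versus $\gamma$-cap at the root) each charge must respect.

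Finally, your LP with one variable per $(S,t)$ pair gives exponentially many dual constraints (one per subtree $S$ per time $t$), which makes ``verify feasibility for every adversarial subtree'' genuinely hard; the paper instead uses node-level variables $z_{v,t}$ coupled by $z_{v_i,t}\le z_{r,t}$, so the dual has polynomially many constraints—one per item per time and one per time for the root—and feasibility reduces to keeping $\beta$'s local to an item bounded by $c(v_i)+\gamma_{v_i,t}$ and $\sum_i\gamma_{v_i,t}$ bounded by $c(r)$. This change of LP is not just cosmetic: it is what makes the constant-factor bookkeeping tractable and yields the specific $30$ rather than a product of loose scalings.
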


The algorithm is loosely inspired by the primal-dual approach for constructing approximation algorithms~\cite{AKR95,GW95,buchbinder2009design}.
The main idea of the algorithm is to trigger services using accumulated greedy dual variable values of a natural linear programming formulation for the JRP. The service is triggered when these greedily growing dual variables (corresponding to constraints requiring serving the requests in the primal) become tight for some dual constraint. 
Despite this connection, the algorithm itself does not construct any dual solution explicitly but is stated as a simple greedy scheme. In contrast,~\cite{buchbinder2013online}
explicitly maintains and updates dual values in the algorithm.

As in a traditional primal-dual method, the duals corresponding to requests are increased simultaneously at the rate of backlog costs once their deadlines are past and time advances.
Broadly, the algorithm waits until enough backlog cost is accumulated in a first backlog phase to trigger a service (and makes tight a constraint on the dual values summing up to no more than the service costs). 
To ensure feasibility of dual-fitting, we aggregate and serve some overdue requests for additional items in a second backlog phase.
Finally, we also aggregate and serve requests with deadlines in the future for all included items in both phases. This ensures that the service cost, the backlog costs, and the holding costs paid are balanced. We use dual-fitting to define a feasible dual solution of value within a constant factor of the total algorithm's costs to prove the competitive ratio.

In more detail, the algorithm accumulates backlog costs for requests for a specific item (whose deadlines have passed) until it `fills' the item-specific ordering cost and the item becomes `mature'. Mature items then `overflow' their `surplus backlog costs' towards filling the cost of the root (the joint ordering cost) in a `Mature Backlog Phase'. A service is triggered when the root is fully paid for by these accumulating backlog costs.
At a service, all overdue requests from all mature items are satisfied.  In addition, in a `Premature Backlog Phase', a few more premature items that are close to becoming mature are accumulated in roughly increasing order of their time to maturity as a preemptive step (similar to the `simulation' step in the algorithm of~\cite{buchbinder2013online}). Then, in a `Local Holding Phase', for each included item in the service, requests with future deadlines that can be satisfied by paying an overhead up to the item-dependent ordering cost are preemptively included, again in increasing order of their deadline, in this service. Finally, a second preemptive `Global Holding Phase' includes additional requests in increasing order of future deadlines among all included items by paying an overhead of at most the joint ordering cost. The details of the four phases are in Section~\ref{sec:multi}. The bulk of our proof argues that the preemptive phases allow us to fit a feasible dual of value at least a constant fraction of the total costs paid by the triggering services.

We prove the single-item result in the next section before providing the algorithm for the general multi-item case in the following. 

\begin{remark}
    Although for simplicity throughout the paper we assume the holding and backlog cost functions are linear, all the results easily extend to the more general case where holding and backlog cost functions are non-negative and increasing (as long as they are the same across different items). 
    In particular, the results still hold when the holding and backlog costs between times $s$ and $t$, denoted by $h_{st}$ and $b_{st}$ respectively, satisfy the following property: 
    for each $s' \leq s \leq t \leq t'$ we have $h_{st} \leq h_{s't'}$ and $b_{st} \leq b_{s't'}$.
    These general results will appear in a journal version of this paper.
\end{remark}
\subsection{Related Work}
\label{sec:rel_work}
\nopagebreak
As noted above, offline JRP is studied in both the make-to-stock model, where only holding is allowed (but no backlogs), and the make-to-order model, where only backlog is allowed (but no holding). 
The special case of the make-to-order model where the delay function is linear is called JRP-L.
The special case of the make-to-stock model where there are hard deadlines but zero holding costs is called JRP-D.

\paragraph{Offline Approximation Ratios.} The offline JRP is known to be strongly $\mathbb{NP}$-hard~\cite{arkin1989computational, becchetti2009latency, nonner2009approximating}, even for the special cases of JRP-D and JRP-L. Also, JRP-D is $\mathbb{APX}$-hard~\cite{nonner2009approximating, bienkowski2015approximation}.

On the positive side, Levi, Roundy and Shmoys~\cite{LRS06} gave the first approximation algorithm for (offline make-to-stock) JRP, with a ratio of 2 under a general notion of holding costs. Their algorithm uses a primal-dual approach and can incorporate both holding and backlog costs.
This ratio was subsequently improved to 1.8
by Levi et al.~\cite{levi2008constant, levi2006improved}. Later, Bienkowski et al.~\cite{bienkowski2014better} improved the approximation ratio to 1.791. 
For offline (make-to-stock) JRP-D, the special case with zero holding costs, the ratio was reduced to 5/3 by Nonner
and Souza~\cite{nonner2009approximating} and then to $\approx 1.574$ by Bienkowski et al.~\cite{bienkowski2015approximation}.

The special case of the make-to-order JRP with a single item is also known as \textit{Dynamic TCP Acknowledgement}. 
This problem was first introduced by Dooly, Goldman, and Scott \cite{dooly1998tcp, dooly2001line} in 1998. 
In the offline setting, they gave a dynamic programming solution that solves the problem optimally in $O(n^2)$ time.

\paragraph{Online Competitive Ratios.} For the online version of (make-to-stock) JRP-D,  Bienkowski et al.~\cite{bienkowski2014better} designed a 2-competitive online algorithm and prove that it has an optimal competitive ratio.

Brito et al.~\cite{brito2012competitive}  gave a 5-competitive algorithm for online make-to-order JRP. 
Buchbinder et al.~\cite{buchbinder2013online} gave a 3-competitive algorithm for this problem. They also proved a lower bound of 2.64 for this problem, which was later improved to 2.754 in~\cite{bienkowski2014better}. These lower bounds apply to the more restricted version of the problem, where the delay functions are linear.

For the online version of the single-item make-to-order JRP (Dynamic TCP acknowledgment), Dooly et al.\cite{dooly1998tcp, dooly2001line} gave a 2-competitive algorithm, and presented a matching lower bound for any deterministic algorithm.  
Seiden~\cite{seiden2000guessing} proved an $\frac{e}{e-1}$ lower bound in the online setting for any algorithm. Later, Karlin, Kenyon and Randall~\cite{karlin2001dynamic} designed a randomized algorithm achieving $\frac{e}{e-1}$ competitive ratio.
As we mentioned before, the single-item variant of the problem with just holding was not studied before in the online setting. 

The make-to-order and make-to-stock versions of JRP are closely related to different Online Aggregation and Inventory Routing problems, respectively, which have been studied extensively. 
We review additional related literature on these problems in Appendix~\ref{sec:rel_app}.
\section{Warm Up: Single-Item Case}
\label{sec:single-item}
In this section, we study the special case of Online JRP where there is only one item and prove Theorem~\ref{thm:single-edge}.
The underlying tree only consists of a root $r$ and a leaf $v$. The cost of each service is always $s:=c(r)+c(v)$. 
First, we present an algorithm for single-item Online JRP. 
Then we write a natural linear program relaxation for the problem and obtain its dual. 
We then use dual fitting to prove Theorem~\ref{thm:single-edge}, i.e., we come up with a feasible dual solution whose objective value is at least $\frac{1}{3}$ the cost of our algorithm. This proves that the algorithm is 3-competitive.

\subsection{Algorithm}
In this section, we describe our online algorithm for single-item online JRP. 
We begin by describing the algorithmic challenges.

Suppose that only one request has arrived for this single item so far with deadline $d$.
Clearly, it is not optimal to trigger a service before $d$, as we incur an unnecessary holding cost. 
One option is to trigger a service at $d$, which results in no holding or backlog cost. 
This is indeed optimal if no other request arrives in the near future.
However, in case some other requests come shortly after $d$, it would be beneficial to aggregate all requests together and serve them all at once using one service after all their deadlines.
This approach prevents paying a service cost for each of them, at the expense of paying backlog costs for the overdue requests. 

Since the algorithm does not have access to future requests, it should hedge its bets between these two cases.   
When there is no holding allowed, the single-item problem is equivalent to the TCP-acknowledgment problem, for which Dooly et. al.~\cite{dooly1998tcp} show that the best deterministic approach is to {\bf wait until the sum of the backlog costs of the overdue requests accumulates to be equal to the service cost $s$} (similar to the familiar ski-rental idea in online algorithms) and then trigger a service. 
Our proposed algorithm follows the same stopping rule to trigger a service. 
Moreover, when a service is triggered, we satisfy all of overdue requests using this service since otherwise, when we fulfill them later, we will pay even more backlog cost for them.

Unlike the TCP-acknowledgment problem, after deciding when to trigger a service, we face an additional algorithmic challenge.
After serving overdue requests, there may still be \emph{active} requests with future deadlines. These requests can be fulfilled using the current service by paying the holding costs until their deadlines. The question we face is which subset of these active requests we must include in the current service.

On one extreme, assume that the algorithm does not fulfill any of these requests using the current service. 
Then, in the case where the backlog cost $b$ is large enough, the holding cost $h$ is small enough, and the deadlines of active requests are uniformly distributed in the future, the algorithm triggers a new service for each of the active requests.  Alternatively, the optimal algorithm satisfies all of them using the current service with small additional holding cost.
At the other extreme, the algorithm pays a significant holding cost to satisfy the active requests using the current service, which might be highly suboptimal as there is the option of waiting to get closer to the deadlines of those requests so they can be served with smaller holding/backlog costs.
To address this trade-off, our algorithm {\bf sets a budget of $s$ to pay for the holding costs of aggregating a subset of currently  active requests}.
The natural choice is to satisfy the requests with earlier deadlines first, as these requests will trigger a service sooner if we do not fulfill them now.

Putting these two phases together, we propose Algorithm~\ref{alg:single-item} for single-item online JRP.

\begin{algorithm}
    \caption{: Single-Item Online Joint Replenishment Problem}
    \label{alg:single-item}
    \setstretch{1.2}
    \vspace{0.1cm}
    \Algitem \textbf{Backlog Phase.} Wait until the sum of the backlog costs of unsatisfied overdue requests becomes $s$,
    the service cost. 
    Let $t$ be the time at which this happens. 
    Trigger a service $S$ at time $t$, and satisfy all of the unsatisfied requests with deadlines no later than $t$ using this service.

    \Algitem \textbf{Holding Phase.} Go over all the remaining active unsatisfied requests in order of their deadlines, and include them one by one in $S$, as long as the sum of their holding costs at time $t$ is at most $s$.\
    \vspace{0.1cm}
\end{algorithm}

\subsection{Analysis}
\label{sec:single-item-analysis}

In this section, we analyze the competitive ratio of Algorithm~\ref{alg:single-item}.
We start by writing a linear program for 
single-item Online JRP:

\begin{align}
    P=\min \qquad & \sum_{t} s\cdot z_t + \sum_{\rho} \left [ \sum_{a_\rho \leq  t \leq d_\rho} x_{\rho, t} \cdot h \cdot (d_\rho  - t) + \sum_{t > d_\rho} x_{\rho, t} \cdot b \cdot (t-d_\rho)
 \right ] &[\text{SJRP}_\text{P}]\\
    \text{s.t.} \qquad &\sum_{t \geq a_\rho}x_{\rho, t} = 1, & \forall \rho \label{const:request satisfaction}\\
	& x_{\rho, t} \leq z_t,  & \forall \rho , t \geq a_\rho \label{const:feasibility} \\
    & z_t, x_{\rho, t} \geq 0,  & \forall \rho , t \label{const:nonnegativity}
\end{align}

In [$\text{SJRP}_\text{P}$], $z_t$ represents whether a service is triggered at time $t$, and $x_{\rho, t}$ determines if request $\rho$ is satisfied at time $t$. Constraint~\eqref{const:request satisfaction} ensures that each request is satisfied eventually, and constraint~\eqref{const:feasibility} relates the two types of variables, making sure that a request is satisfied at time $t$ only if there is a service at that time.
Replacing constraint~\eqref{const:nonnegativity} with $z_t,x_{\rho, t} \in \{0,1\}$ results in an integer program whose optimum value is exactly $OPT$, the minimum cost incurred by the best offline algorithm. Since [$\text{SJRP}_\text{P}$] is a relaxation of this IP, it follows that $P \leq OPT$. Here is the dual of [$\text{SJRP}_\text{P}$], with variables $\alpha_\rho$ and $\beta_{\rho, t}$ assigned to constraints \eqref{const:request satisfaction} and \eqref{const:feasibility}, respectively:

\begin{align}
    D=\max \qquad & \sum_{\rho} \alpha_\rho & [\text{SJRP}_\text{D}]\\
    \text{s.t.} \qquad &\sum_{\rho:a_\rho \leq t}\beta_{\rho, t} \leq s, & \forall t \label{const:beta cap}\\
    & \alpha_\rho - \beta_{\rho, t} \leq h \cdot (d_\rho - t),  & \forall \rho , a_\rho \leq t \leq d_\rho \label{const:holding} \\
    & \alpha_\rho - \beta_{\rho, t} \leq b \cdot (t - d_\rho),  & \forall \rho , t > d_\rho \label{const:backlog} \\
    & \beta_{\rho, t} \geq 0,  & \forall \rho , t \label{const:dual nonnegativity}\\
    & \alpha_\rho \in \mathbb{R}, & \forall \rho
\end{align}
Here we give some intuition about how to approach dual fitting. 
The goal is to come up with a feasible solution to $[\text{SJRP}_\text{D}]$ with a large objective function $\sum_\rho \alpha_\rho$. 
Think of $\alpha$ variables as \emph{money} that we can use to \emph{pay} for the cost of our algorithm. 
When $\alpha_\rho>0$ for some request $\rho$, constraints~\eqref{const:holding} and \eqref{const:backlog} require us to assign nonnegative $\beta_{\rho,t}$ variables for $t \in [\max(a_\rho, d_\rho-\frac{\alpha_\rho}{h}), d_\rho + \frac{\alpha_\rho}{b}]$. 
So each $\alpha$ variable imposes a range of nonzero $\beta$ variables. 
Constraint~\eqref{const:backlog} can be rewritten as $\beta_{\rho, t} \geq \alpha_\rho - b \cdot (t - d_\rho)$, which means that at time $t=d_\rho$, we must set $\beta_{\rho,t}=\alpha_\rho$, and as we move past the deadline $d_\rho$ of $\rho$, the value of $\beta_{\rho,t}$ can be decreased at the rate of $b$ per unit of time, until it becomes 0 at time $t=d_\rho+\frac{\alpha_\rho}{b}$. 
We call this direction the \emph{forward direction} (of time).
In the other direction, which we call the \emph{backward direction}, as we move from $d_\rho$ towards $a_\rho$, Constraint~\eqref{const:holding} implies that we can decrease $\beta_{\rho,t}$ at the rate of $h$, until either it hits 0 at time $t=d_\rho-\frac{\alpha_\rho}{h}$, or hits $a_\rho$, whichever comes first. 
After that point, its value can stay at zero. Figure~\ref{fig:beta} illustrates these two cases.

In assigning these $\beta_{\rho,t}$ values as above over several requests $\rho$, the sum of the $\beta$ variables is bounded by Constraint~\eqref{const:beta cap}: at each time $t$, the total \emph{budget} we have for the sum of the $\beta$'s is at most the service cost $s$.

\begin{figure}[h!]
    \centering
    \begin{subfigure}{\linewidth}
      \centering
      \includegraphics[width=0.75\linewidth]{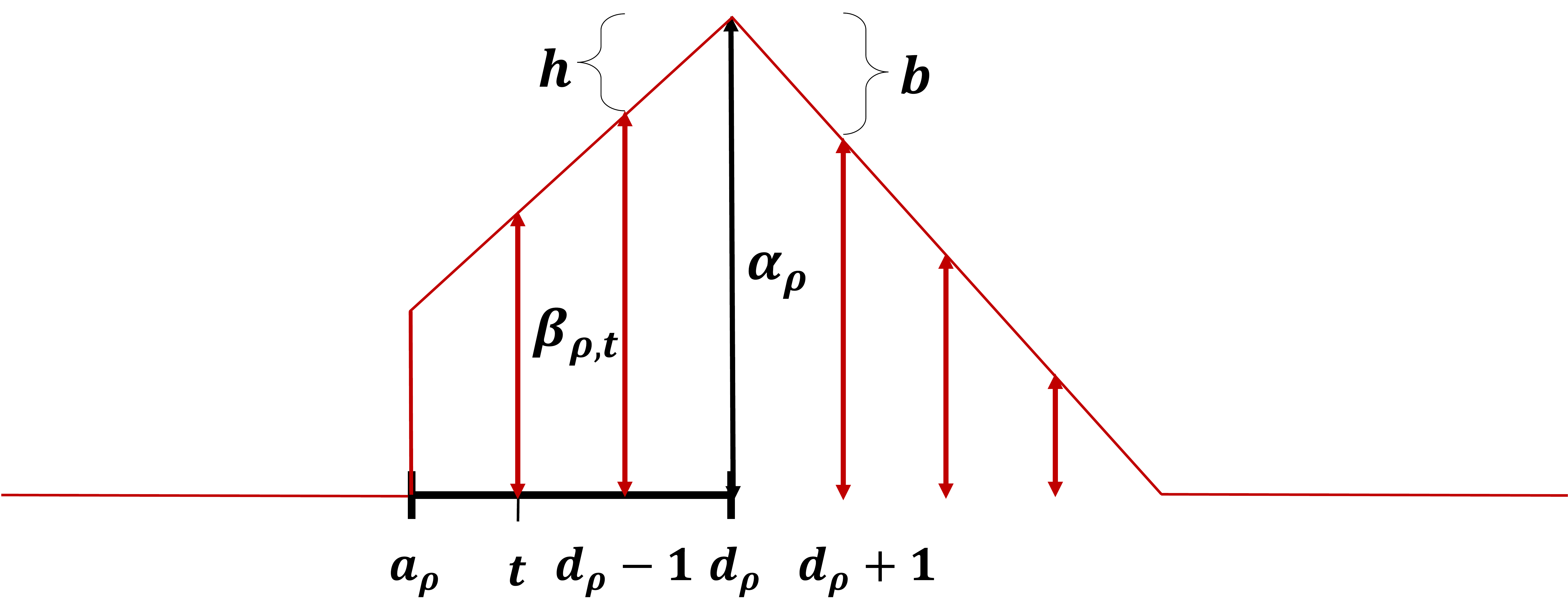}
      \caption{}
      \label{fig:beta(a)}
    \end{subfigure}
    \begin{subfigure}{\linewidth}
      \centering
      \includegraphics[width=0.75\linewidth]{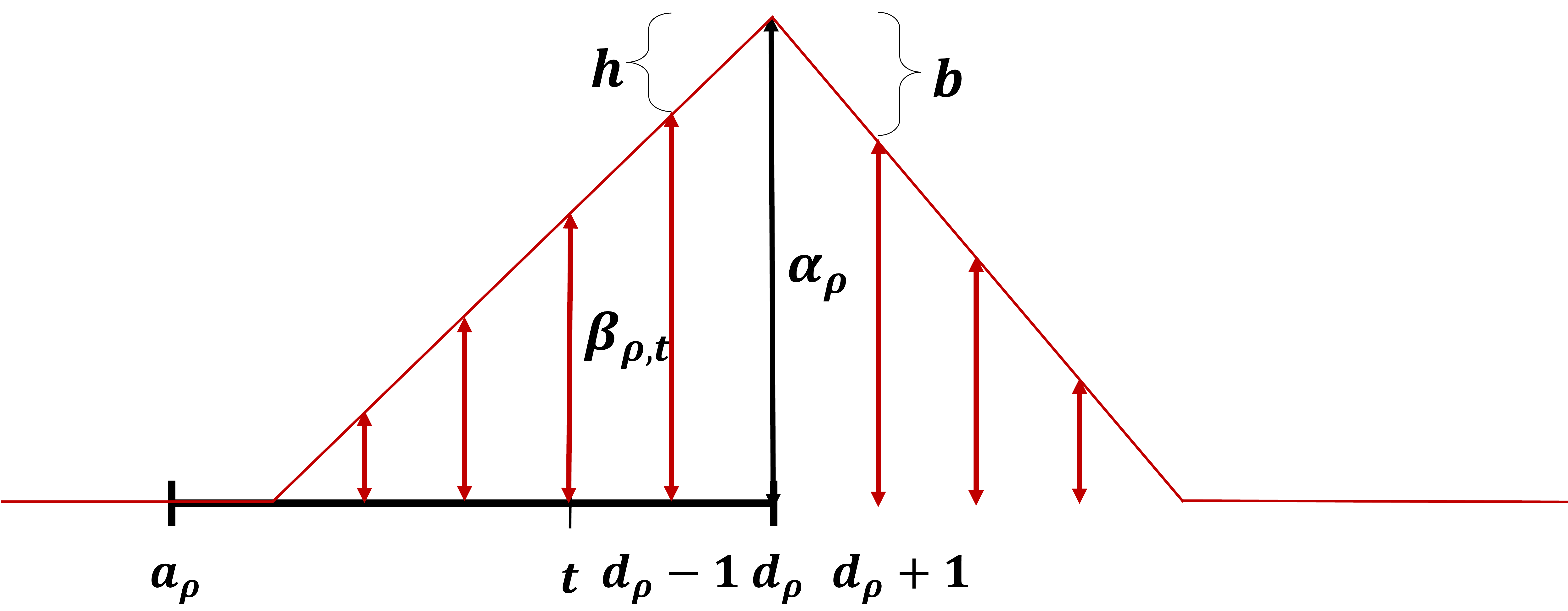}
      \caption{}
      \label{fig:beta(b)}
    \end{subfigure}
    \caption{Illustration of how the $\beta$ variables can depreciate in forward and backward directions starting from $d_\rho$ to satisfy constraints~\eqref{const:holding} and~\eqref{const:backlog} in $[\text{SJRP}_\text{D}]$ for some $\alpha_\rho>0$.
    Figures~\ref{fig:beta(a)} and \ref{fig:beta(b)} illustrate the cases where $d_\rho-\frac{\alpha_\rho}{h}< a_\rho$ and $d_\rho-\frac{\alpha_\rho}{h}> a_\rho$, respectively.}
    \label{fig:beta}
\end{figure}

Before describing the dual solution, we need some notation. 
Assume Algorithm~\ref{alg:single-item} triggers services $S_1,\ldots, S_N$ at times $t_1<\ldots<t_N$. Note that the algorithm does at most one service at each time step. 
For service $S_i$, let $B_i$ be the set of requests that trigger this service, i.e., the set of requests whose backlog costs sum up to $s$ at time $t_i$ and are satisfied in Backlog Phase of Algorithm~\ref{alg:single-item}. 
Also, let $H_i$ be the set of requests with deadlines after $t_i$ that get fulfilled by $S_i$ in Holding Phase of Algorithm~\ref{alg:single-item}.
By the design of the algorithm, we know $\sum_{\rho \in B_i} b \cdot (t_i-d_\rho) = s$ and $\sum_{\rho \in H_i} h \cdot (d_\rho - t_i) \leq s$.
For a service $S_i$, let $B^1_i$ be the set of requests in $B_i$ that were ``alive'' at the time of the previous service, i.e., $B^1_i:=\{\rho \in B_i: a_\rho \leq t_{i-1}\}$. Let $B^2_i:=B_i \backslash B^1_i$.
Since at each of the $N$ services, on top of the service cost, we pay an additional $s$ as backlog costs of the requests that trigger the service in Backlog Phase and at most additional $s$ to aggregate near-time future requests in Holding Phase, we have the following. 
\begin{lemma}
\label{lem:single-item alg cost}
    Algorithm~\ref{alg:single-item} provides a solution to the Single-Item Online JRP problem of cost at most $3Ns$.
\end{lemma}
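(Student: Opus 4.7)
The plan is to decompose the total cost incurred by Algorithm~\ref{alg:single-item} into three parts (service, backlog, holding), charge at most $s$ per part per triggered service, and sum over all $N$ services.

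First, I would verify that the algorithm is feasible, i.e., every request is ultimately served. Fix any request $\rho$. Once $t > d_\rho$, the request contributes $b\cdot(t-d_\rho)$ to the overdue backlog sum (as long as it remains unsatisfied), which grows without bound in $t$. Since a service is triggered whenever the total overdue backlog reaches $s$, request $\rho$ must be served by some $S_i$ with $t_i \leq d_\rho + s/b$. So every request is satisfied in finite time, either in the Backlog Phase (if $d_\rho \leq t_i$) or in the Holding Phase (if $d_\rho > t_i$) of some service $S_i$.

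Next I would count each category of cost. Since the algorithm triggers $N$ services, each of cost $s$, the total service cost is exactly $Ns$. Now partition the requests by which service satisfies them: each request lies in exactly one $B_i$ or $H_i$, and moreover $B_i$ consists precisely of the overdue unsatisfied requests at time $t_i$ (so $\rho \in B_i \Rightarrow d_\rho \leq t_i$) and $H_i$ consists of requests with $d_\rho > t_i$. By the triggering rule of the Backlog Phase, the Backlog Phase ends exactly when the overdue backlog reaches $s$, so
\[
    \sum_{\rho \in B_i} b\cdot(t_i - d_\rho) = s.
\]
Summing over $i$ gives a total backlog cost of $Ns$. By the stopping rule of the Holding Phase,
\[
    \sum_{\rho \in H_i} h\cdot(d_\rho - t_i) \leq s,
\]
so the total holding cost over all services is at most $Ns$. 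Requests in $H_i$ have $t_i < d_\rho$, so they incur no backlog cost, and requests in $B_i$ are served at $t_i \geq d_\rho$, so they incur no holding cost; hence no double counting occurs.

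Adding the three contributions yields a total cost of at most $Ns + Ns + Ns = 3Ns$. There is no real obstacle here; the only place one must be careful is checking that, because the overdue backlog varies continuously between services (and a newly overdue request contributes $0$ at time $d_\rho$), the triggering time $t_i$ is well-defined and the sum of backlog costs in $B_i$ equals $s$ exactly rather than just at least $s$. Everything else is a direct bookkeeping exercise using the two stopping rules built into the algorithm.
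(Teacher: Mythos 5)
Your proof is correct and follows the same decomposition the paper uses (implicitly, in the sentence preceding the lemma): service cost $Ns$, backlog cost exactly $Ns$ by the triggering rule, and holding cost at most $Ns$ by the Holding Phase budget, summing to $3Ns$. The extra observations you make — feasibility, continuity of the overdue backlog sum so that the trigger time is hit exactly, and the disjointness of $B_i$ and $H_i$ — are sound and simply make explicit what the paper leaves implicit.
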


To complete the analysis, we give a feasible solution for [$\text{SJRP}_\text{D}$] with an objective value of at least $Ns$. This would show that offline ${OPT\geq P \geq D \geq Ns}$, which then shows that the competitive ratio of  Algorithm~\ref{alg:single-item} is at most $\frac{3Ns}{Ns}=3$.

Even though our LPs have a variable or constraint for every point of time $t$, it is sufficient to only include the times corresponding to events in the optimal solution. 
Note that this is only for the sake of analysis since the algorithm itself does not use the LPs.

We begin by outlining the obstacles when developing a suitable dual solution and describe our strategies to address them.
The dual solution presented here is also a crucial building block of the analysis of Algorithm~\ref{alg:multiple-item} for the general Online JRP problem with multiple items described in Section~\ref{sec:algo}.

The goal is to assign a subset of requests to each service $S_i$ such that they can pay (accumulate) the cost $s$ using their variables $\alpha$, while all the associated variables $\beta$ across these subsets satisfy Constraint~\eqref{const:beta cap}.

Assume for a moment that for each service $S_i$, all of the requests that trigger this service arrive after the previous service, which has happened at time $t_{i-1}$, i.e., for each $\rho \in B_i$ we have $a_\rho > t_{i-1}$ (in this case, $B^1_i$ is empty for each $i$).
Then, for each service $S_i$ and for each $\rho \in B_i$,
we set $\alpha_\rho:=b \cdot (t_i-d_\rho)$ to be 
the backlog cost of $\rho$ at time $t_i$, i.e., the contribution of $\rho$ to triggering $S_i$.
With this assignment, the $\beta_{\rho,t}$ variables deflate to zero in the forward direction at time $d_\rho+\frac{\alpha_\rho}{b}=t_i$.
Moreover, in the backward direction, each $\beta_{\rho,t}$ becomes zero at $a_\rho>t_{i-1}$.
This is an ideal situation because of the following. 
\vspace*{-0.2cm}
\begin{enumerate}[itemsep=0.1cm,parsep=0.0cm]
    \item The sum of the $\alpha_\rho$ variables for all the requests that trigger $S_i$ is exactly $s$, as by definition, a service is triggered once the sum of the backlog costs of the active overdue requests becomes $s$;
    \item The $\beta$ variables associated with paying for a request $\rho \in B_i$ can have nonzero values only in the range $(t_{i-1}, t_i]$, which means that the $\beta$ variables for different services do not overlap.
    Since the $\alpha$ values for a service sum to $s$, this is also the largest sum of the corresponding $\beta$ values at any time in their interval, ensuring Constraint~\eqref{const:beta cap} is met. 
\end{enumerate}
\vspace*{-0.2cm}
Thus, in this scenario, a feasible dual solution with an objective value of $Ns$ can be easily constructed.

Now what if a request $\rho \in B_i$ arrives before $t_{i-1}$, i.e., $\rho \in B_i^1$?
We cannot do the same assignment for the dual variables, as the resulting nonzero $\beta$ variables for requests in different service sets $B_j$ can `pile up' on each other at the same time, resulting in a violation of Constraint~\eqref{const:beta cap}.
In this case, a key property of Algorithm~\ref{alg:single-item} helps us address this issue: since $\rho$ has arrived before $t_{i-1}$, this request was a candidate to be served in the Holding Phase of service $S_{i-1}$.
Since the algorithm in the Holding Phase probes the active requests in the order of their deadline, it means that service $S_{i-1}$ has served some other requests $H_{i-1}$ in its Holding Phase with earlier deadlines than $\rho$. 
Therefore, {\bf all the requests in $H_{i-1}$ have earlier deadlines than the requests in $B_i^1$}.
See Figure~\ref{fig:local charging} for an illustration.

\begin{figure}
    \centering
    \includegraphics[width=0.75\linewidth, height=2in]{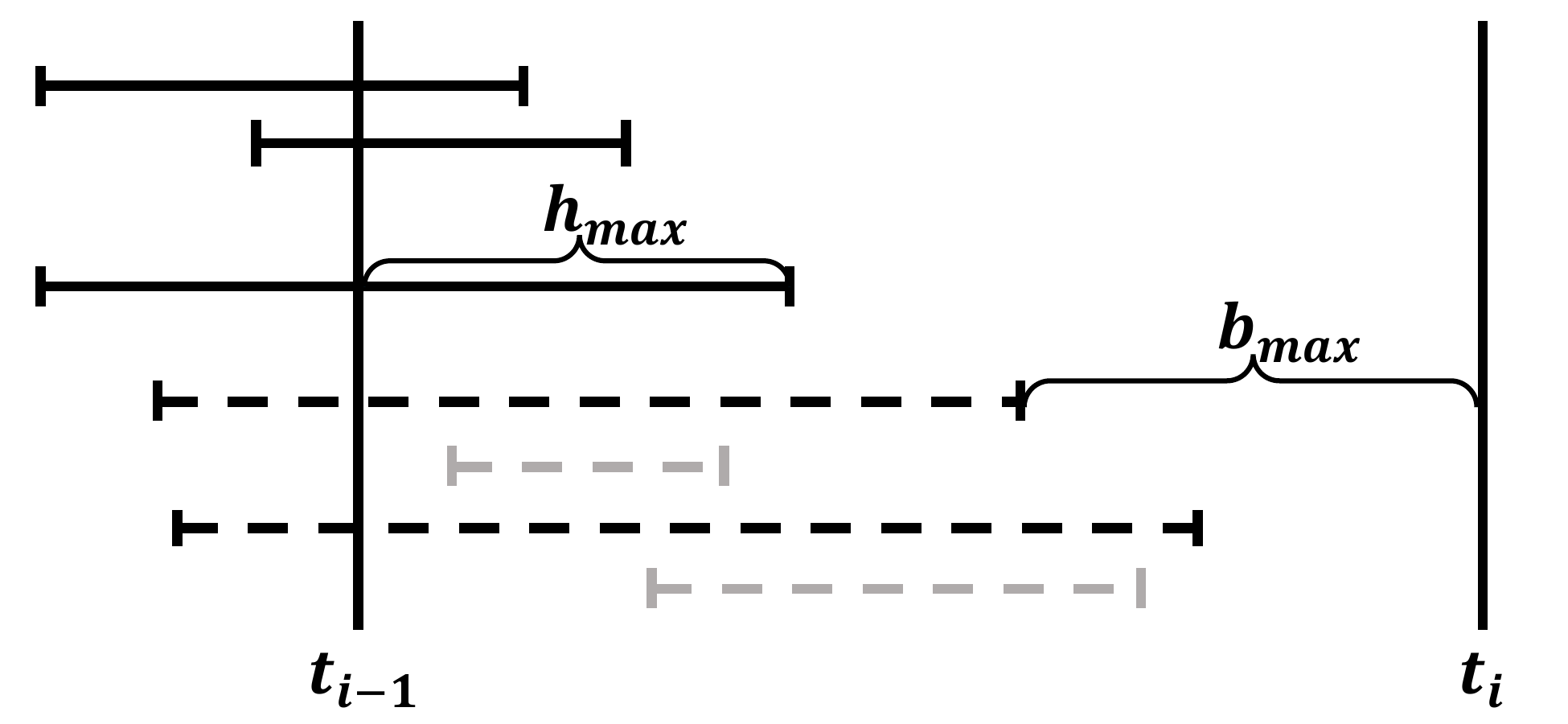}
    \caption{The requests in $H_{i-1}$ (solid), $B_i^1$ (dashed black), and $B_i^2$ (dashed grey) for some service $S_i$. These are the requests that are used to pay for (one-third of) the costs of service $S_i$.
    In this figure, $h_{\text{max}}$ is the holding cost of the request in $H_{i-1}$ with the latest deadline, and $b_{\text{max}}$ is the backlog cost of the request in $B_i^1$ which has the earliest deadline.}
    \label{fig:local charging}
\end{figure}

In this case, in addition to the requests in $B_i$, we also use the requests in $H_{i-1}$ to pay $s$.
Each request $\rho \in H_{i-1}$ can pay its holding cost that is incurred at time $t_{i-1}$ by setting $\alpha_\rho:=h \cdot (d_\rho-t_{i-1})$.
In this way, $\beta_{\rho,t}$ diminishes to zero in the backward direction at $d_\rho-\frac{\alpha_\rho}{h}=t_{i-1}$.
However, in the forward direction, it might go beyond $t_i$.

Remember that in order to control the sum of $\beta$ variables at each time (to satisfy Constraint~\eqref{const:beta cap}), it is crucial to keep the $\beta$ variables associated with different services completely disjoint.
This means that for each $\rho \in B_i \cup H_{i-1}$ and each $t \notin [t_{i-1},t_i]$, we want to have $\beta_{\rho,t}=0$.

To this end, in the dual solution, we compare $h_{\text{max}}$, the holding cost (from time $t_{i-1}$) of the request in $H_{i-1}$ with the latest deadline, with $b_{\text{max}}$, the backlog cost (at time $t_i$) of the request in $B_i^1$ with the earliest deadline (see Figure~\ref{fig:local charging}). 
Intuitively, if $h_{\text{max}}>b_{\text{max}}$, it means that the deadlines of the requests in $B_i^1$ are ``far enough'' from $t_{i-1}$ that their $\beta$ values will deflate to 0 in the backward direction before they reach $t_{i-1}$, making sure that all the nonzero $\beta$'s are confined within the range $[t_{i-1},t_i]$.
Similarly, if $h_{\text{max}}<b_{\text{max}}$, it follows that the deadlines of the requests in $H_{i-1}$ are ``far enough'' from $t_i$ that their $\beta$ variables will diminish to zero before $t_i$ in the forward direction.
Therefore, in the former case, we can use the requests in $B_i$ to pay $s$, and in the latter case, we can use the requests in $H_{i-1}$ to pay $s$. 

But what if the holding costs of the requests in $H_{i-1}$ are not enough to pay for $s$?
The final observation is that in such case, 
we can charge the requests in $B_i$ to pay the slack, as their $\beta$ values in the backward direction definitely deflate to 0 before reaching $t_{i-1}$.
This is because for each $\rho \in B_i^1$, since it was not satisfied at time $t_{i-1}$ despite being active then, there was not enough budget left in Holding Phase of $S_{i-1}$ to include it in $H_{i-1}$, which in turn means that its deadline has ``enough distance'' from $t_{i-1}$.

\paragraph{The Fitted Dual Solution.} 
Putting everything together, here is our proposed dual solution.
The proofs of Lemmas~\ref{lem:single-item dual objective} and~\ref{lem:single-item feasibility}, which show the dual solution has an objective value of $Ns$ and it is feasible, are deferred to Appendix~\ref{sec:appendix proofs}.
 
For each $i=1,\ldots,N$ do the following. 
Let $h_{\text{max}}$ be the holding cost (from time $t_{i-1}$) of the request in $H_{i-1}$ which has the latest deadline, and $b_{\text{max}}$ be the backlog cost (at time $t_i$) of the request in $B^1_i$ which has the earliest deadline (see Figure~\ref{fig:local charging}). 
Set $h_{\text{max}}:=0$ if $H_{i-1} = \emptyset$ and $b_{\text{max}}:=0$ if $B_i^1 = \emptyset$.
For ease of notation, set $H_0=\emptyset$.
There are two cases:
    \begin{itemize}
        \item \textbf{Case 1.} $h_{\text{max}} > b_{\text{max}}$: set $\alpha_\rho = b \cdot (t_i-d_\rho)$ for each $\rho \in B_i$, and set $\alpha_{\rho}=0$ for each $\rho \in H_{i-1}$. 
        \item \textbf{Case 2.} $h_{\text{max}} \leq b_{\text{max}}$: set $\alpha_\rho=h \cdot (d_\rho - t_{i-1})$ for each $\rho \in H_{i-1}$, and set $\alpha_\rho=\frac{s- h_{\text{sum}}}{s} \cdot b \cdot (t_i-d_\rho)$ for each $\rho \in B_i$, where $h_{\text{sum}}:=\sum_{\rho \in H_{i-1}} h\cdot (d_\rho-t_{i-1})$ is the sum of the holding costs of requests in $H_{i-1}$ at time $t_{i-1}$.
    \end{itemize}
For each $\rho$, set $\beta_{\rho,t}$ as follows (see Figure~\ref{fig:beta}):
\[
\beta_{\rho,t}= \begin{cases}
\max(0,\alpha_\rho - h \cdot (d_\rho - t)) &\text{for } a_\rho \leq t \leq d_\rho \\
\max(0,\alpha_\rho - b \cdot (t-d_\rho)) &\text{for } t > d_\rho.
\end{cases}
\]

\begin{lemma}
\label{lem:single-item dual objective}
    In the proposed dual solution, for each $i=1,\ldots,N$, we have $\sum_{\rho \in B_i \cup H_{i-1}} \alpha_\rho = s$. This shows that the dual objective function is $Ns$.
\end{lemma}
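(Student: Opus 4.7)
The plan is to verify the identity $\sum_{\rho \in B_i \cup H_{i-1}} \alpha_\rho = s$ directly in each of the two cases of the dual assignment, then observe that the sets used across different iterations are pairwise disjoint so that summing gives the claimed objective value $Ns$.

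First I would note the disjointness: within a single iteration $i$, no request can lie in both $B_i$ and $H_{i-1}$, since requests in $H_{i-1}$ are satisfied by the earlier service $S_{i-1}$ and hence cannot trigger $S_i$. Across iterations, the sets $B_1,\dots,B_N,H_0,\dots,H_{N-1}$ are pairwise disjoint (each request is served by exactly one service $S_j$, placing it in at most one of $B_j$ or $H_j$), and the assignment in iteration $i$ only touches requests in $B_i \cup H_{i-1}$. Hence each $\alpha_\rho$ is set in at most one iteration, so once the per-iteration identity is shown, summing over $i=1,\dots,N$ yields $\sum_\rho \alpha_\rho = Ns$.

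Next I would check the two cases. In Case~1 ($h_{\text{max}} > b_{\text{max}}$), the assignment sets $\alpha_\rho = 0$ for $\rho \in H_{i-1}$ and $\alpha_\rho = b\cdot(t_i - d_\rho)$ for $\rho \in B_i$, so
\[
\sum_{\rho \in B_i \cup H_{i-1}} \alpha_\rho = \sum_{\rho \in B_i} b\cdot(t_i - d_\rho) = s,
\]
where the last equality is the stopping rule of the Backlog Phase that triggers $S_i$. In Case~2 ($h_{\text{max}} \leq b_{\text{max}}$), by definition $\sum_{\rho \in H_{i-1}} h\cdot(d_\rho - t_{i-1}) = h_{\text{sum}}$ and $\sum_{\rho \in B_i} b\cdot(t_i - d_\rho) = s$, hence
\[
\sum_{\rho \in B_i \cup H_{i-1}} \alpha_\rho = h_{\text{sum}} + \frac{s - h_{\text{sum}}}{s}\cdot s = s.
\]

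There is essentially no obstacle here beyond careful bookkeeping of which requests belong to which set in which iteration; the algebra in each case is immediate from the triggering rule for the Backlog Phase and the definition of $h_{\text{sum}}$. The more delicate work — verifying feasibility of the $\beta$ variables under Constraint~\eqref{const:beta cap} — is deferred to Lemma~\ref{lem:single-item feasibility}.
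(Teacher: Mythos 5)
Your proof is correct and takes essentially the same approach as the paper's: verify the identity in the two cases using the stopping rule of the Backlog Phase and the definition of $h_{\text{sum}}$. The only difference is that you spell out the disjointness of the sets $B_1,\dots,B_N,H_0,\dots,H_{N-1}$, which the paper leaves implicit; this is a worthwhile clarification but not a different route.
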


\begin{lemma}
\label{lem:single-item feasibility}
    The above dual solution is feasible for {\normalfont $[\text{SJRP}_\text{D}]$}.
\end{lemma}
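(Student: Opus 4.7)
The plan is to verify each dual constraint in turn. Constraints \eqref{const:holding}, \eqref{const:backlog}, and \eqref{const:dual nonnegativity} are immediate from the definition of $\beta_{\rho,t}$: when the $\max$ is attained by the linear expression, the corresponding constraint is tight; when it is attained by $0$, the condition $\alpha_\rho \leq h(d_\rho-t)$ (or $\alpha_\rho \leq b(t-d_\rho)$) must already hold, so the constraint is satisfied with slack. The substance of the proof lies in verifying the budget constraint \eqref{const:beta cap}.

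The key step will be a separation lemma: for each $\rho \in B_i \cup H_{i-1}$, the support of $\beta_{\rho,\cdot}$ is contained in $(t_{i-1}, t_i)$, so that the $\beta$-supports of different services do not overlap. For the forward direction I need $d_\rho + \alpha_\rho/b \leq t_i$. For $\rho \in B_i$ this is direct: equality in Case~1 since $\alpha_\rho = b(t_i-d_\rho)$, and slack in Case~2 because the prefactor $(s-h_{\text{sum}})/s \leq 1$ only shrinks $\alpha_\rho$. For $\rho \in H_{i-1}$ this only matters in Case~2, where $\alpha_\rho = h(d_\rho - t_{i-1}) \leq h_{\text{max}} \leq b_{\text{max}} < b(t_i - d_\rho)$, using that every request in $H_{i-1}$ has a strictly earlier deadline than every request in $B_i^1$ (the main structural property of the Holding Phase). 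For the backward direction, the claim $d_\rho - \alpha_\rho/h \geq t_{i-1}$ is immediate for $\rho \in H_{i-1}$ (equality by construction in Case~2, and $\alpha_\rho=0$ in Case~1) and for $\rho \in B_i^2$ (where $a_\rho > t_{i-1}$ already forces $\beta_{\rho,t}=0$ for $t \leq t_{i-1}$).

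The hard case is the backward direction for $\rho \in B_i^1$. In Case~1 I can chain $\alpha_\rho = b(t_i-d_\rho) \leq b_{\text{max}} < h_{\text{max}} < h(d_\rho - t_{i-1})$, again using the $H_{i-1}$ versus $B_i^1$ deadline comparison. Case~2 is the main obstacle, because one cannot directly compare $\alpha_\rho$ with $h(d_\rho - t_{i-1})$ through $h_{\text{max}}$ and $b_{\text{max}}$. Here I would unpack the termination rule of the Holding Phase of $S_{i-1}$: since $\rho$ was active at $t_{i-1}$ but was probed by the deadline-ordered Holding Phase after all of $H_{i-1}$ and rejected, including it would have overflowed the budget, yielding $h_{\text{sum}} + h(d_\rho - t_{i-1}) > s$, i.e.\ $h(d_\rho - t_{i-1}) > s - h_{\text{sum}}$. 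Combined with $\alpha_\rho = \frac{s - h_{\text{sum}}}{s}\, b(t_i - d_\rho) \leq s - h_{\text{sum}}$ (because each single backlog contribution $b(t_i - d_\rho)$ is bounded by the total $s$ accumulated at service time), this gives $\alpha_\rho \leq s - h_{\text{sum}} < h(d_\rho - t_{i-1})$, as required.

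Once the separation lemma is in hand, \eqref{const:beta cap} is immediate: at any fixed $t$ there is at most one index $i$ with $t \in (t_{i-1}, t_i)$, so $\sum_{\rho: a_\rho \leq t} \beta_{\rho, t}$ is supported on $\rho \in B_i \cup H_{i-1}$ for that single $i$. Bounding $\beta_{\rho,t} \leq \alpha_\rho$ pointwise and applying Lemma~\ref{lem:single-item dual objective} to get $\sum_{\rho \in B_i \cup H_{i-1}} \alpha_\rho = s$ then closes the argument.
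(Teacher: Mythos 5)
Your proof is correct and follows essentially the same route as the paper's: you establish that the support of each service's $\beta$-variables is confined to $(t_{i-1},t_i)$ via the same case split on $h_{\text{max}}$ vs.\ $b_{\text{max}}$, the same deadline-ordering fact about the Holding Phase ($d_{\rho^*_H}\le d_{\rho^*_B}$), and the same budget-exhaustion bound $h(d_\rho-t_{i-1})>s-h_{\text{sum}}$ for the Case~2 / $B_i^1$ subcase, then close with $\beta_{\rho,t}\le\alpha_\rho$ and Lemma~\ref{lem:single-item dual objective}. The only difference is that you don't spell out the degenerate cases $H_{i-1}=\emptyset$, $B_i^1=\emptyset$, and $i=1$ that the paper addresses explicitly, but your main argument already covers them once $h_{\text{max}}$ or $b_{\text{max}}$ is taken to be $0$.
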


\begin{proof}[Proof of Theorem~\ref{thm:single-edge}]
    Follows from Lemmas~\ref{lem:single-item alg cost},~\ref{lem:single-item dual objective}, and~\ref{lem:single-item feasibility}.
\end{proof}

\begin{remark}
    The analysis of the Algorithm~\ref{alg:single-item} is tight, i.e., there exists an instance of the problem for which the algorithm performs 3 times worse than the optimal offline algorithm:
    assume $b=h=1$ and all the arrival times are 0. 
    Assume there are $s$ deadlines at time 0, and for each $k \geq 1$, there are $2s$ deadlines at time $2k$. 
    Algorithm~\ref{alg:single-item} makes services at times $1,3,\ldots$ and pays $3s$ for each service. But one can do services at times $0,2,.\ldots$ and pay $s$ for each service (no holding or backlog cost incurred).
\end{remark}

\begin{remark}
    In the case where no backlog is allowed, i.e. $b=\infty$, the same analysis shows that Algorithm~\ref{alg:single-item} is 2-competitive. Note that in this case, the algorithm waits until one of the requests becomes due, and then triggers a service. After that, the algorithm goes over the active requests in the order of their deadlines and includes them in the current service until their total holding cost reaches $s$.
\end{remark}

\section{Online JRP with Multiple Items}
\label{sec:multi}
In this section, we describe an algorithm for (multiple-item) Online JRP. In Online JRP, each service includes a subset $U \subseteq \{v_1,\ldots,v_n\}$ of the items, and the cost of the service is the sum of the joint service cost $c(r)$, the item costs $c(U)=\sum_{u \in U}c(u)$, and the holding and backlog costs of the requests fulfilled using this service.

In the case where $c(r)>\sum_{i=1}^n c(v_i)$, the problem reduces to single-item JRP with service cost $c(r)$ as the sum of the joint service cost and item costs for each service is within a constant factor of $c(r)$.
On the other hand, if for some item $v$ we have $c(r) < c(v)$, we can decouple $v$ from the other items and treat it as a separate single-item problem, which results in the loss of only a constant factor.
While we make no assumptions, the above arguments show that the hard case is where $c(r) >c(v_i)$ for each $i$ and $\sum_{i=1}^n c(v_i)$ is much larger than $c(r)$.  Critically, in the multi-item case, the algorithm will need to justify paying both the joint service cost and the cost of each individual item that is included in a service.

\subsection{Algorithm}
\label{sec:algo}
We start with an intuitive description of the algorithm and then give a formal definition.

\smallskip
\noindent \textbf{Triggering a Service:} 
Using a greedy dual-growing approach, each item type $v_i$ is represented by a cup with a target volume $c(v_i)$ which is ``filled" over time with backlog costs of requests for this item.
Once an item's cup is full, the algorithm decides to include that item in the next service.
We call such items \emph{mature}.
The algorithm additionally accumulates  \emph{surplus} backlog cost, which can be thought of as the ``overflow'' of the item cups into another cup with volume $c(r)$ which corresponds to the joint service cost.
Once this joint cup is filled in a ``mature" backlog accumulation phase, the algorithm \emph{triggers} an actual service and includes all the mature items in the service, which we denote by $S$.
The algorithm naturally serves all the overdue requests for all items in $S$ since otherwise a larger backlog cost will be incurred to serve them in the future.

\smallskip
\noindent \textbf{Selecting Additional Items to Serve:}   It is not sufficient to include only the mature items as there are instances where the algorithm incurs large costs from requests for several items that are just about to become mature (see~\cite{buchbinder2013online}).  To overcome such instances, the algorithm also serves some of the \emph{premature} items; that is, the items that have not accumulated enough backlog cost to fill up their cups. Since we are adding premature items based on their backlog, this adds a second ``premature" backlog phase to the algorithm. In this phase, the algorithm sets a budget of $\Theta(c(r))$ and ``buys'' premature items one by one in the order of the first time they are going to be mature assuming no new request arrives for them.
It is worth mentioning that the algorithm presented in~\cite{buchbinder2013online} for make-to-order JRP has a similar step called the ``simulation step''.

\smallskip
\noindent \textbf{Selecting Additional Requests to Serve:}  The algorithm considers each item $v_i \in S$ and performs a ``local"  holding phase similar to Algorithm~\ref{alg:single-item} with a budget of $c(v_i)$. 
As explained later in Section~\ref{sec:technical_overview}, this helps us accumulate the $c(v_i)$ cost using dual variables.
Moreover, by the same logic, to gather the joint service cost $c(r)$ in a feasible dual, another ``global" holding phase is needed jointly among all the items in the service.
In this ``global'' holding phase, the algorithm sets a budget of $c(r)$ and goes over all the active requests for the items in $S$ in the order of their deadlines and fulfills them by paying their holding cost.

Formally, we propose Algorithm~\ref{alg:multiple-item} for Online JRP. 
Even though the algorithm uses a continuous notion of time for illustrative convenience, it is not hard to implement it in time polynomial in the number of requests since all relevant times of maturity are polynomially bounded by the number of requests.

\begin{algorithm}[h!]
    \caption{: Multiple-item Online Joint Replenishment Problem}
    \label{alg:multiple-item}
    \setstretch{1.2}
    \vspace{0.2cm}
    Start with time $t=0$ and continuously increase $t$. At each time $t$:
    
    \Algitem Let $\mathcal{A}_t$ be the set of all the \emph{active} requests at time $t$, i.e., the requests that have arrived before or at $t$ and are not satisfied by any of the previous services.
    
    \Algitem Assign a variable $b_t(v)$ to each item $v$, which indicates the amount of backlog accumulated by the unsatisfied overdue requests for $v$ at time $t$, i.e., $b_t(v)=\sum_{\rho \in \mathcal{A}_t: v_\rho=v, \ d_\rho < t} b \cdot (t-d_\rho)$. We call an item $v$ \emph{mature}, if $b_t(v) \geq c(v)$. Any backlog cost that is accumulated after $v$ becomes mature is called \emph{surplus backlog cost}.
    
    \Algitem Assign a variable $b_t(r)$ to the root, which indicates the sum of the surplus backlog costs for the mature items at time $t$, i.e., $b_t(r) = \sum_{v: b_t(v) \geq c(v)} (b_t(v) - c(v))$.
    
    \Algitem \textbf{Mature Backlog Phase} If $b_t(r)=c(r)$, i.e., the surplus backlog cost accumulated equals the joint service cost, trigger a service $S$ at time $t$. Include all the mature items $v$ in $S$, satisfy all their overdue requests using $S$, and remove them from $\mathcal{A}_t$.
    
    \Algitem \textbf{Premature Backlog Phase.} For each \emph{premature} item $v$ that has at least one unsatisfied request, compute the first time $\text{mature}_t(v)$ it is going to be mature, assuming no new request shows up, i.e., $\text{mature}_t(v)$ is a time $t'$ such that $\sum_{\rho \in \mathcal{A}_t: v_\rho=v, \ d_\rho < t'} b \cdot (t'-d_\rho) = c(v)$.  
    Sort all of the premature items in non-decreasing order of $\text{mature}_t(v)$, and include them in $S$ one by one, as long as the sum of their item costs $c(v)$ is at most $2c(r)$. 
    For each newly included item, satisfy all of their overdue requests using $S$, and remove them from $\mathcal{A}_t$.
    
    \Algitem \textbf{Local Holding Phase.} For each item $v$ included in $S$, iterate over the remaining active requests for $v$ in non-decreasing order of their
    deadlines, and satisfy them one by one using $S$, as long as the sum of their holding cost is at most $c(v)$ (Note that since all the overdue requests in $v$ are already satisfied using one of the backlog phases, all the remaining active requests in $v$ have deadlines in the future).

    \Algitem \textbf{Global Holding Phase.} Go over the remaining active requests for items included in $S$ in non-decreasing order
    of their deadlines, and satisfy them one by one using $S$, as long as the sum of their holding cost is at most $c(r)$.
    \vspace{0.2cm}
\end{algorithm}
\subsection{Primal and Dual Problems}

We use dual-fitting to analyze the algorithm.
First, we start by writing a linear program and its dual for Online JRP.

\begin{align}
    P=\min \ & \sum_{v\in \{r,v_1,\ldots,v_n\}}\sum_t c(v)\cdot z_{v,t} &[\text{JRP}_\text{P}] \\
    &+\sum_{\rho} \left [ \sum_{a_\rho \leq  t \leq d_\rho} x_{\rho, t} \cdot h \cdot (d_\rho  - t) + \sum_{t > d_\rho} x_{\rho, t} \cdot b \cdot (t-d_\rho)\right ] & \\
    \text{s.t.} \ &\sum_{t \geq a_\rho}x_{\rho, t} = 1, & \forall \rho \label{const JRP_P:request satisfaction}\\
	& x_{\rho, t} \leq z_{v_\rho,t},  & \forall \rho , t \geq a_\rho \label{const JRP_P:feasibility} \\
    & z_{v_i,t} \leq z_{r,t}, & \forall t, 1 \leq i \leq n \label{const JRP_P:parent}\\
    & x_{\rho, t} \geq 0,  & \forall \rho , t \label{const JRP_P:x nonnegativity} \\
    & z_{v,t} \geq 0, & \forall v,t \label{const JRP_P:z nonnegativity}
\end{align}

In $[\text{JRP}_\text{P}]$, $z_{v,t}$ represents whether a service is made at time $t$ that includes node $v$, and $x_{\rho, t}$ determines if  request $\rho$ is satisfied at time $t$. 
Constraint~\eqref{const JRP_P:request satisfaction} ensures that each request is eventually satisfied, and Constraint~\eqref{const JRP_P:feasibility} relates the two types of variables, ensuring that a request is satisfied at time $t$ only if a service is provided at that time. 
Constraint~\eqref{const JRP_P:parent} guarantees that the root is included in all services, i.e., the joint service cost is paid for each service that is triggered.
If the variables are constrained to be binary, the resulting Integer Program will be an exact formulation for the problem. Thus $[\text{JRP}_\text{P}]$ is a relaxation of the original problem, and it follows that $P \leq OPT$. 

The dual of $[\text{JRP}_\text{P}]$ has variables $\alpha_\rho$, $\beta_{\rho, t}$, and $\gamma_{v_i,t}$ assigned to the constraints \eqref{const JRP_P:request satisfaction},\eqref{const JRP_P:feasibility}, and \eqref{const JRP_P:parent}, respectively:

\begin{align}
    D=\max \quad & \sum_{\rho} \alpha_\rho & [\text{JRP}_\text{D}]\\
    \text{s.t.} \quad &\sum_{\rho:a_\rho \leq t, v_\rho = v_i}\beta_{\rho, t} - \gamma_{v_i,t}\leq c(v_i), & \forall t, 1 \leq i \leq n\label{const JRP_D:beta cap}\\
    &\sum_{i=1}^n \gamma_{v_i,t} \leq c(r), &\forall t \label{const JRP_D:gamma cap}\\
    & \alpha_\rho - \beta_{\rho, t} \leq h \cdot (d_\rho - t),  & \forall \rho , a_\rho \leq t \leq d_\rho \label{const JRP_D:holding} \\
    & \alpha_\rho - \beta_{\rho, t} \leq b \cdot (t - d_\rho),  & \forall \rho , t > d_\rho \label{const JRP_D:backlog} \\
    & \beta_{\rho, t} \geq 0,  & \forall \rho , t \label{const JRP_D:beta nonnegativity}\\
    & \gamma_{v_i,t} \geq 0, & \forall t,1 \leq i \leq n  \label{const JRP_D:gamma nonnegativity}
\end{align}

In $[\text{JRP}_\text{D}]$, variables $\beta_{\rho,t}$ are only defined for $t \geq a_\rho$. In the analysis, for simplicity, we define $\beta_{\rho,t}$ for all values of $t$ and set $\beta_{\rho,t}$ to be 0 for each $t<a_\rho$.

Here we give some intuition about the dual problem. 
The goal is to find a feasible solution to $[\text{JRP}_\text{D}]$ with a large objective function $\sum_\rho \alpha_\rho$. 
As in the single-item case, we think of $\alpha$ variables as \emph{money} that can be used to \emph{pay} for the cost of our algorithm. 
When $\alpha_\rho$ is increased for some request $\rho$, Constraints~\eqref{const JRP_D:holding} and \eqref{const JRP_D:backlog} force a feasible solution to have non-negative $\beta_{\rho,t}$ variables for $t \in [\max(a_\rho, d_\rho-\frac{\alpha_\rho}{h}), d_\rho + \frac{\alpha_\rho}{b}]$. 
Each nonzero $\alpha$ variable thus imposes a range of nonzero $\beta$ variables. The sum of the $\beta$ variables for requests for each item $v_i$ is bounded by Constraint~\eqref{const JRP_D:beta cap}. 
Assume that the $\gamma$ variables are all zero: then, Constraint~\eqref{const JRP_D:beta cap} insists that at each time $t$, the total amount of \emph{budget} that we have for the sum of the $\beta$'s for requests $\rho$ associated with item $v_i$ is at most the cost of the item $c(v_i)$.
This can be thought of as a \emph{local budget} that we have for each item per unit of time. 
If the cost incurred by $\alpha$ variables is beyond this local budget, then the variables $\gamma$ must be used, which represent a \emph{global budget} that is shared between all items. 
Constraint~\eqref{const JRP_D:gamma cap} implies that at each time $t$, the total amount of the global budget is at most the cost of the joint service $c(r)$. 

\subsection{Notation}
Let $S_1,\ldots,S_N$ be the services that our algorithm triggers, and assume that they happen at times $t_1\leq \ldots \leq t_N$. 
Let $V_{i,1}$ and $V_{i,2}$ be the set of items included in service $S_i$ in Mature Backlog Phase and Premature Backlog Phase, respectively, and define $V_i:=V_{i,1} \cup V_{i,2}$.
Let $V_{i,A}$ be the set of items that have at least one active\footnote{Recall that an active request is one that has arrived before the current time and is as yet unsatisfied.} request immediately after the Mature Backlog Phase.
Note that $V_{i,2} \subseteq V_{i,A}$.
For ease of notation, we define $t_0:=0$ and $V_{0}=V_{0,A}:=\emptyset$.

Let $B_{i,1}$, $H_{i,1}$, and $H_{i,2}$ be the set of requests satisfied using $S_i$ in Mature Backlog Phase, Local Holding Phase, and Global Holding Phase, respectively.
For each item $v$, the requests for that item satisfied in these three phases are denoted by $B_{i,1}^v$, $H_{i,1}^v$, and $H_{i,2}^v$, respectively.
For each item $v\in V_{i,2}$, let $B_{i,2}^v$ be the set of requests for $v$ that contribute to making $v$ mature at time $\text{mature}_{t_i}(v)$, i.e., the set of active requests $\rho$ at the beginning of Premature Backlog phase such that $a_\rho \leq t_i$, $d_\rho < \text{mature}_{t_i}(v)$, and $v_\rho=v$.
Let $B_{i,2}:=\bigcup_{v \in V_{i,2}} B_{i,2}^v$.
Note that in Premature Backlog Phase, only the requests in $B_{i,2}$ that have deadlines before $t_i$ are satisfied. 
So $S_i \subseteq B_{i,1} \cup B_{i,2} \cup H_{i,1} \cup H_{i,2}$ (we abuse the notation here and use $S_i$ to refer to both the $i$'th service and the set of requests included in the $i$'th service). 
For a subset $U$ of vertices in the tree, define $c(U):=\sum_{u \in U} c(u)$.

Consider a service $S_i$ and an item $v\in V_{i,1}$. The item $v$ is included in the Phase I backlog of service $S_i$, so it has matured at some time $t \leq t_i$. This implies that the sum of the backlog costs of the active requests for $v$ at time $t$ is at least $c(v)$. 
Since active requests for $v$ at time $t$ are all included in $B_{i,1}^v$, and $t \leq t_i$, we conclude that $\sum_{\rho \in B_{i,1}^v} b \cdot (t_i-d_\rho) \geq c(v)$. 
Let $\rho_1,\ldots,\rho_m$ be the requests in $B_{i,1}^v$ in increasing order of arrival time. Let $k^*$ be the smallest $k$ such that $\sum_{l=1}^k b \cdot (t_i-d_{\rho_l}) \geq c(v)$. Define $L_i^v:=\{\rho_1,\ldots,\rho_{k^*}\}$. If $\sum_{l=1}^{k^*} b \cdot (t_i-d_{\rho_l}) = c(v)$, define $R_i^v:=\{\rho_{k^*+1},\ldots,\rho_m\}$; otherwise define $R_i^v:=\{\rho_{k^*},\ldots,\rho_m\}$. Note that $L_i^v \cup R_i^v = B_{i,1}^v$ and 
$L_i^v \cap R_i^v$ is either $\emptyset$ or $\{\rho_{k^*}\}$. In the latter case, the item becomes mature in the middle of processing the dual value for request $\rho_{k^*}$ in this order.

Just to repeat, similar to the single-item case, when $\alpha_\rho>0$ for some request $\rho$, constraints~\eqref{const JRP_D:holding} and \eqref{const JRP_D:backlog} require us to assign nonnegative $\beta_{\rho,t}$ variables for $t \in [\max(a_\rho, d_\rho-\frac{\alpha_\rho}{h}), d_\rho + \frac{\alpha_\rho}{b}]$; see Figure~\ref{fig:beta}. 
As we move from time $d_\rho$ to later or earlier times, we can decrease the value of $\beta_{\rho,t}$.
We call the former \emph{forward direction}, and the latter \emph{backward direction} (of time).

\subsection{Challenges and a High-Level Description of the Dual Solution}
\label{sec:technical_overview}
In each service $S_i$, all costs are within a constant factor of $c(V_{i,1})+c(r)$; see Lemma~\ref{lem: alg cost}. In general, neither of these terms dominates the other.
Therefore, we will fit a dual solution that accumulates both $\Theta(c(V_1))$ and $\Theta(c(r))$.

In the single-item case, to pay $s$ for each service $S_i$, the requests in $H=H_{i-1}$ and $B=B_i$ are used.
There are  3 ingredients that are utilized in the dual solution for the single-item case:
\begin{enumerate}
[label*=\textbf{(I\arabic*)}]
    \item\label{ing1} Sum of the backlog costs of the requests in $B$ at time $t_i$ is $s$.
    \item\label{ing2} 
    The active requests at $t_{i-1}$ with future deadlines are added to $H$ in order of their deadline and the budget for their holding costs is $s$. 
    \item\label{ing3} Each request in $B$ that has arrived before $t_{i-1}$ is a candidate in the holding phase of $S_{i-1}$.
\end{enumerate}

In the dual solution presented in Section~\ref{sec:dual solution} for $[\text{JRP}_{\text{D}}]$, we use a  similar dual solution to the single-item case to partially pay the costs of Algorithm~\ref{alg:multiple-item}.
In fact, all the cost $c(V_{i,1})$ for each service $S_i$ is paid in this way, since all 3 ingredients needed are present.
Assume $v\in V_{i,1}$ and let $S_\ell$ be the last service before $S_i$ that includes $v$.
Let $B:=B_{i,1}^v$ and $H:=H_{\ell,1}^v$.
Item $v$ is included in $V_{i,1}$ because the sum of backlog accumulated for it is at least $c(v)$, i.e., the sum of the backlog costs of the request in $B$ at time $t_i$ is $c(v)$ (Ingredient~\ref{ing1}).
Moreover, in the Local Holding Phase of $S_\ell$, Algorithm~\ref{alg:multiple-item} sets a budget of $c(v)$ to go over the active requests for $v$ with deadlines in the future and serves them using $S_\ell$ in the order of their deadlines (Ingredient~\ref{ing2}).
Lastly, each request in $B$ that has arrived before $t_\ell$ is a candidate in the Local Holding Phase of $S_\ell$ for item $v$ (Ingredient~\ref{ing3}).
We can accumulate a dual value of $c(v)$ using a  similar approach to the dual assignment for the single-item case.
This is done in step~\ref{D1} of the dual assignment in Section~\ref{sec:dual solution} using Lemma~\ref{lem:local charging}.
We call this procedure {\bf local charging}, as it is used to recover the item costs (which can be thought of as the local costs of the algorithm) only using the local budget and the $\beta$ variables. In particular, the $\gamma$ variables in $[\text{JRP}_{\text{D}}]$ are not used in the local chargings.

Paying $c(r)$ is more nuanced.
At first glance, it might seem that for two consecutive services $S_{i-1}$ and $S_i$, we can follow the same structure and set $B$ to be the set of requests that caused the surplus backlog of $c(r)$ in $S_i$, and set $H:=H_{i-1,2}$, the set of requests served in the global holding phase of $S_{i-1}$.
In fact, with this definition of $B$ and $H$, we have ingredients~\ref{ing1} and~\ref{ing2}; but~\ref{ing3} might be missing. 
This is because the set of \emph{items} included in the services $S_{i-1}$ and $S_i$ are not necessarily the same, i.e., there might be a request $\rho \in B$ that has arrived before $t_{i-1}$ but its item $v_\rho$ is not included in the service $S_{i-1}$\footnote{This difficulty does not arise in the single item case since all requests are for the same item and they are considered in the previous service if they were active.}. 
This means the request was not considered in the Global Holding Phase of $S_{i-1}$. 
However, this shows us that we can pay for the surplus backlog costs of the requests that were among the candidates in Global Holding Phase of $S_{i-1}$, i.e., the requests for the items in $V_{i,1} \cap V_{i-1}$ (This is done in step~\ref{D222} of the dual assignment using Lemma~\ref{lem:two-sided global charge}).

To recover the remainder of the surplus backlog cost, namely for the items in $V_{i,1} \backslash V_{i-1}$,  we need a new idea. 
This is where the Premature Backlog Phase of Algorithm~\ref{alg:multiple-item} is exploited.

In the Premature Backlog Phase of service $S_{i-1}$, for each premature item $v$, the algorithm computes the first time $\text{mature}_{t_{i-1}}(v)$ that $v$ becomes mature using only active requests at time $t_{i-1}$.
Suppose $\text{mature}_{t_{i-1}}(v_1)\leq \ldots \leq \text{mature}_{t_{i-1}}(v_m)$.
The algorithm iterates over the nodes $v_1,\ldots,v_m$, in this order, and includes them one by one in service $S_{i-1}$ by paying their item costs $c(v)$, as long as the sum of the item costs is at most $2c(r)$. 
Assume that the algorithm includes items $V_{i-1,2}=\{v_1,\ldots,v_k\}$ in the Premature Backlog Phase of service $S_{i-1}$, and for simplicity, assume $\sum_{j=1}^k c(v_j)=\Theta(c(r))$.

There are two cases to consider:
\begin{itemize}
    \item $t_i \geq \text{mature}_{t_{i-1}}(v_{k+1})$. In this case, for each item $v \in \{v_1,\ldots,v_k\}$, using Lemma~\ref{lem:local charging}, we pay for $c(v_i)$.  This allows us to pay $\sum_{j=1}^k c(v_j) = \Theta(c(r))$.
    The only difference between this case and the local charging is that we set $B$ to be equal to the set of requests that make $v$ mature at time $\text{mature}_{t_{i-1}}(v)$.
    We set $H=H_{\ell,1}^v$ where $\ell$ is the last service before $i-1$ that includes $v$.
    This case corresponds to step~\ref{D21} in the dual assignment below.
    The fact that $\text{mature}_{t_{i-1}}(v) \leq \text{mature}_{t_{i-1}}(v_{k+1}) \leq t_i$ ensures that the non-zero $\beta$ variables set in Lemma~\ref{lem:local charging} are restricted within $[t_\ell,t_i]$.  This prevents the $\beta$ variables used to pay for different services from accumulating at some fixed time.
    
    However, there is one important detail that we need to be careful about:
    some of the requests that contribute to $v$ becoming mature at time $\text{mature}_{t_{i-1}}(v)$ have deadlines after $t_{i-1}$, which means that they are not served using $S_{i-1}$ (and they might still be active at time $t_i$). 
    This means that we are charging a request at time $t_{i-1}$ that has not yet been served.
    To address this issue, we show that each request is charged at most {\em twice} during the local chargings in the construction of the dual solution.
    
    \item $t_i < \text{mature}_{t_{i-1}}(v_{k+1})$. Note that we only need to pay for the surplus backlog cost of the items in $V_{i,1}\backslash V_{i-1}$, which does not include any of the items in $V_{i-1,2}=\{v_1,\ldots,v_k\}$ since $V_{i-1,2}\subseteq V_{i-1}$.
    Let $v \in V_{i,1} \backslash V_{i-1}$.
    We show that all the requests that contribute to the surplus backlog cost of $v$ have arrived after time $t_{i-1}$ (Lemma~\ref{lem:global charge arrival}). Consider setting $\alpha_\rho$ for each such request to be equal to the backlog cost of $\rho$ at time $t_i$. In this case, we know that the corresponding $\beta$ variables can only have nonzero values in $[t_{i-1},t_i]$.
    Lemma~\ref{lem:global charge arrival} holds because either $v$ does not have any active requests at time $t_{i-1}$, or $v$ is in $\{v_{k+1},\ldots,v_m\}$. This means that the active requests at time $t_{i-1}$ can only accumulate enough backlog cost to make $v$ mature at $\text{mature}_{t_{i-1}}(v_{k+1}) > t_i$. Thus, all the requests responsible for the surplus backlog cost of $v$ at time $t_i$ have arrived after $t_{i-1}$.
    This case corresponds to~\ref{D221} in the dual assignment.
\end{itemize}

Finally, we mention one other detail that needs to be taken into account. 
In the above, for an item $v \in V_{i,1}$, we talk about requests that contribute to the surplus backlog cost of $v$ at time $t_i$.
Here we implicitly assume that the set of requests in $B_{i,1}^v$ is naturally partitioned into two sets: one set consisting of the earlier requests (in terms of arrival times) that make $v$ mature, and another set comprised of the later requests that are responsible for the surplus backlog cost.
However, in reality, a request can contribute to both the backlog cost that makes $v$ mature and the surplus backlog cost.
To partition the requests into two sets having the aforementioned properties, we ``artificially" partition the set $B_{i,1}^v$ into two sets $L_i^v$ and $R_i^v$ which play the roles of the two sets described above.
Note that this partitioning of the requests in $B_{i,1}^v$ is crucial as the charging used in~\ref{D221} can only be done on the requests in $R_{i}^v$ which we know have arrived after $t_{i-1}$.

Next we formally describe the dual solution.

\subsection{Dual Solution}
\label{sec:dual solution}
Our proposed dual solution is constructed in a modular way.
First, we describe the building blocks of the dual solution.
The (partial) dual solutions presented in the proofs of Lemmas~\ref{lem:local charging} and~\ref{lem:two-sided global charge} are very similar to our proposed dual solution for the single-item case.
The proofs are deferred to Appendix~\ref{sec:appendix proofs}.

\begin{restatable}[Local Charging Lemma for $v$ and $S_i$]{lem}{lemlocalcharging}
\label{lem:local charging}
    Let $S_i$ be a service that contains item $v$, and let $S_\ell$ be the last service before $S_i$ that includes $v$.
    Define $B$ and $t^*$ as follows:
    \begin{itemize}
        \item Case 1 (mature items): $v \in V_{i,1}$.
        Define $B:=L_i^v$, and $t^*:=t_i$.
        \item Case 2 (premature items): $v \in V_{i,2}$. Define $B:=B_{i,2}^v$ and let $t^*$ be the time $\text{mature}_{t_i}(v)$ defined in Premature Backlog Phase of service $S_i$ in Algorithm~\ref{alg:multiple-item}.
    \end{itemize}
    Also, let $H:=H_{\ell,1}^v$ be the set of requests for item $v$ that are satisfied in the Local Holding Phase of service $S_\ell$. 
    If $S_i$ is the first service that includes $v$, set $t_\ell:=0$ and $H:=\emptyset$. 
    There is an assignment for the variables $\alpha_\rho,\beta_{\rho,t}$ for each $\rho \in B \cup H$ such that:
    \begin{enumerate}[label=\upshape(\Roman*),ref=\thetheorem (\Roman*)]
        \item\label{cond:1} $\sum_{\rho \in B \cup H} \alpha_\rho = c(v),$
        \item\label{cond:2} $\beta_{\rho,t}=0$ for each $\rho \in B \cup H$ and $t \notin (t_\ell,t^*)$.  
        \item\label{cond:3} $\beta_{\rho,t} \leq \alpha_\rho$ for each $\rho \in B \cup H$ and each time $t$,
        \item\label{cond:4} $\alpha_\rho$ and $\beta_{\rho,t}$ for $\rho \in B \cup H$ satisfy constraints~\eqref{const JRP_D:holding},~\eqref{const JRP_D:backlog}, and~\eqref{const JRP_D:beta nonnegativity}.
        \item\label{cond:5} In Case 1, $\alpha_{\rho^*} \leq c(v) - \sum_{\rho \in L_i^v \backslash \{\rho^*\}}b \cdot (t_i-d_{\rho})$, where $\rho^*$ is the request in $L_i^v$ with the largest arrival time.
    \end{enumerate}
\end{restatable}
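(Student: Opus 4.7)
The plan is to adapt the two-case dual-fitting argument from the single-item analysis (Lemmas~\ref{lem:single-item dual objective}--\ref{lem:single-item feasibility}) to handle both the mature and premature subcases uniformly. The role played by $s$ and by consecutive service times $(t_{i-1}, t_i)$ in the single-item proof is now played by $c(v)$ and by the interval $(t_\ell, t^*)$, and the role of $B_i^1$ is played by the requests $\rho \in B$ with $a_\rho \leq t_\ell$.

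First I would set $\alpha_\rho := 0$ and $\beta_{\rho,t} := 0$ for every $\rho \in B \cup H$ by default, and then distribute dual value by the following case split. Define
\[
h_{\max} := \max_{\rho \in H} h\cdot(d_\rho - t_\ell),\qquad b_{\max} := \max\{\, b\cdot(t^* - d_\rho) : \rho \in B,\ a_\rho \leq t_\ell\,\},
\]
with the convention that each max is $0$ when the corresponding set is empty. In \emph{Case A} ($h_{\max} > b_{\max}$), I would set $\alpha_\rho := b\cdot(t^* - d_\rho)$ for every $\rho \in B$ except the last one $\rho^*$ in arrival-time order (only in Case~1; in Case~2 every request gets the full backlog value), and choose $\alpha_{\rho^*}$ to be whatever is needed so that $\sum_{\rho \in B} \alpha_\rho = c(v)$; this is feasible and nonnegative by the definition of $L_i^v$ in Case~1 and by $\sum_{\rho \in B_{i,2}^v} b\cdot(t^* - d_\rho) = c(v)$ in Case~2, and it directly yields condition~\ref{cond:5}. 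In \emph{Case B} ($h_{\max} \leq b_{\max}$), I would set $\alpha_\rho := h\cdot(d_\rho - t_\ell)$ for each $\rho \in H$, and then scale the backlog-based assignment for requests in $B$ by the factor $(c(v) - h_{\mathrm{sum}})/c(v)$ where $h_{\mathrm{sum}} := \sum_{\rho \in H} h\cdot(d_\rho - t_\ell)$, so that the totals once again add to $c(v)$. In both cases $\beta_{\rho,t}$ is derived from $\alpha_\rho$ by the piecewise formula used in the single-item analysis, giving conditions~\ref{cond:3} and~\ref{cond:4} automatically and~\ref{cond:1} by construction.

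The central obstacle is condition~\ref{cond:2}: the $\beta$ variables must vanish outside the interval $(t_\ell, t^*)$, so that when this lemma is later invoked across many $(v, S_i)$ pairs the resulting duals do not stack up and violate constraint~\eqref{const JRP_D:beta cap}. For each $\rho \in H$, the backward direction automatically zeroes out at $t_\ell$ because $\alpha_\rho = h\cdot(d_\rho - t_\ell)$; the forward direction zeroes out at $d_\rho + \alpha_\rho/b$, and I would show this does not exceed $t^*$ precisely because we are in Case~B, where $h_{\max} \leq b_{\max}$ forces $h\cdot(d_\rho - t_\ell) \leq b\cdot(t^* - d_\rho)$ for the worst-case $\rho \in H$ (and thus for all of them, since the latest-deadline request is the binding one). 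For each $\rho \in B$ the forward direction zeroes out at exactly $t^*$ by the choice $\alpha_\rho \leq b\cdot(t^* - d_\rho)$, and the backward direction zeroes out at or after $t_\ell$ via two subarguments: requests with $a_\rho > t_\ell$ cannot have $\beta_{\rho,t} > 0$ before $a_\rho$ anyway (and $\beta_{\rho,t}$ is only defined for $t \geq a_\rho$), while requests with $a_\rho \leq t_\ell$ are handled by the case assumption (Case~A forces $b\cdot(t^* - d_\rho) \leq h\cdot(d_\rho - t_\ell)$ for all such $\rho$, since the earliest-deadline such request is the binding one).

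Finally I would verify condition~\ref{cond:5} separately for Case~A, where it is immediate from the construction (we set $\alpha_{\rho^*}$ to be exactly the residual $c(v) - \sum_{\rho \in L_i^v \setminus \{\rho^*\}} b\cdot(t_i - d_\rho)$), and for Case~B, where the scaling factor is at most $1$ so $\alpha_{\rho^*} \leq b\cdot(t_i - d_{\rho^*})$, and the inequality in~\ref{cond:5} follows from the definition of $L_i^v$ (removing $\rho^*$ leaves total backlog strictly less than $c(v)$, which together with the scaling gives the claimed bound). The main subtlety throughout is being careful about the boundary request $\rho^*$ in $L_i^v$, since it is the one that straddles the threshold between ``contributing to maturity'' and ``contributing to surplus,'' and all cleanup of the inequalities traces back to condition~\ref{cond:5}.
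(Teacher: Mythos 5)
Your overall plan mirrors the paper's proof: the same $h_{\max}$ vs.\ $b_{\max}$ case split, the same structural facts about the Local Holding Phase (requests in $H_{\ell,1}^v$ precede the leftover active requests in deadline order), and the same use of the piecewise formula to get conditions~(III) and~(IV) for free. Where you diverge is the mechanics of Case~B: you reuse the uniform scaling trick from the single-item analysis (scaling the Case~A assignment on $B$ by $(c(v)-h_{\text{sum}})/c(v)$), whereas the paper distributes a shrinking budget $x := c(v) - h_{\text{sum}}$ greedily over $B$ in arrival order, setting $\alpha_\rho = \min(x, b\cdot(t^*-d_\rho))$. Both devices accomplish the same thing — they bring the total to $c(v)$ while keeping each $\alpha_\rho \le b\cdot(t^*-d_\rho)$ and each $\alpha_\rho \le c(v) - h_{\text{sum}}$ — and your scaling variant is if anything slightly cleaner and closer in spirit to the paper's single-item argument, so this is a valid alternative route. (You do implicitly rely on the fact that every individual $\alpha_\rho^{\text{Case A}} \le c(v)$; that follows from $\sum_{\rho\in B\setminus\{\rho^*\}} b\cdot(t^*-d_\rho) < c(v)$, but it should be said.)

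There is a genuine gap, however, in your verification of condition~(II). For $\rho\in B$ with $a_\rho \le t_\ell$ (what the paper calls $B^1$), you only argue the backward vanishing in Case~A, writing that ``Case~A forces $b\cdot(t^*-d_\rho)\le h\cdot(d_\rho-t_\ell)$.'' But those requests also carry positive $\alpha_\rho$ in Case~B ($h_{\max}\le b_{\max}$), and there the inequality you invoke goes the \emph{wrong} way: $b\cdot(t^*-d_\rho)$ need not be bounded by $h\cdot(d_\rho - t_\ell)$ when $b_{\max}\ge h_{\max}$. The Case~B backward argument requires a different ingredient, namely the budget-exhaustion property of the Local Holding Phase: each $\rho\in B^1$ was active at $t_\ell$ with a future deadline, hence was a candidate in the Local Holding Phase of $S_\ell$ for item $v$, and was skipped only because including it would have overrun the budget $c(v)$. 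Therefore $h\cdot(d_\rho - t_\ell) > c(v) - h_{\text{sum}}$, and since your scaled assignment gives $\alpha_\rho \le c(v) - h_{\text{sum}}$, the backward zeroing before $t_\ell$ follows. Without this step the $\beta$ variables from different invocations of the lemma can stack up at times $\le t_\ell$, which is exactly what condition~(II) is there to prevent; the rest of your proof is sound once this piece is inserted.
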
    

\begin{restatable}[Two-Sided Global Charge for $S_i$]{lem}{lemtwosidedglobalcharging}
    \label{lem:two-sided global charge}
    Let $S_{i}$ be one of the services triggered by Algorithm~\ref{alg:multiple-item}, $i=2,\ldots,N$, and let $S_{i-1}$ be its previous service. 
    Let $H:=H_{i-1,2}$ and $B :=\bigcup_{v \in V_{i,1} \cap V_{i-1}} R_{i}^v$. 
    Then there is an assignment for the variables $\alpha_\rho$, $\beta_{\rho,t}$ and $\gamma_{v,t}$ for each $\rho \in H \cup B$ and $v \in V_{i,1} \cap V_{i-1} $ such that: 

    \begin{enumerate}[label=\upshape(\Roman*),ref=\thetheorem (\Roman*)]
        \item\label{cond two-sided:1} 
        $\sum_{\rho \in B \cup H} \alpha_\rho \geq 
        \sum_{v \in V_{i,1} \cap V_{i-1}}\left( \sum_{\rho \in B_{i,1}^v} b \cdot (t_i - d_\rho) - c(v)\right )$.
        \item\label{cond two-sided:2} $\beta_{\rho,t}=0$ and $\gamma_{v,t}=0$ for each $\rho \in B \cup H$, $v \in V_{i,1} \cap V_{i-1}$, and $t \notin (t_{i-1},t_i)$.  
        \item\label{cond two-sided:3} $\beta_{\rho,t} \leq \alpha_\rho$ for each $\rho \in B \cup H$ and each time $t$,
        \item\label{cond two-sided:4} $\alpha_\rho$, $\beta_{\rho,t}$ and $\gamma_{v,t}$ for each $\rho \in B \cup H$ and $v \in V_{i,1} \cap V_{i-1}$ satisfy all the constraints in $[\text{JRP}_\text{D}]$.
        \item\label{cond two-sided:5} 
        $\sum_{\rho:a_\rho \leq t, v_\rho = v}\beta_{\rho, t} = \gamma_{v,t}$,
        for each $v \in V_{i,1}\cap V_{i-1}$ and each time $t$.
    \end{enumerate}
\end{restatable}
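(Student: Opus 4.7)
The plan is to parallel the dual construction of Section~\ref{sec:single-item-analysis} while exploiting the global variables $\gamma_{v,t}$ to absorb the per-item $\beta$-mass. Concretely, for every $v \in V_{i,1}\cap V_{i-1}$ and every time $t$ I will set
\[
\gamma_{v,t} \;:=\; \sum_{\rho \in B \cup H :\, v_\rho = v}\beta_{\rho,t}.
\]
This immediately yields condition~(V) and turns the per-item cap \eqref{const JRP_D:beta cap} into the trivial inequality $0\le c(v_i)$. The only nontrivial feasibility requirement then is the global cap \eqref{const JRP_D:gamma cap}, which (by the definition of $\gamma_{v,t}$) reduces to the aggregate bound $\sum_{\rho \in B\cup H}\beta_{\rho,t}\le c(r)$ at every $t$. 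This plays the exact same role that the budget inequality $\sum_{\rho}\beta_{\rho,t}\le s$ played in the single-item analysis, now with $c(r)$ in place of $s$.

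For $\alpha$ and $\beta$, I mimic the single-item template. Let $h_{\mathrm{sum}}:=\sum_{\rho\in H}h\cdot(d_\rho-t_{i-1})$, which is at most $c(r)$ because $H=H_{i-1,2}$ is built in a Global Holding Phase of $S_{i-1}$ with budget $c(r)$. Let $h_{\max}$ and $b_{\max}$ be the largest holding cost of a request in $H$ and the largest backlog cost of a request in $B$, respectively (zero if the corresponding set is empty). In \textbf{Case 1} ($h_{\max}>b_{\max}$), set $\alpha_\rho=b\cdot(t_i-d_\rho)$ for $\rho\in B$ and $\alpha_\rho=0$ for $\rho\in H$. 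In \textbf{Case 2} ($h_{\max}\le b_{\max}$), set $\alpha_\rho=h\cdot(d_\rho-t_{i-1})$ for $\rho\in H$ and $\alpha_\rho=\lambda\cdot b\cdot(t_i-d_\rho)$ for $\rho\in B$, where $\lambda:=\min\!\bigl\{1,\,(c(r)-h_{\mathrm{sum}})/\sum_{\rho\in B}b\cdot(t_i-d_\rho)\bigr\}$. Define $\beta_{\rho,t}$ by the forward/backward formula of Figure~\ref{fig:beta}. Condition~(I) in Case~1 then follows because $\sum_{B}\alpha_\rho=\sum_{v\in V_{i,1}\cap V_{i-1}}\sum_{\rho\in R_i^v}b\cdot(t_i-d_\rho)$, which by the definition of the partition $B_{i,1}^v=L_i^v\cup R_i^v$ dominates the target surplus $\sum_v(\sum_{B_{i,1}^v}b\cdot(t_i-d_\rho)-c(v))$. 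In Case~2 one uses the same lower bound together with the triggering rule $b_{t_i}(r)=c(r)$, which in particular implies that the target surplus is at most $c(r)$; the algebra $h_{\mathrm{sum}}+\lambda\sum_B b\cdot(t_i-d_\rho)\ge \text{target surplus}$ then goes through. Condition~(III) is immediate from the $\beta$ formula, and the holding/backlog constraints in~(IV) hold by construction.

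The remaining pieces of~(II) and~(IV) boil down to a \emph{no-overlap} property: the supports of $\beta_{\rho,t}$ for $\rho\in B\cup H$ lie inside $(t_{i-1},t_i)$, and their pointwise sum never exceeds $c(r)$. This is where the main difficulty lies. In Case~1, I must show that for every $\rho\in B$ the backward extension $d_\rho-\alpha_\rho/h$ of its $\beta$-support is strictly greater than $t_{i-1}$. The nontrivial case is $a_\rho\le t_{i-1}$, and I handle it by the following chain specific to Algorithm~\ref{alg:multiple-item}: (a) $\rho\in B\subseteq B_{i,1}^v$ is not served before $S_i$, so $\rho$ is active at $t_{i-1}$; (b) if $d_\rho\le t_{i-1}$ then $\rho$ is overdue at $t_{i-1}$ with $v_\rho\in V_{i-1}$, hence would be served in the Mature Backlog Phase of $S_{i-1}$, a contradiction, so $d_\rho>t_{i-1}$; (c) $\rho$ is therefore a candidate in the Local Holding Phase of $S_{i-1}$ for $v_\rho$, and (since it is not served there) in the Global Holding Phase of $S_{i-1}$, which rejects it; (d) because Global Holding sweeps in nondecreasing deadline order under budget $c(r)$, every $\rho'\in H_{i-1,2}=H$ satisfies $d_{\rho'}\le d_\rho$, so $h\cdot(d_\rho-t_{i-1})\ge h_{\max}>b_{\max}\ge\alpha_\rho$, which gives $d_\rho-\alpha_\rho/h>t_{i-1}$ as required. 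A symmetric argument applied to $\rho\in H$ and to the request $\rho^b\in B$ of earliest deadline establishes the forward no-overlap in Case~2.

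Given the no-overlap property, the bound $\sum_{\rho\in B\cup H}\beta_{\rho,t}\le c(r)$ at every $t\in(t_{i-1},t_i)$ follows by the same estimate used in the single-item feasibility proof: on the forward side each $\beta_{\rho,t}$ with $d_\rho\le t$ equals $b\cdot(t_i-t)$ (up to the scaling $\lambda$), so its sum is bounded by $\sum_{\rho:\,d_\rho\le t}\alpha_\rho/b\cdot b = \sum_{\rho:\,d_\rho\le t}\alpha_\rho\le \sum\alpha_\rho\le c(r)$; the backward side is handled analogously. I expect the structural three-step argument~(a)--(d) above to be the hardest part; the accounting in Case~2 is the secondary obstacle, made manageable by the conditional definition of $\lambda$ that handles the fact that, unlike in the single-item case, $\sum_B b\cdot(t_i-d_\rho)$ need not equal $c(r)$.
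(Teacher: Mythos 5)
Your overall plan matches the paper's proof: case-split on $h_{\max}$ vs.\ $b_{\max}$, define triangle/box $\beta$'s, set $\gamma_{v,t}=\sum_{\rho:v_\rho=v}\beta_{\rho,t}$, and confine the supports to $(t_{i-1},t_i)$ using the deadline-ordered sweep of the Global Holding Phase. But there are two concrete gaps.

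First, your $b_{\max}$ is taken over all of $B$, whereas the case split must compare $h_{\max}$ to the largest backlog among the requests in $B^1:=\{\rho\in B: a_\rho\le t_{i-1}\}$ only. The Global-Holding ordering argument (your step~(d), and its forward ``symmetric'' analogue) only applies to requests that were \emph{candidates} in the Global Holding Phase of $S_{i-1}$, i.e.\ requests already active at $t_{i-1}$. A request $\rho^b\in B\setminus B^1$ (arrived after $t_{i-1}$) can have an arbitrarily early deadline (subject only to $d_{\rho^b}>t_{i-1}$) and yet tell you nothing about $H$'s deadlines. Concretely, take $t_{i-1}=0$, $t_i=2$, $h=b=1$, $B^1=\emptyset$, a request $\rho^b\in B$ with $a_{\rho^b}=d_{\rho^b}=0.05$, and a request $\rho_H\in H$ with $d_{\rho_H}=1.9$. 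Then $h_{\max}=1.9<1.95=b_{\max}^{(B)}$ so your Case~2 fires with $\alpha_{\rho_H}=1.9$, and the forward support extends to $d_{\rho_H}+\alpha_{\rho_H}/b=3.8>t_i$, violating condition~(II). With the paper's $b_{\max}^{(B^1)}=0$ this instance falls into Case~1 and $\alpha_{\rho_H}=0$, avoiding the problem.

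Second, in your Case~1 you set $\alpha_\rho=b\cdot(t_i-d_\rho)$ for all $\rho\in B$ without any normalization, so $\sum_{\rho}\alpha_\rho=b_{\mathrm{sum}}=\sum_{v\in V_{i,1}\cap V_{i-1}}\sum_{\rho\in R_i^v}b\cdot(t_i-d_\rho)$, which can strictly exceed $c(r)$. Indeed, while $\sum_{v\in V_{i,1}}\left(\sum_{\rho\in B_{i,1}^v}b\cdot(t_i-d_\rho)-c(v)\right)=c(r)$ by the triggering rule, $\sum_{\rho\in R_i^v}b\cdot(t_i-d_\rho)$ can overshoot that item's surplus by the backlog of the shared request $\rho_{k^*}$ when $L_i^v\cap R_i^v\ne\emptyset$, so $b_{\mathrm{sum}}$ is not bounded by $c(r)$. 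Your feasibility argument (``$\sum_\rho\beta_{\rho,t}\le\sum_\rho\alpha_\rho\le c(r)$'') then breaks. The paper avoids this by scaling the Case~1 $\alpha$'s by $SB/b_{\mathrm{sum}}$, where $SB\le c(r)$ is the target surplus; you need an analogous rescaling (and the same lower bound $b_{\mathrm{sum}}\ge SB$, which you implicitly use, ensures this does not harm condition~(I)). Your Case~2 $\lambda=\min\bigl\{1,(c(r)-h_{\mathrm{sum}})/b_{\mathrm{sum}}\bigr\}$ is a valid variant of the paper's normalization and does keep $\sum\alpha\le c(r)$, so that part is fine.
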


We build the dual solution using the following three routines.

\begin{definition}[$\text{LocalCharge}(v,S_i)$]
    \label{def:local-charge}
    Assume $S_i$ is a service that includes item $v$.
    Do the dual assignment for item $v$ and service $S_i$ described in Lemma~\ref{lem:local charging}, and denote the variables by $\alpha^*$ and $\beta^*$. 
    For each request $\rho$ and time $t$, increase the current values of $\alpha_\rho$ and $\beta_{\rho,t}$ by $\frac{1}{4}\alpha^*_\rho$ and $\frac{1}{4}\beta^*_{\rho,t}$, respectively. 
\end{definition}

\begin{definition}[$\GC(\rho,S_i)$]
    \label{def:global-charge}
    Assume $\rho$ is a request in $B_{i,1}$. 
    Let $\alpha_\rho^0$ be the current value of $\alpha_\rho$, and define $\Delta:=b\cdot(t_i-d_\rho)-\alpha_\rho^0$. 
    Increase $\alpha_\rho$ by $\Delta$, and for each time $a_\rho \leq t \leq t_i$, increase $\beta_{\rho,t}$ and $\gamma_{v_\rho,t}$ by $\Delta$. 
\end{definition}

\begin{definition}[$\TSGC(S_i)$]
    \label{def:two-sided-global-charge}
    Let $\alpha^*$, $\beta^*$ and $\gamma^*$ be the dual assignment of Lemma~\ref{lem:two-sided global charge} for service $S_i$, where $i=2,\ldots,N$. 
    For each item $v$, request $\rho$, and time $t$, increase the values of $\alpha_\rho$, $\beta_{\rho,t}$, and $\gamma_{v,t}$ by $\frac{1}{2}\alpha^*_\rho$, $\frac{1}{2}\beta^*_{\rho,t}$, and $\frac{1}{2}\gamma^*_{v,t}$, respectively.
\end{definition}
    
Now we are ready to describe the dual solution. 

\paragraph{Dual Assignment.}
Consider a service $S_i$ for $i=2,\ldots,N$.
Without loss of generality, assume after doing Mature Backlog Phase for $S_{i-1}$, the set of premature items that have at least one active request is $V_{i-1,A}=\{v_1,\ldots,v_m\}$, and $\text{mature}_{t_{i-1}}(v_1)\leq \ldots \leq \text{mature}_{t_{i-1}}(v_m)$. 
Suppose in Premature Backlog Phase for $S_{i-1}$, the items $v_1,\ldots,v_k$ are included, i.e., $V_{i-1,2}=\{v_1,\ldots,v_k\}$. 
So either these are all the items with active requests, i.e., $m=k$, or we did not have enough budget to include $v_{k+1}$ in the service, i.e., $\sum_{j=1}^{k+1} c(v_j) > 2c(r)$. 
In the former case, define $\text{mature}_{t_{i-1}}(v_{k+1}):=\infty$.

Here is how we construct the dual solution for $[\text{JRP}_{\text{D}}]$.
Initially, set all the dual variables to zero. 
For each $i=1,\ldots,N$, do the following (note that $V_{0,A}=\emptyset$):
\begin{mdframed}[frametitle = {Dual Assignment $i$}, frametitlerule=true, roundcorner=5pt]
\label{Dual Assignment}
\begin{enumerate}
    [label*=\textbf{D.\arabic*}]
    \item\label{D1} 
    For each $v \in V_{i,1}$ do $\LC(v,S_i)$.
    \item\label{D2}There are two cases:
    \begin{enumerate}
    [label*=\textbf{.\arabic*}]
        \item\label{D21} \textbf{Case I}: $V_{i-1,A}\neq\emptyset$ and $t_i > \text{mature}_{t_{i-1}}(v_{k+1})$.

        For each node $v \in V_{i-1,2}$, do $\text{LocalCharge}(v,S_{i-1})$.
        
        \item\label{D22} \textbf{Case II}: $V_{i-1,A} = \emptyset$ or $t_i \leq \text{mature}_{t_{i-1}}(v_{k+1})$.

        Define $R_{i}^1:=\bigcup_{v \in V_{i,1} \backslash V_{i-1}} R_{i}^v$ and $R_{i}^2:=\bigcup_{v \in V_{i,1} \cap V_{i-1}} R_{i}^v$.
        \begin{enumerate}[label*=\textbf{.\arabic*}]
            \item\label{D221} For each $\rho \in R_i^1$, do $\GC(\rho,S_{i})$.
            \item\label{D222}
            If $R_i^2 \neq \emptyset$, do $\TSGC(S_i)$.
            \item\label{D223} If the total increase of the $\alpha$ variables in steps~\ref{D221} and~\ref{D222} is more than $c(r)$, scale down all the increases in these steps (for all the variables involved) so that the total increase of the $\alpha$ variables becomes exactly $c(r)$. 
        \end{enumerate}
    \end{enumerate}
\end{enumerate}
\end{mdframed}

Note that for $i=1$, since $V_{i-1,A}=V_{i-1}=\emptyset$, only steps~\ref{D1},~\ref{D221} and~\ref{D223} are called, where $R_i^1=\bigcup_{v \in V_{1,1}}R_1^v$.

Intuitively, Dual Assignment $i$ is responsible for paying a constant fraction of the costs of service $S_i$.

Before calculating the objective value of the above dual solution and proving its feasibility, we prove the following key structural lemma, which is used in the next two sections. 

\begin{lemma}
    \label{lem:global charge arrival}
    All the requests that are involved in step~\ref{D221} of Dual Assignment $i$, i.e., the requests in $R_{i}^1=\bigcup_{v \in V_{i,1} \backslash V_{i-1}} R_{i}^v$, arrive after $t_{i-1}$.
\end{lemma}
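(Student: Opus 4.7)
I would argue by contradiction. Suppose there exist $v \in V_{i,1}\setminus V_{i-1}$ and $\rho \in R_i^v$ with $a_\rho \leq t_{i-1}$; the plan is to show that the Case~II hypothesis then forces the backlog on $v$ generated by requests already active at $t_{i-1}$ to exceed $c(v)$, contradicting the definition of $\text{mature}_{t_{i-1}}(v)$.

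First I would verify that $\rho$ is active at time $t_{i-1}$. Since $\rho \in B_{i,1}^v$, it is served for the first time by $S_i$, and since $v \notin V_{i-1}$ no request for $v$ is touched by $S_{i-1}$ either, so combined with $a_\rho \leq t_{i-1}$ this puts $\rho$ in $\mathcal{A}_{t_{i-1}}$. In particular $v \in V_{i-1,A}$, so $V_{i-1,A}\neq\emptyset$, and the Case~II condition yields $t_i \leq \text{mature}_{t_{i-1}}(v_{k+1})$. Because $v \in V_{i-1,A}$ but $v \notin V_{i-1,2} = \{v_1,\ldots,v_k\}$ and the items $v_1,\ldots,v_m$ of $V_{i-1,A}$ are indexed in non-decreasing order of $\text{mature}_{t_{i-1}}$, $v$ must sit in $\{v_{k+1},\ldots,v_m\}$, so $\text{mature}_{t_{i-1}}(v) \geq \text{mature}_{t_{i-1}}(v_{k+1}) \geq t_i$.

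Next I would unwind the definition of $\text{mature}_{t_{i-1}}(v)$: letting $A$ denote the set of requests for $v$ active at $t_{i-1}$, the inequality $\text{mature}_{t_{i-1}}(v) \geq t_i$ gives $\sum_{\rho'\in A,\,d_{\rho'}<t_i} b\cdot(t_i - d_{\rho'}) \leq c(v)$. The other side of the contradiction comes from the structure of $L_i^v$. Ordering $B_{i,1}^v$ by increasing arrival time as $\rho_1,\ldots,\rho_m$, $L_i^v$ is a prefix ending at $\rho_{k^*}$ while $\rho \in R_i^v$ sits at some position $\ell \geq k^*$. Thus every request in $L_i^v$ has arrival time at most $a_\rho \leq t_{i-1}$, and since none of them is served before $S_i$, $L_i^v \subseteq A$; belonging to $B_{i,1}^v$ they also all have deadline strictly less than $t_i$.

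The contradiction now splits on whether $\rho$ overlaps the prefix. If $\ell > k^*$, then $L_i^v$ and $\{\rho\}$ are disjoint subsets of $\{\rho'\in A:d_{\rho'}<t_i\}$, so the maturity bound together with $\sum_{\rho'\in L_i^v} b\cdot(t_i-d_{\rho'}) \geq c(v)$ and $b\cdot(t_i-d_\rho)>0$ yield $c(v) < c(v)$. If $\ell = k^*$, then by the definition of $R_i^v$ we are in the branch where $\sum_{\rho'\in L_i^v} b\cdot(t_i-d_{\rho'}) > c(v)$ strictly, which already contradicts $L_i^v \subseteq A$ and the maturity bound. The only delicate point I expect is this boundary case $\rho = \rho_{k^*}$, where one has to remember which branch of the $L_i^v/R_i^v$ definition produced the overlap; once that is handled, the rest falls out directly from the Case~II hypothesis and the prefix structure of $L_i^v$.
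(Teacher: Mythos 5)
Your proposal is correct and uses essentially the same key insight as the paper: that the Case~II condition forces $\text{mature}_{t_{i-1}}(v) \geq t_i$, so the requests for $v$ already active at $t_{i-1}$ cannot on their own account for $v$ being mature at $t_i$, while $L_i^v$ (plus a further element of $R_i^v$) would have to. You phrase it as a proof by contradiction, which lets you merge the paper's two subcases ($v \in \{v_{k+1},\ldots,v_m\}$ versus $v$ having no active requests at $t_{i-1}$) into a single argument and handles the boundary case $\rho = \rho_{k^*} \in L_i^v \cap R_i^v$ a bit more explicitly, but the underlying reasoning is the same.
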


\begin{proof}
    In step~\ref{D2} of the Dual Assignment $i$, Case II happens when $V_{i-1,A}=\emptyset$ or $t_i<\text{mature}_{t_{i-1}}(v_{k+1})$. 
    Recall that $V_{i-1,A}=\{v_1,\ldots,v_m\}$ is the set of all the items that have at least one active request at time $t_{i-1}$ and are not included in Mature Backlog Phase in service $S_{i-1}$. 
    The times at which these items are going to mature, considering only their active requests at time $t_{i-1}$ immediately after Mature Backlog Phase, are $\text{mature}_{t_{i-1}}(v_1)\leq \ldots \leq \text{mature}_{t_{i-1}}(v_m)$. 
    Among these items, $V_{i-1,2}=\{v_1,\ldots,v_k\}$ are included in Premature Backlog Phase for service $S_{i-1}$.
    The requests involved in~\ref{D221} are the requests in $R_i^1=\bigcup_{v \in V_{i,1} \backslash V_{i-1}} R_{i}^v$.
    Since $\{v_1,\ldots,v_k\} = V_{i-1,2} \subseteq V_{i-1}$, none of the items $\{v_1,\ldots,v_k\}$ are in $R_i^1$.

    Let $v \in V_{i,1}\backslash V_{i-1}$ be one of the items involved in $R_i^1$. 
    We want to show that all the requests in $R_i^v$ have arrived after $t_{i-1}$.
    There are two cases:
    \begin{itemize}
        \item $v \in \{v_{k+1},\ldots,v_m\}$. 
        Since $t_i < \text{mature}_{t_{i-1}}(v_{k+1})$, we conclude that the sum of the backlog costs of the requests for $v$ that are active at time $t_{i-1}$  is less than $c(v)$ at time $t_i$, i.e., these requests are not enough to make $v$ mature at time $t_i$. 
        In Mature Backlog Phase of service $S_i$, only mature items are included. 
        So since $v \in V_{i,1}$, i.e., it was chosen in Mature Backlog Phase of service $S_i$, it is impossible for all the requests in $L_i^v$ to have arrived before $t_{i-1}$ (recall that the requests in $L_i^v$, by definition, can make $v$ mature at time $t_i$).
        In particular, it implies that the request with the latest arrival time in $L_i^v$ has arrived after $t_{i-1}$, which means that all the requests in $R_i^v$ have arrived after $t_{i-1}$.
        \item $v \notin V_{i,A}$. 
        This means that $v$ does not have any active requests at time $t_{i-1}$ after service $S_{i-1}$.
        Therefore all the requests for item $v$ that are served during Mature Backlog Phase of service $S_i$, i.e., $B_{i,1}^v$, have arrived after $t_{i-1}$.
        In particular, all the requests in $R_i^v \subseteq B_{i,1}^v$ have arrived after $t_{i-1}$.
    \end{itemize}
    Note that since all the requests arrive after time $0$, for $i=1$ we have that all the requests in $R_i^1$ arrive after $t_{i-1}=0$.
\end{proof}

\subsection{Dual Objective Value}
\label{sec: dual objective}

In this section, we calculate the objective value of the dual solution described in Section~\ref{sec:dual solution}, and compare it to the cost of Algorithm~\ref{alg:multiple-item}.

\begin{lemma}
    \label{lem:D21-increase}
    The total increase of the $\alpha$ variables if Step~\ref{D21} of Dual Assignment $i$ is invoked is at least $c(r)/2$.
\end{lemma}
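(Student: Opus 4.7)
The plan is to compute the total $\alpha$-increase from Step~\ref{D21} in closed form, and then to lower bound it by exploiting the Case~I trigger condition.

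Step~\ref{D21} is simply the loop ``for each $v\in V_{i-1,2}$, do $\LC(v,S_{i-1})$''. By the definition of $\LC$, each call scales the dual assignment produced by Lemma~\ref{lem:local charging} by $1/4$ and adds it to the running duals; Condition~\ref{cond:1} of that lemma tells us that this assignment has total $\alpha$-mass exactly $c(v)$. So one call raises $\sum_\rho\alpha_\rho$ by precisely $c(v)/4$, and summing over $v\in V_{i-1,2}$, the total $\alpha$-increase during Step~\ref{D21} equals $c(V_{i-1,2})/4$.

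Next I would lower bound $c(V_{i-1,2})$ using the Case~I trigger condition $V_{i-1,A}\neq\emptyset$ and $t_i > \text{mature}_{t_{i-1}}(v_{k+1})$. The second inequality requires $\text{mature}_{t_{i-1}}(v_{k+1})<\infty$, and by the convention set immediately before the Dual Assignment this forces $v_{k+1}$ to be a genuine item (equivalently, $V_{i-1,2}\subsetneq V_{i-1,A}$). Consequently the greedy loop of the Premature Backlog Phase in Algorithm~\ref{alg:multiple-item} must have stopped only because adding $v_{k+1}$ would have pushed the running sum past the budget $2c(r)$; this yields $c(V_{i-1,2}) + c(v_{k+1}) > 2c(r)$. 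Combining with the standing hard-case assumption $c(v_j)\le c(r)$ for every leaf $v_j$ (stated in the preamble of Section~\ref{sec:multi}; otherwise $v_j$ is decoupled and handled separately), one deduces $c(V_{i-1,2}) \geq 2c(r)$, and plugging this back into the previous paragraph gives $c(V_{i-1,2})/4 \geq c(r)/2$, as desired.

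The delicate step I expect to be the main obstacle is the strengthening from the naive conclusion $c(V_{i-1,2})>c(r)$ (all that one gets from $c(V_{i-1,2})+c(v_{k+1})>2c(r)$ together with $c(v_{k+1})\le c(r)$ in isolation) up to $c(V_{i-1,2}) \geq 2c(r)$. I would close this gap by leveraging the maturity-order ranking of the greedy loop, which constrains $c(v_{k+1})$ against the already included items; this is a short bookkeeping argument rather than a new algorithmic insight, and it must be synchronized with the scaling constants used by $\GC$ and $\TSGC$ in the Definitions of Section~\ref{sec:dual solution} so that the overall feasibility proof still goes through.
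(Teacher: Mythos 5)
The proposal has a genuine gap: the claimed strengthening to $c(V_{i-1,2}) \geq 2c(r)$ is simply false, and no ``bookkeeping argument'' can rescue it. The greedy Premature Backlog loop only guarantees $c(V_{i-1,2}) + c(v_{k+1}) > 2c(r)$; even under the (unstated, not actually assumed by the paper) hypothesis $c(v_j)\le c(r)$, this yields only $c(V_{i-1,2}) > c(r)$, hence a bound of $c(r)/4$ rather than $c(r)/2$. A concrete counterexample: $c(r)=1$, maturity-ordered items with costs $0.5, 1, 1$; then $V_{i-1,2}=\{v_1,v_2\}$ with $c(V_{i-1,2})=1.5 < 2c(r)$. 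The maturity ordering constrains the \emph{times} $\text{mature}_{t_{i-1}}(v_j)$, not the item costs $c(v_j)$, so it offers no leverage on the missing $c(v_{k+1})/4$ worth of mass.

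The missing idea is that the lemma is not meant to be proved by lower bounding Step~\ref{D21} in isolation. The paper's argument recovers the deficit from Step~\ref{D1}: the Case I condition $t_i > \text{mature}_{t_{i-1}}(v_{k+1})$ implies that $v_{k+1}$'s pending requests (which survived $S_{i-1}$ unserved) make it mature strictly before $t_i$, hence $v_{k+1}\in V_{i,1}$ and $\LC(v_{k+1},S_i)$ is called in Step~\ref{D1}, contributing an extra $c(v_{k+1})/4$. Adding this to your (correct) count of $c(V_{i-1,2})/4$ from Step~\ref{D21} gives $\sum_{\ell=1}^{k+1} c(v_\ell)/4 > 2c(r)/4 = c(r)/2$ directly from the greedy stopping condition, with no need for any bound on $c(V_{i-1,2})$ alone. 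You should replace the attempt to strengthen $c(V_{i-1,2})$ with this cross-step accounting.
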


\begin{proof}
When this case happens, it means that $v_{k+1}$ has some active requests at time $t_{i-1}$, i.e., $\text{mature}_{t_{i-1}}(v_{k+1})\neq \infty$. 
        But we have not included $v_{k+1}$ in $V_{i-1,2}$, which means that we could not afford to pay for $c(v_{k+1})$ in the Premature Backlog Phase of service $S_{i-1}$. 
        So $\sum_{\ell=1}^{k+1} c(v_\ell) > 2c(r)$. 
        On the other hand, $\text{mature}_{t_{i-1}}(v_{k+1})$ is the time at which $v_{k+1}$ becomes mature using only the backlog costs of the requests that are active at time $t_{i-1}$. 
        Since $v_{k+1}$ is not included in $S_{i-1}$, none of these requests are fulfilled using service $S_{i-1}$, and they remain active after this service.
        Thus, the actual time $v_{k+1}$ becomes mature after service $S_{i-1}$ is not more than $\text{mature}_{t_{i-1}}(v_{k+1}) < t_i$. 
        Since $S_i$ has happened at time $t_i$, it means that $v_{k+1}$ is included in $V_{i,1}$, as it was mature at time $t_i$. 
        So in~\ref{D1}, we have called $\LC(v_{k+1},S_i)$, which increases the sum of the $\alpha$ variables by $c(v_{k+1})/4$.
        We also call $\LC(v,S_i)$ for all $v \in V_{i-1,2}=\{v_1,\ldots,v_k\}$ in~\ref{D21}, which increases the sum of the $\alpha$ variables by $\sum_{\ell=1}^k c(v_\ell)/4$. 
        Therefore, $\alpha(S_i) \geq \sum_{\ell=1}^{k+1} c(v_\ell)/4 > 2c(r)/4 = c(r)/2$.
\end{proof}

\begin{lemma}
    \label{lem:D22-increase}
    The total increase of the $\alpha$ variables if step~\ref{D22} of Dual Assignment $i$ is invoked is at least $c(r)/2$.
\end{lemma}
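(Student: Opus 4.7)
The plan is to bound the total $\alpha$-increase contributed by D.2.2.1 and D.2.2.2 separately, show that together they sum to at least $\frac{1}{2} b_{t_i}(r) = \frac{1}{2} c(r)$ before the optional scaling in D.2.2.3, and then observe that the scaling step preserves this lower bound. The key identity I will drive everything toward is $c(r) = b_{t_i}(r) = \sum_{v \in V_{i,1}} (b_{t_i}(v) - c(v))$, which holds because $S_i$ is triggered exactly when the surplus backlog fills the root cup, and because $b_{t_i}(v) = \sum_{\rho \in B_{i,1}^v} b \cdot (t_i - d_\rho)$. I will split this sum along the partition $V_{i,1} = (V_{i,1} \setminus V_{i-1}) \cup (V_{i,1} \cap V_{i-1})$, since D.2.2.1 pays for the first part and D.2.2.2 for the second.

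For D.2.2.1, I will first use Lemma~\ref{lem:global charge arrival} to conclude that every $\rho \in R_i^1$ satisfies $a_\rho > t_{i-1}$, so $\alpha_\rho$ was untouched by all previous dual-assignment iterations. The only way $\alpha_\rho$ could be nonzero before $\GC(\rho,S_i)$ fires is if $\rho$ is the boundary request $\rho_{k^*} \in L_i^v \cap R_i^v$ of some $v \in V_{i,1} \setminus V_{i-1}$, in which case D.1 has already executed $\LC(v, S_i)$. Here I will use condition~\ref{cond:5} of Lemma~\ref{lem:local charging} together with the minimality definition of $k^*$ to show $\alpha^*_{\rho_{k^*}} \leq b \cdot (t_i - d_{\rho_{k^*}})$, and hence the value inherited after the $\tfrac{1}{4}$-scaling in the $\LC$ routine is at most $\tfrac{1}{4} b(t_i - d_\rho)$. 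Therefore the $\GC$ increase is at least $\tfrac{3}{4} b(t_i - d_\rho)$ for such requests and exactly $b(t_i - d_\rho)$ otherwise, so the total increase in D.2.2.1 is at least $\tfrac{3}{4}\sum_{\rho \in R_i^1} b(t_i - d_\rho)$. A short case analysis on whether $\sum_{\rho \in L_i^v} b(t_i - d_\rho)$ equals or strictly exceeds $c(v)$ gives $\sum_{\rho \in R_i^v} b(t_i - d_\rho) \geq b_{t_i}(v) - c(v)$, so this contribution is at least $\tfrac{3}{4}\sum_{v \in V_{i,1} \setminus V_{i-1}}(b_{t_i}(v) - c(v))$.

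For D.2.2.2, I will apply condition~\ref{cond two-sided:1} of Lemma~\ref{lem:two-sided global charge} directly: the internal $\alpha^*$ values sum to at least $\sum_{v \in V_{i,1} \cap V_{i-1}} (b_{t_i}(v) - c(v))$, and the $\frac{1}{2}$-scaling inside $\TSGC$ turns this into an increase of at least $\frac{1}{2}$ of that sum. Moreover if $R_i^2 = \emptyset$ then each $v \in V_{i,1} \cap V_{i-1}$ contributes zero surplus, so the bound still holds when $\TSGC$ is skipped.

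Adding the two contributions and using the root-cup identity yields an $\alpha$-increase of at least $\tfrac{1}{2}\sum_{v \in V_{i,1}}(b_{t_i}(v) - c(v)) = \tfrac{1}{2} c(r)$ before scaling. Finally, D.2.2.3 only triggers when the pre-scaling sum exceeds $c(r)$, in which case it sets the total to exactly $c(r) \geq c(r)/2$; otherwise nothing changes. Either way, the post-scaling increase is at least $c(r)/2$. The main obstacle I anticipate is handling the overlap at the boundary request $\rho_{k^*}$ cleanly — this is where condition~\ref{cond:5} does exactly the bookkeeping we need, and is the reason that technical condition was included in Lemma~\ref{lem:local charging} in the first place.
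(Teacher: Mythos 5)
Your argument is correct and tracks the paper's proof closely: same split of the root-cup identity $c(r)=\sum_{v\in V_{i,1}}\left(b_{t_i}(v)-c(v)\right)$ along $V_{i,1}\setminus V_{i-1}$ versus $V_{i,1}\cap V_{i-1}$, same use of Lemma~\ref{lem:global charge arrival} to restrict prior chargings, condition~\ref{cond:5} to control the boundary request $\rho_{k^*}$, and Lemma~\ref{cond two-sided:1} for the $\TSGC$ contribution. The only minor deviation is cosmetic: you bound the D.2.2.1 increase by $\tfrac{3}{4}\sum_{v\in V_{i,1}\setminus V_{i-1}}\left(b_{t_i}(v)-c(v)\right)$ by comparing $\alpha_{\rho^*}^0$ to $\tfrac14 b(t_i-d_{\rho^*})$, whereas the paper keeps the exact cancellation $b(t_i-d_{\rho^*})-\alpha_{\rho^*}^0\ge \sum_{\rho\in L_i^v}b(t_i-d_\rho)-c(v)$ to get the cleaner coefficient $1$; both give at least $\tfrac12$, so the conclusion is unaffected, and your explicit treatment of the scaling step D.2.2.3 and the $R_i^2=\emptyset$ case is a reasonable addition the paper leaves implicit.
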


\begin{proof}
    For $i=2,\ldots,N$, in step~\ref{D221} of Dual Assignment $i$, for each $\rho \in R_i^1=\bigcup_{v \in V_{i,1} \backslash V_{i-1}} R_{i}^v$, we call $\GC(\rho,S_i)$. 
        This function increases the value of $\alpha_\rho$ by $b \cdot (t_i-d_\rho)-\alpha_\rho^0$, where $\alpha_\rho^0$ is the value of $\alpha_\rho$ before calling $\GC$.
        Note that for each node $v \in V_{i,1} \backslash V_{i-1}$, only the requests in $L_i^v$ are used in the local chargings, which means that for each $\rho \in R_i^v \backslash L_i^v$, we have $\alpha_\rho^0 = 0$. 
        Also, $L_i^v \cap R_i^v$ can only have one member, and if $\rho^* \in L_i^v \cap R_i^v$, Lemma~\ref{lem:global charge arrival} shows that $a_{\rho^*}>t_{i-1}$, which means that $\rho^*$ was not used in any of the previous dual assignments, and the first time it was potentially used is in step~\ref{D1} of the current Dual Assignment $i$.
        In this case, $\LC(v,S_i)$ is called, which sets the value of $\alpha_{\rho^*}$ to $\frac{1}{4}\alpha_{\rho^*}^*$, where $\alpha_{\rho^*}^*$ is the $\alpha$ value assigned to $\rho^*$ in Lemma~\ref{lem:local charging}.
        Lemma~\ref{cond:5} ensures that $\alpha_{\rho^*}^* \leq c(v) - \sum_{\rho \in L_i^v \backslash \{\rho^*\}}b \cdot (t_i-d_{\rho})$, which means that
        \[
        \alpha_{\rho^*}^0 
        \leq \frac{1}{4}\left(c(v) - \sum_{\rho \in L_i^v \backslash \{\rho^*\}}b \cdot (t_i-d_{\rho})\right)
        \leq c(v) - \sum_{\rho \in L_i^v \backslash \{\rho^*\}}b \cdot (t_i-d_{\rho}).
        \] 
        Also, when $L_i^v \cap R_i^v=\emptyset$, we have $\sum_{\rho \in L_i^v}b\cdot(t_i-d_\rho)=c(v)$.
        Thus the total increase of the alpha variables in~\ref{D221} is 
        \begin{align*}
            \sum_{\rho \in R_i^1} (b \cdot (t_i-d_\rho)-\alpha_\rho^0) &= \sum_{v\in V_{i,1} \backslash V_{i-1}} \sum_{\rho \in R_i^v} (b \cdot (t_i-d_\rho)-\alpha_\rho^0) \\
            &=\sum_{v\in V_{i,1} \backslash V_{i-1}} \left [ \sum_{\rho \in R_i^v\backslash L_i^v} (b \cdot (t_i-d_\rho)-\alpha_\rho^0) + \sum_{\rho \in R_i^v \cap L_i^v} (b \cdot (t_i-d_\rho)-\alpha_\rho^0) \right ]\\
            &\geq \sum_{v\in V_{i,1}\backslash V_{i-1}} \left [ \sum_{\rho \in R_i^v\backslash L_i^v} b \cdot (t_i-d_\rho) + \sum_{\rho \in L_i^v} b \cdot (t_i-d_\rho) - c(v)\right ] \\
            &= \sum_{v\in V_{i,1}\backslash V_{i-1}} \left [ \sum_{\rho \in B_{i,1}^v} b \cdot (t_i-d_\rho) - c(v)\right ].
        \end{align*}
        
        In~\ref{D222}, we call $\TSGC(S_i)$, where due to Lemma~\ref{cond two-sided:1},
        increases $\sum_{\rho} \alpha_\rho$ by at least $\frac{1}{2}\sum_{v \in V_{i,1} \cap V_{i-1}}\left(\sum_{\rho \in B_{i,1}^v} b \cdot (t_i - d_\rho) - c(v)\right)$.
        
        Therefore, the total increase of the $\alpha$ variables in steps~\ref{D221} and~\ref{D222} of Dual Assignment $i$ is at least
        \begin{align*}
            &\sum_{v\in V_{i,1}\backslash V_{i-1}} \left [ \sum_{\rho \in B_{i,1}^v} b \cdot (t_i-d_\rho) - c(v) \right ]
            +
            \frac{1}{2}\sum_{v \in V_{i,1} \cap V_{i-1}} \left [ \sum_{\rho \in B_{i,1}^v} b \cdot (t_i - d_\rho) - c(v) \right ] \\
            &\geq 
            \frac{1}{2}
            \sum_{v \in V_{i,1}} \left [ \sum_{\rho \in B_{i,1}^v} b \cdot (t_i - d_\rho) - c(v) \right ] \\
            &= \frac{1}{2}c(r),
        \end{align*}
        where the last equality is derived by the fact that Algorithm~\ref{alg:multiple-item} has triggered $S_i$ in Mature Backlog Phase after it has accumulated exactly $c(r)$ surplus backlog cost.
\end{proof}

\begin{lemma}
    \label{lem:dual value}
    Let $\alpha(S_i)$ be the total increase of the $\alpha$ variables in Dual Assignment $i$. We have $\alpha(S_i) \geq \max(\frac{1}{4}c(V_{i,1}),\frac{1}{2}c(r))$ for each service $S_i$.
\end{lemma}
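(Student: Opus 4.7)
The plan is to establish the two lower bounds separately and then combine them, since each comes from a different part of Dual Assignment $i$. The first bound will come from step~\ref{D1} alone, while the second will come from step~\ref{D2} using the two previous lemmas.

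First I would establish $\alpha(S_i) \geq \tfrac{1}{4} c(V_{i,1})$ by isolating the contribution of step~\ref{D1}. For each $v \in V_{i,1}$, step~\ref{D1} calls $\LC(v, S_i)$; by condition~\ref{cond:1} of Lemma~\ref{lem:local charging}, the raw dual values $\alpha^*$ sum to $c(v)$, and Definition~\ref{def:local-charge} scales them by $\tfrac{1}{4}$, so each call increases $\sum_\rho \alpha_\rho$ by exactly $\tfrac{1}{4} c(v)$. Because every $\rho$ touched by $\LC(v, S_i)$ is a request for item $v$ (either in $L_i^v \subseteq B_{i,1}^v$ or in $H_{\ell,1}^v$), the request sets for distinct $v \in V_{i,1}$ are pairwise disjoint, so the total step~\ref{D1} contribution is exactly $\tfrac{1}{4} c(V_{i,1})$. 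All other steps can only further raise $\alpha$ values, yielding $\alpha(S_i) \geq \tfrac{1}{4} c(V_{i,1})$.

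Next I would show $\alpha(S_i) \geq \tfrac{1}{2} c(r)$ by casing on which branch of step~\ref{D2} fires. In Case~I (step~\ref{D21}), Lemma~\ref{lem:D21-increase} directly gives $\alpha(S_i) \geq c(r)/2$. In Case~II (step~\ref{D22}), Lemma~\ref{lem:D22-increase} shows that the combined pre-cap increase from~\ref{D221} and~\ref{D222} is at least $c(r)/2$. The cap in~\ref{D223} only activates when that pre-cap sum exceeds $c(r)$, in which case the post-cap value equals $c(r) \geq c(r)/2$; if~\ref{D223} does not trigger, the lower bound $c(r)/2$ is preserved verbatim. Either way, $\alpha(S_i) \geq c(r)/2$.

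Combining the two bounds yields $\alpha(S_i) \geq \max\bigl(\tfrac{1}{4} c(V_{i,1}), \tfrac{1}{2} c(r)\bigr)$. The only real subtlety is verifying that the cap in~\ref{D223} cannot destroy the $c(r)/2$ lower bound supplied by Lemma~\ref{lem:D22-increase}; this is immediate because the cap threshold $c(r)$ exceeds $c(r)/2$, so scaling down to $c(r)$ never drops the sum below $c(r)/2$. I do not foresee a harder obstacle: the lemma is essentially an aggregation step that bundles Lemmas~\ref{lem:D21-increase} and~\ref{lem:D22-increase} with the trivial lower bound from~\ref{D1}.
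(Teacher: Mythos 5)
Your proposal is correct and follows essentially the same decomposition as the paper: step~\ref{D1} yields $\tfrac{1}{4}c(V_{i,1})$ via Lemma~\ref{cond:1} and the $\tfrac{1}{4}$ scaling in $\LC$, while the $\tfrac{1}{2}c(r)$ bound is obtained by casing on step~\ref{D2} and invoking Lemmas~\ref{lem:D21-increase} and~\ref{lem:D22-increase}. Your explicit remark that the cap in~\ref{D223} cannot push the total below $c(r)/2$ (since it only ever truncates to exactly $c(r)$) is a small point left implicit in the paper, but it does not alter the argument.
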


\begin{proof}
    Each time $\LC(v,S_i)$ is called for an item $v$ and a service $S_i$ that includes $v$, $\sum_\rho \alpha_\rho$ increases by $\frac{1}{4}\sum_\rho \alpha^*_\rho$, where $\alpha^*_\rho$ is the $\alpha$ variable assigned to $\rho$ in Lemma~\ref{lem:local charging}.
    By Lemma~\ref{cond:1}, we know that $\sum_\rho \alpha^*_\rho=c(v)$, which means that the increase in $\sum_\rho \alpha_\rho$ during $\LC(v,S_i)$ is exactly $\frac{1}{4} c(v)$.
    For $i=1,\ldots,N$, in Dual Assignment $i$, step~\ref{D1} is called, in which $\LC(v,S_i)$ is invoked for each $v \in V_{i,1}$.
    Thus, $\alpha(S_i) \geq \frac{1}{4}c(V_{i,1})$.

    Now we prove that $\alpha(S_i) \geq c(r)/2$. 
    Consider the two cases in~\ref{D2}:
    \begin{itemize}
        \item Case I: 
        In this case, by Lemma~\ref{lem:D21-increase}, in step~\ref{D21} the $\alpha$ variables increase by $c(r)/2$.
        \item Case II: In this case, by Lemma~\ref{lem:D22-increase}, in step~\ref{D22} the $\alpha$ variables increase by $c(r)/2$.
    \end{itemize}
\end{proof}

\begin{lemma}
    \label{lem: alg cost}
    The cost of the algorithm for service $S_i$ is at most $3c(V_{i,1})+9c(r)$.
\end{lemma}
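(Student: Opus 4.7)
The plan is to account for every cost incurred at service $S_i$ separately, summing contributions from the service cost itself and from each of the four phases, and observe that the algorithm's budgeting rules bound each contribution either by $c(V_{i,1})$ or by a small multiple of $c(r)$.

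First I would bound the service cost. By definition it equals $c(r) + c(V_i) = c(r) + c(V_{i,1}) + c(V_{i,2})$, and the Premature Backlog Phase adds items only while their total cost is at most $2c(r)$, so $c(V_{i,2}) \leq 2c(r)$ and the service cost is at most $c(V_{i,1}) + 3c(r)$.

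Next I would bound the two backlog contributions. In the Mature Backlog Phase, for each $v \in V_{i,1}$ the backlog paid equals $b_{t_i}(v) = c(v) + (b_{t_i}(v) - c(v))$, and by the triggering rule of the Mature Backlog Phase the surpluses satisfy $\sum_{v \in V_{i,1}} (b_{t_i}(v) - c(v)) = c(r)$, so the total Mature Backlog cost is $c(V_{i,1}) + c(r)$. In the Premature Backlog Phase, for each $v \in V_{i,2}$ only overdue requests (at time $t_i$) are served and, because $v$ is premature, the accumulated backlog cost for $v$ is strictly less than $c(v)$; summing gives a bound of $c(V_{i,2}) \leq 2c(r)$.

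Finally I would handle the two holding contributions using the explicit per-phase budgets. The Local Holding Phase allocates a budget of $c(v)$ for each $v \in V_i$, so the total holding cost it incurs is at most $c(V_i) = c(V_{i,1}) + c(V_{i,2}) \leq c(V_{i,1}) + 2c(r)$. The Global Holding Phase is capped by budget $c(r)$. Summing the five bounds,
\begin{equation*}
\bigl(c(V_{i,1}) + 3c(r)\bigr) + \bigl(c(V_{i,1}) + c(r)\bigr) + 2c(r) + \bigl(c(V_{i,1}) + 2c(r)\bigr) + c(r) = 3c(V_{i,1}) + 9c(r),
\end{equation*}
which gives the desired bound. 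There is no real obstacle here — the lemma is essentially a bookkeeping exercise — but I would be careful at the Premature Backlog Phase step to use prematurity (rather than just the budget $2c(r)$) when bounding backlog costs, and to remember that Local Holding applies to all items in $V_i$, not only to $V_{i,1}$.
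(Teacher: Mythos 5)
Your proof is correct and follows essentially the same decomposition as the paper's: service cost, Mature and Premature Backlog costs, and Local and Global Holding costs, each bounded by the algorithm's explicit budgets ($c(V_{i,2}) \leq 2c(r)$, the triggering rule giving surplus exactly $c(r)$, prematurity bounding Premature Backlog by $c(V_{i,2})$, and the holding budgets $c(v)$ and $c(r)$). Your observation that one must use prematurity rather than the $2c(r)$ budget to bound the Premature Backlog cost, and that Local Holding ranges over all of $V_i$, matches the care taken in the paper.
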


\begin{proof}
    We break down the costs into three parts:
    \begin{itemize}
        \item Service Cost: We pay $c(r)$ as the joint service cost, and $c(V_{i,1})+c(V_{i,2})$ as the item-dependent costs. 
        From the design of the algorithm, we know $c(V_{i,2}) \leq 2c(r)$. So the service cost is at most $c(V_{i,1})+3c(r)$.
        \item Backlog Cost: When $S_i$ is triggered, it means that in Mature Backlog Phase we have $b_t(r)=c(r)$, which means that the surplus backlog cost accumulated after the items in $V_{i,1}$ have become mature is $c(r)$. 
        The backlog cost needed for each item $v\in V_{i,1}$ to become mature is $c(v)$. 
        So the total amount of backlog cost in Mature Backlog Phase is $c(V_{i,1}) + c(r)$.
        In Premature Backlog Phase, all of the new items that are picked up are premature, which means that the sum of the backlog costs for their requests at time $t_i$ is at most their item cost. 
        This means the sum of the backlog cost incurred in Premature Backlog Phase is at most $c(V_{i,2}) \leq 2c(r)$. 
        Thus the total backlog cost paid in service $S_i$ is at most $c(V_{i,1})+3c(r)$.
        \item Holding Cost: For each item $v$ in $V_{i,1} \cup V_{i,2}$, a holding cost of at most $c(v)$ is incurred in the Local Holding Phase of the algorithm.
        Moreover, a total cost of at most $c(r)$ is paid in the global holding phase.
        Therefore, the holding cost paid is at most $(c(V_{i,1})+c(V_{i,2}))+c(r)
        \leq c(V_{i,1})+3c(r)$.
    \end{itemize}
\end{proof}

\begin{lemma}
    \label{lem:cost ratio}
    The total cost incurred by the algorithm is at most $30 \sum_\rho \alpha_\rho$, for the proposed dual solution. 
\end{lemma}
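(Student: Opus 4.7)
The plan is to combine Lemmas~\ref{lem: alg cost} and~\ref{lem:dual value} on a per-service basis. For each service $S_i$, Lemma~\ref{lem: alg cost} bounds the algorithm's cost attributable to $S_i$ by $3c(V_{i,1}) + 9c(r)$, and Lemma~\ref{lem:dual value} guarantees that the dual assignment $i$ increases $\sum_\rho \alpha_\rho$ by at least $\alpha(S_i) \geq \max\bigl(\tfrac{1}{4}c(V_{i,1}),\tfrac{1}{2}c(r)\bigr)$. Rearranging gives $c(V_{i,1}) \leq 4\alpha(S_i)$ and $c(r) \leq 2\alpha(S_i)$, so the per-service cost is at most $3 \cdot 4\alpha(S_i) + 9 \cdot 2\alpha(S_i) = 30 \alpha(S_i)$.

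Summing over $i = 1, \dots, N$, I would then conclude that the total cost of the algorithm is at most $30 \sum_{i=1}^{N} \alpha(S_i)$. The remaining step is to justify that $\sum_{i=1}^{N} \alpha(S_i) = \sum_\rho \alpha_\rho$. This follows because the dual solution is constructed by starting from $\alpha_\rho = 0$ for every $\rho$ and then, in each Dual Assignment $i$, only \emph{increasing} $\alpha$ variables (even the scaling step~\ref{D223} merely reduces the size of the increase made within Dual Assignment $i$ itself, so $\alpha(S_i)$ already accounts for that scaling). Hence the final value of $\sum_\rho \alpha_\rho$ equals the telescoped sum of the increments $\alpha(S_i)$ across the $N$ assignments.

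There is essentially no technical obstacle here; this is the packaging lemma that converts the per-service algorithm/dual accounting into the overall competitive ratio. The only point to double-check is that the contributions from different Dual Assignments $i$ and $j$ to the sum $\sum_\rho \alpha_\rho$ are additive without cancellation. This holds by construction: each of $\LC$, $\GC$, and $\TSGC$ is defined to \emph{add} nonnegative quantities to the current $\alpha_\rho$ values, and the scaling in~\ref{D223} is a nonnegative down-scaling applied only to the contributions made within the current assignment before they are added.

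Putting these observations together yields
\[
\text{ALG} \;=\; \sum_{i=1}^{N} \text{cost}(S_i) \;\leq\; \sum_{i=1}^{N} 30\,\alpha(S_i) \;=\; 30 \sum_\rho \alpha_\rho,
\]
which is the claimed bound. Combined with weak LP duality ($\sum_\rho \alpha_\rho \leq D \leq P \leq \opt$, once feasibility of the dual is verified elsewhere), this yields the $30$-competitiveness asserted in Theorem~\ref{thm:online JRP}.
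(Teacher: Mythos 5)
Your proof is correct and follows the same approach as the paper, which states only that the result "follows immediately" from Lemmas~\ref{lem:dual value} and~\ref{lem: alg cost}. You have simply spelled out the per-service arithmetic ($3 \cdot 4 + 9 \cdot 2 = 30$) and the additivity of the $\alpha$-increments across dual assignments, both of which are correct and implicit in the paper's one-line proof.
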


\begin{proof}
    The proof follows immediately from Lemmas~\ref{lem:dual value} and \ref{lem: alg cost}.
\end{proof}

\subsection{Dual Feasibility}
In this section we show that the dual solution presented in Section~\ref{sec:dual solution} is feasible.
The proofs are deferred to Appendix~\ref{sec:appendix proofs}.

\begin{lemma}
    \label{lem:charged-twice}
    Each request $\rho$ is involved in at most two local chargings and one global charging, i.e., its corresponding variables $\alpha_\rho$, $\beta_{\rho,t}$ and $\gamma_{v_\rho,t}$ are modified at most two times during the calls to $\LC(.)$ and at most one time during the calls to $\GC(.)$ or $\TSGC(.)$.
\end{lemma}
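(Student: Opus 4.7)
Fix a request $\rho$ with item $v=v_\rho$ and let $S_j$ be the unique service that satisfies $\rho$.

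\emph{Global chargings.} $\GC(\rho,S_k)$ touches $\rho$ only when $\rho\in R_k^1\subseteq B_{k,1}^v$, which forces $k=j$ and $v\in V_{j,1}\setminus V_{j-1}$. $\TSGC(S_k)$ touches $\rho$ either when $\rho\in H_{k-1,2}$ (forcing $k-1=j$, with $\rho$ served in the Global Holding Phase of $S_j$) or when $\rho\in R_k^v$ with $v\in V_{k,1}\cap V_{k-1}$ (forcing $k=j$, with $v\in V_{j,1}\cap V_{j-1}$). Since $\rho$ is served in exactly one phase of exactly one service, and since $V_{j,1}\setminus V_{j-1}$ is disjoint from $V_{j,1}\cap V_{j-1}$, at most one of these four scenarios occurs, giving at most one global charging.

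\emph{Local chargings.} A call $\LC(v,S_k)$ touches $\rho$ in one of three ways, according to the $B$ and $H$ parts of Lemma~\ref{lem:local charging}: (i) $\rho\in L_k^v$ (Case 1, from step~\ref{D1}), forcing $v\in V_{k,1}$ and $\rho$ served in the Mature Backlog Phase of $S_k$, so $k=j$; (ii) $\rho\in H_{\ell,1}^v$ (the $H$-part, from step~\ref{D1} or~\ref{D21}), forcing $\rho$ served in the Local Holding Phase of $S_\ell$ so $\ell=j$, with $S_k$ the unique next service after $S_j$ containing $v$; or (iii) $\rho\in B_{k,2}^v$ (Case 2, from step~\ref{D21} of Dual Assignment $k+1$), requiring $v\in V_{k,2}$ and Case~I to hold in Dual Assignment $k+1$. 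Since cases (i) and (ii) hinge on different serving phases of $\rho$, they are mutually exclusive, and each occurs at most once.

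The heart of the argument is the claim that (iii) occurs for at most two values of $k$, and that when it occurs twice neither (i) nor (ii) can also apply. Suppose (iii) holds for $k_1$. If $d_\rho<t_{k_1}$, then $\rho$ is overdue at $t_{k_1}$ and is served in the Premature Backlog Phase of $S_{k_1}$, so $j=k_1$ and $\rho$ becomes inactive, ruling out any further charge. Otherwise $d_\rho\ge t_{k_1}$; writing $V_{k_1,2}=\{u_1,\dots,u_p\}$ ordered by $\text{mature}_{t_{k_1}}(\cdot)$, the Case~I hypothesis for Dual Assignment $k_1+1$ combined with $v\in V_{k_1,2}$ yields
\[
    d_\rho\;<\;\text{mature}_{t_{k_1}}(v)\;\le\;\text{mature}_{t_{k_1}}(u_{p+1})\;<\;t_{k_1+1}.
\]
If (iii) also holds for some $k_2>k_1$, then $\rho$ is active at the start of the Premature Backlog Phase of $S_{k_2}$, with $v\in V_{k_2,2}$ and $d_\rho<t_{k_1+1}\le t_{k_2}$, so $\rho$ is overdue at $t_{k_2}$; since all overdue requests for items in $V_{k_2,2}$ are satisfied in that phase, $j=k_2$ and $\rho$ is served in a Premature Backlog Phase. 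This forbids (i) and (ii), which require MBP/LHP service, and any additional (iii)-charge for $k>k_2$ since $\rho$ is then inactive.

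Combining, $\rho$ is involved in at most two local-charging calls and at most one global-charging call. The main hurdle is the displayed inequality: it uses the ordering of $\text{mature}_{t_{k_1}}(\cdot)$ on $V_{k_1,A}$ together with the Case~I hypothesis, and without it one could not rule out an unbounded sequence of (iii)-charges; once the inequality is in hand, everything reduces to bookkeeping on which phase serves $\rho$.
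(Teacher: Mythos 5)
Your proposal is correct and rests on exactly the same two observations as the paper's proof: first, the only way $\rho$ can be charged while still active is via the $B_{k,2}^v$ set in a Case~2 $\LC$ call (your case (iii)), and the Case~I hypothesis for Dual Assignment $k+1$ forces $d_\rho<\text{mature}_{t_k}(v)<t_{k+1}$; second, once a request is (or becomes) satisfied, casing on the phase of the algorithm that served it pins down at most one more local and at most one global charging. The paper organizes this as a pre-satisfaction/post-satisfaction split (at most one charge while unsatisfied, then casework on the serving phase), while you organize it as a local/global split with the sub-cases (i), (ii), (iii) for local chargings; the bookkeeping differs but the argument is substantively the same, and your displayed inequality $d_\rho<\text{mature}_{t_{k_1}}(v)\le\text{mature}_{t_{k_1}}(u_{p+1})<t_{k_1+1}$ is precisely the paper's ``key observation.''
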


\begin{lemma}
    \label{lem:local-beta}
    For each item $v$ and each time $t$, there are only two local chargings that can increase $\sum_{\rho:v_\rho=v}\beta_{\rho,t}$.
\end{lemma}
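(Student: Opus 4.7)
The plan is to analyze, for a fixed item $v$, the time intervals during which each local charging involving $v$ is permitted to modify the associated $\beta$ variables, and then show that at most two such intervals can overlap at any point.

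First I would observe that only local chargings of the form $\LC(v, \cdot)$ (with the same item $v$) can affect $\beta_{\rho, t}$ for requests $\rho$ with $v_\rho = v$, since the sets $B \cup H$ appearing in Lemma~\ref{lem:local charging} consist entirely of requests for item $v$. Next, let $\tau_1 < \tau_2 < \cdots$ be the times of the services that include $v$, i.e.\ the times $t_{k_j}$ with $v \in V_{k_j}$. Inspecting the dual assignment, the only steps that call $\LC(v, \cdot)$ are step~\ref{D1} (for $v \in V_{i,1}$, calling $\LC(v, S_i)$) and step~\ref{D21} (for $v \in V_{i-1,2}$, calling $\LC(v, S_{i-1})$, and only when Case~I of~\ref{D2} is triggered). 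Hence every such charging is of the form $\LC(v, S_{k_j})$ for some $j$, and by condition~\ref{cond:2} of Lemma~\ref{lem:local charging} it can only modify $\beta_{\rho, t}$ for $t$ in the open interval $I_j := (\tau_{j-1}, t^*_j)$ (with $\tau_0 := 0$), where $t^*_j = \tau_j$ if $v \in V_{k_j,1}$ and $t^*_j = \text{mature}_{\tau_j}(v)$ if $v \in V_{k_j,2}$.

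The crux is to show $t^*_j \leq \tau_{j+1}$ in both cases. The mature case is immediate since $t^*_j = \tau_j < \tau_{j+1}$. For the premature case, invocation of step~\ref{D21} at Dual Assignment $k_j + 1$ presupposes that Case~I of~\ref{D2} is active there; writing the premature items of $S_{k_j}$ ordered by maturity under only the currently active requests as $v_1, \dots, v_m$ and the selected subset as $V_{k_j,2} = \{v_1, \dots, v_\ell\}$, Case~I forces $t_{k_j + 1} > \text{mature}_{t_{k_j}}(v_{\ell+1})$. Since $v \in V_{k_j,2}$ and the items are sorted by non-decreasing maturity, $t^*_j = \text{mature}_{\tau_j}(v) \leq \text{mature}_{t_{k_j}}(v_{\ell+1}) < t_{k_j + 1} \leq \tau_{j+1}$, so $I_j \subseteq (\tau_{j-1}, \tau_{j+1})$ in all cases.

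Finally, a short interval-arithmetic argument finishes the proof: if $j < j'$ and $t \in I_j \cap I_{j'}$, then $t > \tau_{j'-1}$ (left endpoint of $I_{j'}$) while $t < \tau_{j+1}$ (upper bound on the right endpoint of $I_j$), forcing $\tau_{j'-1} < \tau_{j+1}$ and hence $j' = j + 1$. Thus only consecutive intervals can overlap, so any fixed time $t$ lies in at most two of the $I_j$'s, which yields the lemma. The main obstacle is the third step's use of the Case~I guard on step~\ref{D21} to force the premature interval to terminate strictly before the next $v$-containing service; once that containment is in hand, the overlap structure is elementary.
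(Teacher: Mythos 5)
Your proposal is correct and follows essentially the same approach as the paper's proof: both identify that $\LC(v,S_{k_j})$ can only modify $\beta_{\rho,t}$ for $t$ in an interval whose right endpoint is bounded by $\tau_{j+1}$ (using the Case~I guard of~\ref{D21} to handle the premature case), and then conclude via interval arithmetic that any fixed $t$ is touched by at most two such chargings. Your explicit ``only consecutive intervals can overlap'' argument is a slightly cleaner packaging of the same observation the paper makes by noting $\LC(v,S_{i_j})$ only affects $t\in[t_{i_{j-1}},t_{i_{j+1}})$.
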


\begin{lemma}
    \label{lem:dual feasibility}
    The dual solution presented in Section~\ref{sec:dual solution} is feasible for [$\text{JRP}_\text{D}$].
\end{lemma}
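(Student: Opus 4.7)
The plan is to verify the three families of constraints of $[\text{JRP}_\text{D}]$ separately: nonnegativity (\eqref{const JRP_D:beta nonnegativity}--\eqref{const JRP_D:gamma nonnegativity}), the per-request constraints (\eqref{const JRP_D:holding}--\eqref{const JRP_D:backlog}), and the per-time capacity constraints (\eqref{const JRP_D:beta cap}--\eqref{const JRP_D:gamma cap}). Nonnegativity is immediate since each of $\LC$, $\GC$, and $\TSGC$ only adds nonnegative amounts to $\beta$ and $\gamma$ (by Lemmas~\ref{cond:4} and~\ref{cond two-sided:4}, and from $\Delta\ge 0$ in $\GC$), and the downscaling in~\ref{D223} preserves signs.

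For the per-request holding/backlog constraints, I would fix a request $\rho$ and write $\alpha_\rho-\beta_{\rho,t}$ as the sum of contributions from each subroutine that has touched $\rho$. By Lemma~\ref{lem:charged-twice} there are at most two $\LC$ calls and at most one global call on $\rho$. Each $\LC$ contribution is scaled by $\tfrac14$ and each $\TSGC$ contribution by $\tfrac12$, and each satisfies the per-request constraint on its own by Lemmas~\ref{cond:4} and~\ref{cond two-sided:4}, so the scaled sum is bounded by $\tfrac14+\tfrac14+\tfrac12=1$ times the right-hand side $h(d_\rho-t)$ or $b(t-d_\rho)$. The unscaled $\GC(\rho,S_i)$ requires a separate check: on its support $[a_\rho,t_i]$ it raises $\alpha_\rho$ and $\beta_{\rho,t}$ by the same $\Delta$, contributing $0$ to $\alpha-\beta$ there; and for $t>t_i$ any prior $\LC$ contributions to $\beta_{\rho,t}$ have already vanished by Lemma~\ref{cond:2}, so $\alpha_\rho-\beta_{\rho,t}=\Delta\le b(t_i-d_\rho)\le b(t-d_\rho)$. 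The scaling in~\ref{D223} preserves each such inequality because it rescales $\alpha$ and $\beta$ by the same factor.

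For the capacity constraints, the key observations are $(\star)$ that both global routines are set up so that each single move increases $\sum_{\rho:v_\rho=v}\beta_{\rho,t}$ and $\gamma_{v,t}$ by equal amounts (directly by construction for $\GC$, and via Lemma~\ref{cond two-sided:5} for $\TSGC$), and $(\dagger)$ that their time-supports in different Dual Assignments are disjoint: $\GC(\rho,S_i)\subseteq (t_{i-1},t_i]$ by Lemma~\ref{lem:global charge arrival}, and $\TSGC(S_i)\subseteq(t_{i-1},t_i)$ by Lemma~\ref{cond two-sided:2}. By $(\star)$, global routines contribute zero to the LHS of~\eqref{const JRP_D:beta cap}, while Lemma~\ref{lem:local-beta} leaves at most two $\LC$ calls touching $\sum_{\rho:v_\rho=v}\beta_{\rho,t}$ at any fixed $t$, each contributing at most $\tfrac14 c(v)$ by Lemmas~\ref{cond:3} and~\ref{cond:1}; hence the LHS is at most $\tfrac12 c(v)\le c(v)$. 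By $(\dagger)$ and the fact that $\LC$ does not touch $\gamma$, only one Dual Assignment $i$ can modify any $\gamma_{v,t}$ at a fixed $t$; combining the $\GC$ identity (its $\sum_v\gamma_v$-increase equals its $\sum_\rho\alpha_\rho$-increase) with the $\TSGC$ inequality $\sum_v\gamma_v^*\le\sum_\rho\alpha_\rho^*$ (from summing Lemma~\ref{cond two-sided:5} over $v$ and applying Lemma~\ref{cond two-sided:3}), the total $\gamma$-increase at $t$ is bounded by the $\alpha$-increase incurred in steps~\ref{D221}--\ref{D222} of Dual Assignment $i$, which step~\ref{D223} explicitly caps at $c(r)$. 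The main obstacle is precisely this last bookkeeping for~\eqref{const JRP_D:gamma cap}, since it needs both the per-time disjointness of supports \emph{and} the in-service $\gamma$-to-$\alpha$ bound; more generally, the scalings $\tfrac14$ and $\tfrac12$ are chosen so that the worst two-$\LC$-plus-one-$\TSGC$ case allowed by Lemma~\ref{lem:charged-twice} exactly closes the per-request constraints.
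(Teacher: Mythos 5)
Your proposal is correct and follows essentially the same structure as the paper's proof: nonnegativity is trivial; the per-request constraints are closed by splitting $\rho$'s history into at most two $\LC$ calls (scaled by $\tfrac14$) and one global call, with the $\GC$ case handled separately via its $[a_\rho,t_i]$ support and the vanishing of the local $\beta$ past $t_i$; the $\beta$-cap uses Lemma~\ref{lem:local-beta} plus the $\beta$--$\gamma$ cancellation in both global routines; and the $\gamma$-cap uses disjointness of time-supports across Dual Assignments together with the $\gamma$-to-$\alpha$ bound and the cap in step~\ref{D223}. One arithmetic slip: in the $\GC$ case for $t>t_i$ you write $\alpha_\rho-\beta_{\rho,t}=\Delta$, but if $\rho$ was previously charged in $\LC(v,S_i)$ then $\alpha_\rho=\alpha_\rho^0+\Delta=b(t_i-d_\rho)$, not $\Delta$; since $\beta_{\rho,t}=0$ there, the correct value is $\alpha_\rho-\beta_{\rho,t}=b(t_i-d_\rho)$, which still satisfies $\le b(t-d_\rho)$ for $t>t_i$, so the conclusion stands. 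Similarly, the $\GC$ relationship between $\sum_v\gamma_{v,t}$-increase and $\sum_\rho\alpha_\rho$-increase is an inequality (``at most''), not an identity, since for $t<a_\rho$ the $\gamma$ side stays flat -- again benign, as only the upper bound is needed.
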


\subsection{Proof of Theorem~\ref{thm:online JRP}}

\begin{proof}[Proof of Theorem~\ref{thm:online JRP}]
    For a particular instance $I$ of Online JRP, in Section~\ref{sec:dual solution}, we construct a solution to $[\text{JRP}_\text{D}]$, which by Lemma~\ref{lem:dual feasibility}, is feasible, and by Lemma~\ref{lem:dual value}, has an objective value of at least $\frac{1}{30} \text{Alg}(I)$, where $\text{Alg}(I)$ is the cost of Algorithm~\ref{alg:multiple-item} for instance $I$. 
    This objective value is a lower bound for $D$, the optimal value of $[\text{JRP}_\text{D}]$, which in turn, by weak duality, is a lower bound for $P$, the optimal value of $[\text{JRP}_\text{P}]$.
    Since $[\text{JRP}_\text{P}]$ is a relaxation of the original problem, we have the following:
    \[\frac{1}{30}\text{Alg}(I) \leq D \leq P \leq OPT(I),\]
    where $OPT(I)$ is the cost of the optimal solution for instance $I$.
    This shows that Algorithm~\ref{alg:multiple-item} is a 30-competitive algorithm for Online JRP.
\end{proof}
\section{Conclusion}

This paper considers the Joint Replenishment Problem (JRP) in the online setting with holding and backlog costs for the first time.  The main result is a new constant competitive greedy algorithm.  
Technically, this paper introduces a dual fitting approach that gives insight into the combinatorial structure of the problem.

There are several interesting directions for future work. One is to consider the case where each request can have a general monotonically increasing cost function for the holding and backlog costs. Appendix~\ref{app:nonuniform} shows that Algorithms~\ref{alg:single-item} and~\ref{alg:multiple-item} do not give bounded competitive ratios for this case, even when the cost functions are linear.
Another direction is to consider multilevel trees. There is a line of work on online algorithms for multilevel trees with hard deadlines or with just backlog costs (and no holding) \cite{azar2019general,bienkowski2020online,chrobak2014online}. 
The dual fitting approach here could give insights into improving this line of work and, further, can be useful for giving the first results for handling backlog and holding costs in arbitrarily deep trees. 
Finally, giving tight competitive ratios for different versions of Online JRP, either by strengthening the upper bounds or the lower bounds, is an intriguing open question.
There are no known lower bounds on the competitive ratio of the Online JRP model presented in this paper except for the ones that already hold for the case where just backlogging is allowed.

\bibliographystyle{abbrv}
\bibliography{biblio-2022}

\newpage
\appendix
\section{Omitted Proofs}
\label{sec:appendix proofs}

\subsection{Single-Item Case}
\begin{proof}[Proof of Lemma~\ref{lem:single-item dual objective}]
    For each service $S_i$ triggered by Algorithm~\ref{alg:single-item} at time $t_i$, the sum of the backlog costs of the requests that trigger this service is exactly $s$ at time $t_i$, i.e., $\sum_{\rho \in B_i} b \cdot (t_i-d_\rho) = s$.
    
    In Case 1, we have
    \[ 
    \sum_{\rho \in B_i \cup H_{i-1}} \alpha_\rho = 
    \sum_{\rho \in B_i} \alpha_\rho = \sum_{\rho \in B_i} b \cdot (t_i - d_\rho) = s.
    \]
    In Case 2, we have
    \[\sum_{\rho \in H_{i-1}\cup B_i} \alpha_\rho = \sum_{\rho \in H_{i-1}} h \cdot (d_\rho - t_{i-1}) + \frac{s-h_{\text{sum}}}{s} \cdot  \sum_{\rho \in B_i} b \cdot (t_i-d_\rho) = h_{\text{sum}} + \frac{s-h_{\text{sum}}}{s} \cdot s = s.\]
\end{proof}

Before proving Lemma~\ref{lem:single-item feasibility}, we restate the dual solution presented in Section~\ref{sec:single-item-analysis}.

For each $i=1,\ldots,N$ do the following. 
Let $h_{\text{max}}$ be the holding cost of the request in $H_{i-1}$ which has the latest deadline at time $t_{i-1}$, and $b_{\text{max}}$ be the backlog cost of the request in $B^1_i$ which has the earliest deadline at time $t_i$ (see Figure~\ref{fig:local charging}). 
Set $h_{\text{max}}:=0$ if $H_{i-1} = \emptyset$ and $b_{\text{max}}:=0$ if $B_i^1 = \emptyset$.
For ease of notation, set $H_0=\emptyset$.
There are two cases
    \begin{itemize}
        \item \textbf{Case 1.} $h_{\text{max}} > b_{\text{max}}$: set $\alpha_\rho = b \cdot (t_i-d_\rho)$ for each $\rho \in B_i$, and set $\alpha_{\rho}=0$ for each $\rho \in H_{i-1}$. 
        \item \textbf{Case 2.} $h_{\text{max}} \leq b_{\text{max}}$: set $\alpha_\rho=h \cdot (d_\rho - t_{i-1})$ for each $\rho \in H_{i-1}$, and set $\alpha_\rho=\frac{s- h_{\text{sum}}}{s} \cdot b \cdot (t_i-d_\rho)$ for each $\rho \in B_i$, where $h_{\text{sum}}:=\sum_{\rho \in H_{i-1}} h\cdot (d_\rho-t_{i-1})$ is the sum of the holding costs of requests in $H_{i-1}$ at time $t_{i-1}$.
    \end{itemize}
For each $\rho$, set $\beta_{\rho,t}$ as follows (see Figure~\ref{fig:beta}):
\[
\beta_{\rho,t}= \begin{cases}
\max(0,\alpha_\rho - h \cdot (d_\rho - t)) &\text{for } a_\rho \leq t \leq d_\rho \\
\max(0,\alpha_\rho - b \cdot (t-d_\rho)) &\text{for } t > d_\rho.
\end{cases}
\]

\begin{proof}[Proof of Lemma~\ref{lem:single-item feasibility}]
    The way we set the beta variables ensures that Constrains~\eqref{const:holding}, \eqref{const:backlog}, and \eqref{const:dual nonnegativity} are satisfied. We only have to check Constraint~\eqref{const:beta cap}. 
    It is easy to see that for each service $S_i$, the $\beta_{\rho,t}$ variables associated to it sum up to at most $s$ at each time $t$, i.e., $\sum_{\rho \in H_{i-1} \cup B_i} \beta_{\rho, t} \leq s$ for each time $t$. 
    This is because by Lemma~\ref{lem:single-item dual objective} we have $\sum_{\rho \in H_{i-1} \cup B_i} \alpha_\rho = s$, and we know that $\beta_{\rho,t} \leq \alpha_\rho$. 
    It suffices to show that the $\beta$ variables for different services do not ``overlap''. 
    To this end, we show that for each service $S_i$, the $\beta_{\rho, t}$ variables for each $\rho \in H_{i-1} \cup B_i$ can be nonzero only for $t \in (t_{i-1},t_i)$, where $t_0=0$. 
    
    Note that we always have $\alpha_\rho \leq b\cdot (t_i - d_\rho)$ for each $\rho \in B_{i}$, 
    which ensures that $\beta_{\rho,t}=0$ for each $\rho \in B_i$ and $t \geq t_i$. 
    Also, $\beta_{\rho,t}=0$ for each $\rho \in H_{i-1}$ and $t\leq t_{i-1}$, as $\alpha_\rho \leq h \cdot (d_\rho - t_{i-1})$ for each $\rho \in H_{i-1}$. 
    Also note that by definition, the arrival time of each request $\rho \in B^2_i$ is after $t_{i-1}$, so $\beta_{\rho, t}=0$ for each $\rho \in B^2_i$ and $t \leq t_{i-1}$. 
    Thus, {\bf it suffices to show that  $\beta_{\rho, t}=0$ either if $\rho \in B^1_i$ and $t \leq t_{i-1}$ or if $\rho \in H_{i-1}$ and $t \geq t_i$} (we assume the undefined betas are 0 too).
    
    We first analyze the general case, where  $H_{i-1}$ and $B_i^1$ are non-empty. 
    We will address the special cases in the end.
    
    Let $\rho^*_H$ be the request with the latest deadline in $H_{i-1}$ and $\rho^*_B$ be the request with the earliest deadline in $B^1_i$.
    Since each service aggregates holding costs in increasing order of deadlines and all the requests in $B^1_i$ were alive at time $t_{i-1}$, we know that $d_{\rho^*_H} \leq d_{\rho^*_B}$ (see Figure~\ref{fig:local charging}). 
    Now we compare $h_{\text{max}} = h \cdot (d_{\rho^*_H}-t_{i-1})$ and $b_{\text{max}}=b \cdot (t_i-d_{\rho^*_B})$:
    \begin{itemize}
        \item If $h_{\text{max}} > b_{\text{max}}$, we set $\alpha_\rho=0$ for each $\rho \in H_{i-1}$, which means that $\beta_{\rho,t}=0$ for each $\rho \in H_{i-1}$ and each $t\geq t_i$.
        For each $\rho \in B_i$, we set $\alpha_\rho = b \cdot (t_i-d_\rho)$. For each time $a_\rho \leq t \leq t_{i-1}$ and each request $\rho \in B^1_i$ we have
        \begin{align*}
            \beta_{\rho, t} &= \max(0, \alpha_\rho - h\cdot (d_\rho - t)) \\
            &\leq \max(0, \alpha_{\rho^*_B} - h\cdot (d_{\rho^*_B} - t_{i-1})) \\
            &= \max(0, b_{\text{max}} - h\cdot (d_{\rho^*_B} - t_{i-1})) \\
            &\leq \max(0, h_{\text{max}} - h\cdot (d_{\rho^*_B} - t_{i-1})) \\
            &\leq \max(0, h_{\text{max}} - h\cdot (d_{\rho^*_H} - t_{i-1})) \\
            &=\max(0,0) \\
            &=0.
        \end{align*}
        \item If $h_{\text{max}} \leq b_{\text{max}}$, we set $\alpha_\rho=h \cdot (d_\rho - t_{i-1})$ for each $\rho \in H_{i-1}$. For each time $t \geq t_i$ we have
        \begin{align*}
            \beta_{\rho,t} &= \max(0, \alpha_\rho - b\cdot (t-d_\rho)) \\
            &\leq \max(0, \alpha_{\rho^*_H} - b\cdot (t_i-d_{\rho^*_H})) \\
            &\leq \max(0, h_{\text{max}} - b\cdot (t_i-d_{\rho^*_H})) \\
            &\leq \max(0, b_{\text{max}} - b\cdot (t_i-d_{\rho^*_H})) \\
            &\leq \max(0, b_{\text{max}} - b\cdot (t_i-d_{\rho^*_B})) \\
            &\leq \max(0,0) \\
            &= 0.
        \end{align*}
        Moreover, for each $\rho \in B_i$, we set $\alpha_\rho=\frac{s- h_{\text{sum}}}{s} \cdot b \cdot (t_i-d_\rho)$, where $h_{\text{sum}}:=\sum_{\rho \in H_{i-1}} h\cdot (d_\rho-t_{i-1})$. 
        Note that since the sum of the backlog costs of the requests in $B_i$ is exactly $s$ at time $t_i$, it follows that $b\cdot (t_i-d_\rho) \leq s$ for each individual $\rho \in B_i$. 
        Thus
        $\alpha_\rho = \frac{s- h_{\text{sum}}}{s} \cdot b \cdot (t_i-d_\rho) \leq s-h_{\text{sum}}$. 
        Each request $\rho \in B^1_i$ was alive at time $t_{i-1}$ but Algorithm~\ref{alg:single-item} did not serve it in the Holding Phase of service $S_{i-1}$. 
        This means that the holding cost of $\rho$ at time $t_{i-1}$ is more than $s-h_{\text{sum}}$, i.e., $h \cdot (d_\rho-t_{i-1}) > s-h_{\text{sum}}$. Therefore for each $\rho \in B^1_i$ and $a_\rho \leq t \leq t_{i-1}$ we have      
        \[\beta_{\rho,t} = \max(0, \alpha_\rho - h\cdot (d_\rho-t)) \leq \max(0, s-h_{\text{sum}} - h\cdot (d_\rho-t_{i-1})) = 0.\]
    \end{itemize} 
    Now we address the special cases. 

    If $H_{i-1} = \emptyset$, we set $h_{\text{max}}=0$, which means that Case 2 happens in the dual solution. 
    We only need to show $\beta_{\rho,t}=0$ for each $\rho \in B_i^1$ and $t\leq t_{i-1}$.
    If $i=1$, we have $a_\rho \geq t_{i-1}=0$, which means that $\beta_{\rho,t}=0$ for $t \leq t_{i-1} \leq a_\rho$.
    Otherwise, the proof follows from the exact argument used in the second case above with $h_{\text{sum}}=0$.

    If $H_{i-1}\neq \emptyset$ and $B_i^1=\emptyset$, Case 1 happens in the dual solution.
    Since $B_i^1$ is empty, we only need to check if $\beta_{\rho,t}=0$ for $\rho \in H_{i-1}$ and $t \geq t_i$, which is true as in Case 1 we set $\alpha_\rho=0$ for each $\rho \in H_{i-1}$.
\end{proof}

\subsection{Multi-Item Case}
\lemlocalcharging*

\begin{proof}
    Initially, we set all the $\alpha$ and $\beta$ variables to 0. 
    For each $\rho \in H$, we have $d_\rho \geq t_\ell$, and for each $\rho \in B$ we have $d_\rho \leq t^*$.
    Let $h_{\text{sum}}:=\sum_{\rho \in H} h \cdot (d_\rho-t_\ell)$ be the sum of the holding costs of requests in $H$ at time $t_\ell$. 
    Note that the design of Local Holding Phase of Algorithm~\ref{alg:multiple-item} for item $v$ implies that $h_{\text{sum}} \leq c(v)$. 
    Similarly, let $b_{\text{sum}}:=\sum_{\rho \in B} b \cdot (t^*-d_\rho)$ be the sum of the backlog costs of the request in $B$ at time $t^*$. 
    We show that $b_{\text{sum}} \geq c(v)$.
    This, by definition of $L_i^v$, holds in case 1. 
    In case 2, the definition of $\text{mature}_{t_i}(v)$ in Premature Backlog Phase implies that $\sum_{\rho \in B_{i,2}^v} b \cdot (\text{mature}_{t_i}(v)-d_\rho) = c(v)$. 

    Let $\rho^*$ be the request with the latest arrival time in $B$.
    Let $B^1$ be the set of requests in $B$ that have arrived before $t_\ell$, i.e., $B^1:=\{\rho \in B: a_\rho \leq t_\ell \}$. Let $B^2:= B \backslash B^1$.
    Let $\rho_H^*$ be the request with the latest deadline in $H$, and let $h_{\text{max}}=h \cdot (d_{\rho_H^*}-t_\ell)$ be its holding cost at time $t_\ell$.
    If $H=\emptyset$, set $h_{\text{max}}=0$.
    Similarly, let $\rho_B^*$ be the request with the earliest deadline in $B^1$, and $b_{\text{max}}=b \cdot (t^*-d_{\rho_B^*})$ be its backlog cost at time $t^*$. 
    Set $b_{\text{max}}=0$ if $B^1=\emptyset$.
    See Figure~\ref{fig:local charging} for a visualization, changing $(t_{i-1},t_i,H_{i-1},B_i^1,B_i^2)$ to $(t_\ell,t^*,H,B^1,B^2)$.
    Here is how we set $\alpha_\rho$ for each $\rho \in H \cup B$.
    There are two cases
    \begin{itemize}
        \item {\boldmath$h_{\text{max}} > 
        b_{\text{max}}$}: Set 

        \[
        \alpha_\rho:= \begin{cases}
            b \cdot (t^*-d_\rho) &\text{for } \rho \in B \backslash \{\rho^*\} \\
            c(v) - \sum_{\rho \in B \backslash \{\rho^*\}} b \cdot (t^*-d_\rho) &\text{for } \rho = \rho^*\\
            0 &\text{for } \rho \in H.
        \end{cases}
        \]
        It is clear that $\sum_{\rho \in B \cup H} \alpha_\rho = c(v)$. 
        Also, note that for both $B=L_i^v$ and $B=B_{i,2}^v$ we have ${\sum_{\rho \in B \backslash \{\rho^*\}} b \cdot (t^*-d_\rho) \leq c(v)}$, which means all the $\alpha_\rho$ variables defined are nonnegative.
        \item {\boldmath$h_{\text{max}} \leq b_{\text{max}}$}: Set $\alpha_\rho=h \cdot (d_\rho - t_\ell)$ for each $\rho \in H$. 
        Let $x=c(v) - h_{\text{sum}}$.
        Go over the requests $\rho \in B$ in increasing order of arrival time, and for each $\rho$, set ${\alpha_\rho=\min(x, b \cdot (t^*-d_\rho))}$, and decrease $x$ by $\alpha_\rho$.
        Note that since $h_{\text{sum}} \leq c(v)$, initially $0 \leq x \leq c(v)$. 
        Also, since ${\sum_{\rho \in B} b \cdot (t^*-d_\rho) = b_{\text{sum}} \geq c(v)}$, in the end, $x=0$, meaning that its initial value is distributed among $\alpha_\rho$ variables for $\rho \in B$. 
        So 
        \[{\sum_{\rho \in H\cup B} \alpha_\rho= \sum_{\rho \in H} h \cdot (d_\rho-t_\ell) + x = h_{\text{sum}} + (c(v) - h_{\text{sum}}) = c(v)}.\]
        Also, note that all the $\alpha$ variables are nonnegative.
        Since we go over the requests in $B$ in increasing order of arrival time and initial value of $x$ is at most $c(v)$, by the time we get to $\rho^*$, we have ${x = \max(0,c(v) - h_{\text{sum}} -  \sum_{\rho \in B \backslash \{\rho^*\}} b \cdot (t^*-d_\rho))}$.
        Since ${\sum_{\rho \in B \backslash \{\rho^*\}} b \cdot (t^*-d_\rho) \leq c(v)}$, we have
        \[
        \alpha_{\rho^*} \leq x \leq c(v) - \sum_{\rho \in B \backslash \{\rho^*\}} b \cdot (t^*-d_\rho).\]
    \end{itemize}
We have so far verified properties~\ref{cond:1} and~\ref{cond:5} in the lemma.
For each $\rho \in B \cup H$, set $\beta_{\rho,t}$ as follows:
\[
\beta_{\rho,t}= \begin{cases}
\alpha_\rho &\text{for } \max(a_\rho, d_\rho-\frac{\alpha_\rho}{h}) < t < d_\rho + \frac{\alpha_\rho}{b} \\
0 &\text{otherwise } .
\end{cases}
\]
From the definition of $\beta_{\rho,t}$ variables it is clear that $\beta_{\rho,t} \leq \alpha_\rho$ and constraints~\eqref{const JRP_D:holding}, and ~\eqref{const JRP_D:backlog} are satisfied. Also since $\alpha_\rho \geq 0$ for all $\rho \in B \cup H$, constraint~\eqref{const JRP_D:beta nonnegativity} is satisfied as well. 
So it remains to prove property~\ref{cond:2} in the lemma, which is to show that $\beta_{\rho,t}=0$ for each $\rho \in B \cup H$ and $t \notin (t_\ell,t^*)$.

So for each $\rho \in B \cup H$, it suffices to verify that $(\max(a_\rho, d_\rho-\frac{\alpha_\rho}{h}), d_\rho + \frac{\alpha_\rho}{b}) \subseteq (t_\ell,t^*)$, which is equivalent to the following two conditions:
\begin{enumerate}
    [label*=\textbf{C\arabic*}]
    \item\label{C1-local} $a_\rho \geq t_\ell$ or $\alpha_\rho \leq h \cdot (d_\rho - t_\ell)$ 
    \item\label{C2-local}$\alpha_\rho \leq b \cdot (t^*-d_\rho)$.
\end{enumerate}

\ref{C1-local} holds for each $\rho \in H$ because $\alpha_\rho \leq h \cdot (d_\rho - t_\ell)$. 
Also,~\ref{C2-local} holds for each $\rho \in B$. 
Note that in the case where $h_{\max}>b_{\max}$ and $\rho = \rho^*$, since $\sum_{\rho \in B}b \cdot (t^*-d_\rho)=b_{\text{sum}}\geq c(v)$, it follows that $\alpha_{\rho^*}=c(v)-\sum_{\rho \in B \backslash \{\rho^*\}} b \cdot (t^*-d_\rho) \leq b \cdot (t^*-d_{\rho^*})$. 
By definition, the arrival time of each request $\rho \in B^2$ is after $t_\ell$, so~\ref{C1-local} holds for it.

Thus, {\bf it suffices to show that~\ref{C2-local} holds for each $\rho \in H$ and~\ref{C1-local} holds for each $\rho \in B^1$}. 
We first analyze the general case, where $H$ and $B^1$ are non-empty. 
We will address the special cases in the end.

In Local Holding Phase of service $S_\ell$ for item $v$ in Algorithm~\ref{alg:multiple-item}, we go over the active requests for $v$ in increasing order of deadlines.
All the requests in $B^1$ were active at time $t_\ell$ but were not served during this phase.
This means that the last request in $H=H_{\ell,1}^v$ has an earlier deadline than the first request in $B^1$, i.e., $d_{\rho^*_H} \leq d_{\rho^*_B}$. 
Now we compare $h_{\text{max}} = h \cdot (d_{\rho^*_H}-t_\ell)$ and $b_{\text{max}}=b \cdot (t^*-d_{\rho^*_B})$:
\begin{itemize}
    \item Case 1: $h_{\text{max}} > b_{\text{max}}$. We set
    $\alpha_\rho=0$ for each $\rho \in H$, so~\ref{C2-local} holds for it.
    For each $\rho \in B^1$, we have
    \begin{align*}
        \alpha_\rho &\leq b \cdot (t^*-d_\rho)\\
        &\leq b \cdot (t^*-d_{\rho^*_B})\\
        &= b_{\text{max}}\\
        &< h_{\text{max}}\\
        &=h \cdot (d_{\rho^*_H}-t_\ell)\\
        &\leq h \cdot (d_{\rho^*_B}-t_\ell)\\
        &\leq h \cdot (d_\rho - t_\ell), 
    \end{align*}
    which means that~\ref{C1-local} holds for $\rho$.
    
    \item Case 2: $h_{\text{max}} \leq b_{\text{max}}$. We set $\alpha_\rho=h \cdot (d_\rho - t_\ell)$ for each $\rho \in H$. Therefore
    \begin{align*}
        \alpha_\rho &= h \cdot (d_\rho - t_\ell) \\
        &\leq h \cdot (d_{\rho^*_H} - t_\ell)\\
        &= h_{\text{max}}\\
        &\leq b_{\text{max}}\\
        &=b \cdot (t^*-d_{\rho^*_B})\\
        &\leq b \cdot (t^*-d_{\rho^*_H})\\
        &\leq b \cdot (t^*-d_\rho),
    \end{align*}
    which means that~\ref{C2-local} holds for $\rho$.
    
    Finally, for each $\rho \in B^1$, we have $\alpha_\rho \leq c(v)-h_{\text{sum}}$. 
    Each request $\rho \in B^1$ was active at time $t_\ell$ but Algorithm~\ref{alg:multiple-item} did not satisfy it in Local Holding Phase for item $v$ in service $S_i$ because it ran out of budget. 
    This means that the holding cost of $\rho$ at time $t_\ell$ is more than $c(v)-h_{\text{sum}}$, i.e., $h \cdot (d_\rho-t_\ell) > c(v)-h_{\text{sum}}$. 
    Therefore for each $\rho \in B^1$ 
    \[ \alpha_\rho \leq c(v)-h_{\text{sum}} < h \cdot (d_\rho-t_\ell), \]
    which implies that~\ref{C1-local} holds for $\rho$.
\end{itemize}

Now we address the special cases. 
If $H = \emptyset$, we set $h_{\text{max}}=0$, which means that $h_{\text{max}} \leq b_{\text{max}}$.
We do not have any requests in $H$, which means that we only need to show~\ref{C1-local} holds for each $\rho \in B^1$.
If $S_\ell$ is the first service that includes $v$, we have $t_{\ell}=0$, which means that $a_\rho\geq t_\ell$.
Otherwise, the proof follows from the exact argument used in the second case above with $h_{\text{sum}}=0$.

If $H\neq \emptyset$ and $B^1=\emptyset$, we have $h_{\text{max}} > b_{\text{max}}$.
Since $B^1$ is empty, we only need to check if~\ref{C2-local} holds for each $\rho \in H$.
But note that in this case, we set $\alpha_\rho=0$ for each $\rho \in H$, which implies that~\ref{C2-local} holds for $\rho$.
\end{proof}

\lemtwosidedglobalcharging*

\begin{proof}
Initially, we set all the $\alpha$, $\beta$, and $\gamma$ variables to 0.
Let
\[
SB:=\sum_{v \in V_{i,1}\cap V_{i-1}}\left( \sum_{\rho \in B_{i,1}^v} b \cdot (t_i-d_\rho)-c(v) \right)
\]
be the surplus backlog cost of the vertices in $V_{i,1} \cap V_{i-1}$ in Mature Backlog Phase of $S_i$.
Algorithm~\ref{alg:multiple-item} in Mature Backlog Phase waits until the sum of the surplus backlog costs of the mature items (all the items in $V_{i,1}$) reaches $c(r)$ and then triggers a service.
Thus, $SB \leq c(r)$.

For each $\rho \in H$, we have $d_\rho \geq t_{i-1}$, and for each $\rho \in B$ we have $d_\rho \leq t_i$.
Let $b_{\text{sum}}:=\sum_{\rho \in B} b \cdot (t_i-d_\rho)$ be the sum of the backlog costs of the requests in $B$ at time $t_i$.
For each $v \in V_{i,1}$ we have $B_{i,1}^v=L_i^v \cup R_i^v$, where by definition, $\sum_{\rho \in L_i^v} b \cdot (t_i-d_\rho) \leq c(v)$.
This means that for each $v \in V_{i,1} \cap V_{i-1}$, we have 
\[\sum_{\rho \in R_i^v} b \cdot (t_i-d_\rho) \geq \sum_{\rho \in B_{i,1}^v} b \cdot (t_i-d_\rho) - c(v).\]
Thus, 
\[b_{\text{sum}} = \sum_{\rho \in B} b \cdot (t_i-d_\rho)
=\sum_{v \in V_{i,1}\cap V_{i-1}}\sum_{\rho \in R_i^v} b \cdot (t_i-d_\rho)
\geq \sum_{v \in V_{i,1}\cap V_{i-1}}\left( \sum_{\rho \in B_{i,1}^v} b \cdot (t_i-d_\rho)-c(v) \right) = SB.\]
Let $h_{\text{sum}}:=\sum_{\rho \in H} h \cdot (d_\rho-t_{i-1})$ be the sum of the holding costs of requests in $H$ at time $t_{i-1}$. 
By the design of Global Holding Phase of Algorithm~\ref{alg:multiple-item} we have $h_{\text{sum}} \leq c(r)$. 

Let $B^1$ be the set of requests in $B$ that have arrived before $t_{i-1}$, i.e., $B^1:=\{\rho \in B: a_\rho \leq t_{i-1} \}$. 
Let $B^2:= B \backslash B^1$.
Let $\rho_H^*$ be the request with the latest deadline in $H$, and let $h_{\text{max}}=h \cdot (d_{\rho_H^*}-t_{i-1})$ be its holding cost at time $t_{i-1}$.
If $H=\emptyset$, set $h_{\text{max}}=0$.
Similarly, let $\rho_B^*$ be the request with the earliest deadline in $B^1$, and $b_{\text{max}}=b \cdot (t_i-d_{\rho_B^*})$ be its backlog cost at time $t_i$. 
Set $b_{\text{max}}=0$ if $B^1=\emptyset$.
See Figure~\ref{fig:local charging} for a visualization, changing $(H_{i-1},B_i^1,B_i^2)$ to $(H,B^1,B^2)$.

Here is how we set $\alpha_\rho$ for each $\rho \in H \cup B$.
    There are two cases
    \begin{itemize}
        \item {\boldmath$h_{\text{max}} > 
        b_{\text{max}}$}: Set 

        \[
        \alpha_\rho:= \begin{cases}
            b \cdot (t_i-d_\rho) \cdot \frac{SB}{b_{\text{sum}}} &\text{for } \rho \in B \\
            0 &\text{for } \rho \in H.
        \end{cases}
        \]
        It is clear that $\alpha_\rho \geq 0$ for each $\rho \in B \cup H$, and 
        
        \[\sum_{\rho \in B \cup H} \alpha_\rho= \sum_{\rho \in B} b \cdot (t_i-d_\rho) \cdot \frac{SB}{b_{\text{sum}}} = b_{\text{sum}} \cdot \frac{SB}{b_{\text{sum}}}= SB.\] 
        
        \item {\boldmath$h_{\text{max}} \leq b_{\text{max}}$}: Set 

        \[
        \alpha_\rho:= \begin{cases}
            b \cdot (t_i-d_\rho) \cdot \frac{SB-h_{\text{sum}}}{b_{\text{sum}}} &\text{for } \rho \in B 
             \text{ if } h_{\text{sum}}<SB\\
            0 &\text{for } \rho \in B \text{ if } h_{\text{sum}} \geq SB\\
            h \cdot (d_\rho - t_{i-1}) &\text{for } \rho \in H.
        \end{cases}
        \]

        Note that $\alpha_\rho \geq 0$ for each $\rho \in B \cup H$.
        If $h_{\text{sum}} \geq SB$
        we have
        \[
        \sum_{\rho \in H \cup B}\alpha_\rho = \sum_{\rho \in H} \alpha_\rho = \sum_{\rho \in H} h \cdot (d_\rho-t_{i-1}) =h_{\text{sum}} \geq SB.
        \]
        If $h_{\text{sum}} < SB$, we have
        \[
        \sum_{\rho \in H \cup B} \alpha_\rho = \sum_{\rho \in H} h\cdot (d_\rho-t_{i-1}) + \sum_{\rho \in B} b \cdot (t_i-d_\rho) \cdot \frac{SB-h_{\text{sum}}}{b_{\text{sum}}}
        = h_{\text{sum}} + b_{\text{sum}} \cdot \frac{SB-h_{\text{sum}}}{b_{\text{sum}}}=SB.
        \]
    \end{itemize}
Thus, in both cases, we showed that 
\[
\sum_{\rho \in H \cup B} \alpha_\rho \geq SB = \sum_{v \in V_{i,1}\cap V_{i-1}}\left( \sum_{\rho \in B_{i,1}^v} b \cdot (t_i-d_\rho)-c(v) \right),
\]
which verifies property~\ref{cond two-sided:1} in the lemma.

Note that in both cases above we have
\[
 \sum_{\rho \in H \cup B} \alpha_\rho \leq \max(SB,h_{\text{sum}}) \leq c(r).
\]
Since $SB \leq b_{\text{sum}}$, we have $\alpha_\rho \leq b \cdot (t_i-d_\rho)$ for each $\rho \in B$.
Also, for each $\rho \in H$ we have $\alpha_\rho \leq h \cdot (d_\rho-t_{i-1})$.

For each $\rho \in B \cup H$, set $\beta_{\rho,t}$ as follows:
\[
\beta_{\rho,t}= \begin{cases}
\alpha_\rho &\text{for } \max(a_\rho, d_\rho-\frac{\alpha_\rho}{h}) < t < d_\rho + \frac{\alpha_\rho}{b} \\
0 &\text{otherwise } .
\end{cases}
\]
From the definition of $\beta_{\rho,t}$ variables it is clear that $\beta_{\rho,t} \leq \alpha_\rho$ (property~\ref{cond two-sided:3}) and constraints~\eqref{const JRP_D:holding}, and ~\eqref{const JRP_D:backlog} are satisfied. 
Also since $\alpha_\rho \geq 0$ for all $\rho \in B \cup H$, constraint~\eqref{const JRP_D:beta nonnegativity} is satisfied as well. 

For each time $t$ and item $v \in V_{i,1}\cap V_{i-1}$ set $\gamma_{v,t}:=\sum_{\rho:v_\rho=v} \beta_{\rho,t}$.
This way, Constraint~\eqref{const JRP_D:beta cap} and property~\ref{cond two-sided:5} in the lemma are satisfied.
Moreover, we have $\gamma_{v,t} \geq 0$ and 
\[
\sum_v \gamma_{v,t}
= \sum_{v} \sum_{\rho:v_\rho=v}\beta_{\rho,t} 
\leq 
\sum_{v} \sum_{\rho:v_\rho=v}\alpha_{\rho} = \sum_{\rho \in B \cup H} \alpha_\rho \leq c(r), 
\]
which means that Constraints~\eqref{const JRP_D:gamma nonnegativity} and~\eqref{const JRP_D:gamma cap} hold too.
Therefore, our proposed dual solution satisfies all the constraints in $[\text{JRP}_{\text{D}}]$, which shows that property~\ref{cond two-sided:4} in the lemma holds.

So it remains to prove property~\ref{cond two-sided:2} in the lemma.
To this end, we show that $\beta_{\rho,t}=0$ for each $\rho \in B \cup H$ and $t \notin (t_{i-1},t_i)$.
Note that since $\gamma_{v,t}=\sum_{\rho:v_\rho=v}\beta_{\rho,t}$, this proves that $\gamma_{v,t}=0$ for each $t \notin (t_{i-1},t_i)$.

So for each $\rho \in B \cup H$, it suffices to verify that $(\max(a_\rho, d_\rho-\frac{\alpha_\rho}{h}), d_\rho + \frac{\alpha_\rho}{b}) \subseteq (t_{i-1},t_i)$, which is equivalent to the following two conditions:
\begin{enumerate}[label*=\textbf{C\arabic*}]
    \item\label{C1-global} $a_\rho \geq t_{i-1}$ or $\alpha_\rho \leq h \cdot (d_\rho - t_{i-1})$ 
    \item\label{C2-global}$\alpha_\rho \leq b \cdot (t_i-d_\rho)$.
\end{enumerate}

\ref{C1-global} holds for each $\rho \in H$ because $\alpha_\rho \leq h \cdot (d_\rho - t_{i-1})$. 
Also,~\ref{C2-global} holds for each $\rho \in B$. 
By definition, the arrival time of each request $\rho \in B^2$ is after $t_{i-1}$, so~\ref{C1-global} holds for it.

Thus, {\bf it suffices to show that~\ref{C2-global} holds for each $\rho \in H$ and~\ref{C1-global} holds for each $\rho \in B^1$}. 
We first analyze the general case, where $H$ and $B^1$ are non-empty. 
We will address the special cases in the end.

In Global Holding Phase of service $S_{i-1}$ in Algorithm~\ref{alg:multiple-item}, we go over the active requests for items $V_{i-1}$ in increasing order of deadlines.
We know $B^1 \subseteq B = \bigcup_{v \in V_{i,1}\cap V_{i-1}} R_i^v$, which shows that all the requests in $B^1$ are for items in $V_{i-1}$.
Moreover, all the requests in $B^1$ were active at time $t_{i-1}$, which means they were among the candidates in Global Holding Phase of $S_{i-1}$, but they were not served during this phase.
This means that the last request in $H=H_{i-1,2}$ has an earlier deadline than the first request in $B^1$, i.e., $d_{\rho^*_H} \leq d_{\rho^*_B}$. 
Now we compare $h_{\text{max}} = h \cdot (d_{\rho^*_H}-t_{i-1})$ and $b_{\text{max}}=b \cdot (t_i-d_{\rho^*_B})$:
\begin{itemize}
    \item Case 1: $h_{\text{max}} > b_{\text{max}}$. We set
    $\alpha_\rho=0$ for each $\rho \in H$, so~\ref{C2-global} holds for it.
    For each $\rho \in B^1$, we have
    \begin{align*}
        \alpha_\rho &\leq b \cdot (t_i-d_\rho)\\
        &\leq b \cdot (t_i-d_{\rho^*_B})\\
        &= b_{\text{max}}\\
        &< h_{\text{max}}\\
        &=h \cdot (d_{\rho^*_H}-t_{i-1})\\
        &\leq h \cdot (d_{\rho^*_B}-t_{i-1})\\
        &\leq h \cdot (d_\rho - t_{i-1}), 
    \end{align*}
    which means that~\ref{C1-global} holds for $\rho$.
    
    \item Case 2: $h_{\text{max}} \leq b_{\text{max}}$. We set $\alpha_\rho=h \cdot (d_\rho - t_{i-1})$ for each $\rho \in H$. Therefore
    \begin{align*}
        \alpha_\rho &= h \cdot (d_\rho - t_{i-1}) \\
        &\leq h \cdot (d_{\rho^*_H} - t_{i-1})\\
        &= h_{\text{max}}\\
        &\leq b_{\text{max}}\\
        &=b \cdot (t_i-d_{\rho^*_B})\\
        &\leq b \cdot (t_i-d_{\rho^*_H})\\
        &\leq b \cdot (t_i-d_\rho),
    \end{align*}
    which means that~\ref{C2-global} holds for $\rho$.
    
    Finally, for each $\rho \in B^1$, if $h_{\text{sum}}\geq SB$ we have $\alpha_\rho = 0 \leq c(r)-h_{\text{sum}}$, and if $h_{\text{sum}} < SB$
    we have 
    \[
    \alpha_\rho 
    \leq \sum_{\rho' \in B} \alpha_\rho
    = \sum_{\rho' \in B} 
    b \cdot (t_i-d_{\rho'}) \cdot \frac{SB-h_{\text{sum}}}{b_{\text{sum}}}
    =b_{\text{sum}} \cdot \frac{SB-h_{\text{sum}}}{b_{\text{sum}}} 
    \leq c(r)-h_{\text{sum}}.
    \]
     
    Each request $\rho \in B^1$ is for an item $v \in V_{i-1}$ and was active at time $t_{i-1}$ but Algorithm~\ref{alg:multiple-item} did not satisfy it in Global Holding Phase in service $S_{i-1}$ because it ran out of budget. 
    This means that the holding cost of $\rho$ at time $t_{i-1}$ was more than the remaining budget $c(r)-h_{\text{sum}}$, i.e., $h \cdot (d_\rho-t_{i-1}) > c(r)-h_{\text{sum}}$. 
    Therefore for each $\rho \in B^1$ 
    \[ \alpha_\rho \leq c(r)-h_{\text{sum}} < h \cdot (d_\rho-t_{i-1}), \]
    which implies that~\ref{C1-global} holds for $\rho$.
\end{itemize}

Now we address the special cases. 
If $H = \emptyset$, we set $h_{\text{max}}=0$, which means that $h_{\text{max}} \leq b_{\text{max}}$.
We do not have any requests in $H$, which means that we only need to show~\ref{C1-global} holds for each $\rho \in B^1$.
The proof follows from the exact argument used in the second case above with $h_{\text{sum}}=0$.

If $H\neq \emptyset$ and $B^1=\emptyset$, we have $h_{\text{max}} > b_{\text{max}}$.
Since $B^1$ is empty, we only need to check if~\ref{C2-global} holds for each $\rho \in H$.
But note that in this case, we set $\alpha_\rho=0$ for each $\rho \in H$, which implies that~\ref{C2-global} holds for $\rho$.
\end{proof}

\begin{proof}[Proof of Lemma~\ref{lem:charged-twice}]
    In each step of the dual assignment, a subset of requests is involved, and only the associated variables to these requests might be modified.
    Here is the breakdown of the requests involved in each step of Dual Assignment $i$:
    \begin{itemize}
        \item In step~\ref{D1}, $\LC(v,S_i)$ is called for each $v \in V_{i,1}$, which modifies the requests in $H_{\ell,1}^v \cup L_i^v$, where $\ell$ is the last service before $S_i$ that includes item $v$.
        \item In step~\ref{D21}, $\LC(v,S_{i-1})$ is invoked for each $v \in V_{i-1,2}$, which modifies the requests in $H_{\ell,1}^v \cup B_{i-1,2}^v$, where $\ell$ is the last service before $S_{i-1}$ that includes $v$.
        \item In step~\ref{D221}, for each $\rho \in R_i^1 = \bigcup_{v \in V_{i,1} \backslash V_{i-1}} R_i^v$, $\GC(\rho,S_i)$ is called, which only modifies the request $\rho$.
        \item In step~\ref{D222}, $\TSGC(S_i)$ is called, which uses the requests in $H_{i-1,2}\cup R_i^2$, where $R_{i,2}=\bigcup_{v \in V_{i,1}\cap V_{i-1}} R_i^v$.
    \end{itemize}
    Note that during Dual Assignment $i$, only the requests that are satisfied by the services $S_1,\ldots,S_i$ are used in the chargings, except for step~\ref{D21}, where the requests in $B_{i-1,2}^v$ might be used, which are not necessarily satisfied by the first $i$ services.
    This is because $B_{i,2}^v$ includes all the requests for $v$ that in Premature Backlog Phase of service $S_{i-1}$ contribute to making $v$ mature at time $\text{mature}_{t_{i-1}}(v)>t_{i-1}$.
    Some of these requests might have deadlines before $t_{i-1}$ and some might have deadlines after $t_{i-1}$. 
    In Premature Backlog Phase of $S_{i-1}$ only the overdue requests are served, i.e., the former set of requests.
    The latter set might remain unsatisfied after  $S_{i-1}$ is done.
    But the key observation is that step~\ref{D21} happens only if $t_i > \text{mature}_{t_{i-1}}(v_{k+1}) \geq \text{mature}_{t_{i-1}}(v)$, which means that service $S_i$ happens after the deadline of $\rho$.
    Thus, the next time request $\rho$ is involved in a charging for some service $S_\ell$, we have $d_\rho<t_\ell$, which implies that request $\rho$ is definitely satisfied by then.

    To prove the lemma, it suffices to show that after a request $\rho$ is satisfied, it is involved in at most one local charging and one global charging.
    We prove this by considering different phases of the algorithm at which request $\rho$ for item $v$ can be satisfied.
    \begin{itemize}
        \item If $\rho$ is satisfied during Mature Backlog Phase of $S_i$, there are two cases:
        \begin{itemize}
            \item $v \in L_i^v$, in which case it can only be used during $\LC(v,S_i)$ in step~\ref{D1} in Dual Assignment $i$.
            \item $v \in R_i^v$, in which case, if $v\notin V_{i-1}$, it can only be used during $\GC(\rho,S_i)$ in step~\ref{D221} of Dual Assignment $i$, and if $v \in V_{i-1}$, it can only be used during $\TSGC(S_i)$ in step~\ref{D222} of Dual Assignment $i$.
        \end{itemize}
        Note that $\rho$ can be in $L_i^v \cap R_i^v$, which means that it can be charged at most once during the local chargings and once during the global chargings.

        \item If $\rho$ is satisfied during Premature Backlog Phase of service $S_i$, it can only be used in $\LC(v,S_i)$ in step~\ref{D21} of Dual Assignment $i$.

        \item If $\rho$ is satisfied during Local Holding Phase of $S_i$, it can only be used during $\LC(v,S_\ell)$ in step~\ref{D1} of Dual Assignment $\ell$ or step~\ref{D21} of Dual Assignment $\ell+1$, where $S_\ell$ is the next service after $S_i$ that includes $v$.

        \item If $\rho$ is satisfied during Global Holding Phase of $S_i$, it can only be used in $\TSGC(S_{i+1})$ during step~\ref{D222} of Dual Assignment $i+1$.
    \end{itemize}
\end{proof}

\begin{proof}[Proof of Lemma~\ref{lem:local-beta}]
    Let $S_{i_1},\ldots,S_{i_m}$ be the services that include item $v$, which happen at times $t_{i_1},\ldots,t_{i_m}$.
    Each time $\LC(v,S_{i_j})$ is called for some $j=1,\ldots,m$, the $\beta^*$ values are calculated according to Lemma~\ref{lem:local charging}, and then $\beta_{\rho,t}$ is increased by $\frac{1}{2}\beta_{\rho,t}^*$.
    For each $j=1,\ldots,m$, $\LC(v,S_{i_j})$ is called at most once, and it is during one of the following two steps:
 
    \begin{itemize}
        \item In step~\ref{D1} of Dual Assignment $i_j$.
        In this case, since $v \in V_{i_j,1}$, in Lemma~\ref{lem:local charging}, $t^*$ is set to $t_{i_j}$, and Lemma~\ref{cond:2} guarantees that $\beta_{\rho,t}^*=0$ when $t \notin [t_{i_{j-1}},t_{i_j}]$.
        \item In step~\ref{D21} of Dual Assignment $i_{j}+1$.
        When this happens, $\text{mature}_{t_{i_j}}(v)<t_{i_j+1}$, which means that $t^*$ in Lemma~\ref{lem:local charging} is strictly less than $t_{i_j+1}$. Since $t_{i_j+1}\leq t_{i_{j+1}}$,  Lemma~\ref{cond:2} ensures that $\beta_{\rho,t}^*=0$ for all $t \notin [t_{i_{j-1}},t_{i_{j+1}})$.
    \end{itemize}
    Thus, $\LC(v,S_{i_{j}})$ can only change $\beta_{\rho,t}$ for $t \in [t_{i_{j-1}},t_{i_{j+1}})$.
    This shows that for each time $t \in [t_{i_{j}} ,t_{i_{j+1}})$, $\sum_{\rho:v_\rho=v} \beta_{\rho,t}$ can only increase during $\LC(v,S_{i_j})$ and $\LC(v,S_{i_{j}+1})$, each of which is invoked at most once.
\end{proof}

\begin{proof}[Proof of Lemma~\ref{lem:dual feasibility}]
    Initially, all the variables in [$\text{JRP}_\text{D}$] are set to 0, which means that all the constraints hold.
    We verify that all the constraints in [$\text{JRP}_\text{D}$] remain satisfied after the chargings. 
    \begin{itemize}
        \item  Constraint~\eqref{const JRP_D:beta cap}: 
        Fix an item $v$ and a time $t$.
        Lemma~\ref{lem:local-beta} shows that $\sum_{\rho:v_\rho=v} \beta_{\rho,t}$ can increase during at most two local chargings.
        Each time $\LC(v,S_i)$ is called for some service $S_i$, the $\beta^*$ values are calculated according to Lemma~\ref{lem:local charging}, and then $\frac{1}{4}\beta_{\rho,t}^*$ is added to $\beta_\rho$.
        We know from Lemma \ref{cond:1} and \ref{cond:3} that 
        $\sum_\rho \beta_{\rho,t}^* \leq \sum_\rho \alpha_\rho^*=c(v)$. This proves that considering only the local chargings, we have $\sum_{\rho: \ a_\rho \leq t, \ v_\rho=v} \beta_{\rho,t} - \gamma_{v,t}  \leq \frac{1}{2}c(v)$ for all times $t$, where $\gamma_{v,t}=0$ after the local chargings.

        Each time $\GC(\rho,S_i)$ is invoked for some request $\rho$ for item $v$, the values of $\beta_{\rho,t}$ and $\gamma_{v_\rho,t}$ increase by the same value, which ensures that Constraint~\eqref{const JRP_D:beta cap} remains satisfied.

        Each time $\TSGC(S_i)$ is called, the values $\beta^*$ and $\gamma^*$ are calculated according to Lemma~\ref{lem:two-sided global charge}, and then $\frac{1}{2}\beta_{\rho,t}^*$ and $\frac{1}{2}\gamma_{v,t}^*$ are added to $\beta_{\rho,t}$ and $\gamma_{v,t}$, respectively. 
        By Lemma~\ref{cond two-sided:5} we have that $\sum_{\rho:a_\rho \leq t, v_\rho = v}\beta_{\rho, t}^* - \gamma_{v,t}^*= 0$
        for each item $v$ involved and each time $t$.
        Therefore Constraint~\eqref{const JRP_D:beta cap} remains satisfied after each call to $\TSGC(S_i)$.

        Lastly, notice that in step~\ref{D223}, all the increases in $\beta$ and $\gamma$ variables during steps~\ref{D221} and~\ref{D222} are scaled down by the same factor, which means that Constraint~\eqref{const JRP_D:beta cap} still holds after this step.
        
        \item Constraint~\eqref{const JRP_D:gamma cap}: 
        The local chargings do not change the $\gamma$ variables.
        These variables can only change during step~\ref{D22}.
        Lemma~\ref{lem:global charge arrival} implies that the $\gamma_{v,t}$ variables that are increased during step~\ref{D221} for $S_i$ all have $t_{i-1}<t\leq t_i$. 
        Also Lemma~\ref{cond two-sided:2} shows that $\gamma_{v,t}$ variables that are increased in step~\ref{D222} of Dual Assignment $i$ all have $t_{i-1} < t < t_i$.
        This means for each time $t \in (t_{i-1},t_i]$, only Dual Assignment $i$ can change the values of $\gamma_{v,t}$.

        During step~\ref{D221} of Dual Assignment $i$, $\GC(\rho,S_i)$ is called, and for each time $t \in (t_{i-1},t_i]$, it increases $\sum_v \gamma_{v,t}$ by at most the amount it increases $\sum_\rho \alpha_\rho$.
        During step~\ref{D222} of Dual Assignment $i$, $\TSGC(S_i)$ is called, which computes $\alpha^*$, $\beta^*$, and $\gamma^*$ according to Lemma~\ref{lem:two-sided global charge}, and increases the values of $\alpha_\rho$, $\beta_{\rho,t}$, and $\gamma_{v,t}$ by $\frac{1}{2}\alpha^*_\rho$, $\frac{1}{2}\beta^*_{\rho,t}$, and $\frac{1}{2}\gamma^*_{v,t}$, respectively.
        By Lemma~\ref{cond two-sided:5} and~\ref{cond two-sided:3}, for each time $t$ we have $\sum_v \gamma_{v,t}^* \leq \sum_{\rho} \beta_{\rho,t}^*\leq \sum_\rho \alpha_\rho^*$.
        Thus, during steps~\ref{D221} and~\ref{D222}, the total amount of increase in $\sum_v \gamma_{v,t}$ is at most as much as the increase in $\sum_\rho \alpha_\rho$.
        Step~\ref{D223} ensures that the total increase in $\sum_\rho \alpha_\rho$ is at most $c(r)$ (by scaling down the increases), which means that the total increase of $\sum_v \gamma_{v,t}$ for each time $t$ during step $\ref{D22}$ of Dual Assignment $i$ is at most $c(r)$.
        Therefore, for each time $t$, Constraint~\eqref{const JRP_D:gamma cap} is satisfied.

        \item Constraints \eqref{const JRP_D:holding} and \eqref{const JRP_D:backlog}:
        Consider a request $\rho$ for item $v$ and a time $t$.
        Assume this request was never involved in a $\GC(\rho,S_i)$.
        By Lemma~\ref{lem:charged-twice}, this request can be modified in at most two calls to $\LC(.)$, and at most one call to $\TSGC(.)$, during which $\alpha_\rho$ and $\beta_{\rho,t}$ increase by $\frac{1}{4}\alpha_\rho^1+\frac{1}{4}\alpha_\rho^2+\nu\frac{1}{2}\alpha_\rho^3$ and  $\frac{1}{4}\beta_{\rho,t}^1+\frac{1}{4}\beta_{\rho,t}^2+\nu\frac{1}{2}\beta_{\rho,t}^3$, respectively, where $(\alpha_\rho^1,\beta_{\rho,t}^1)$ and $(\alpha_\rho^2,\beta_{\rho,t}^2)$ are obtained using Lemma~\ref{lem:local charging},  $(\alpha_\rho^3,\beta_{\rho,t}^3)$ is obtained using Lemma~\ref{lem:two-sided global charge}, and $\nu \in (0,1]$ is the coefficient used in step~\ref{D223} to scale down the increase that has happened during $\TSGC(.)$ invoked in step~\ref{D222}.
        By Lemma~\ref{cond:4} and~\ref{cond two-sided:4}, we know that $\alpha_\rho^\ell$ and $\beta_{\rho,t}^\ell$ satisfy Constraints~\eqref{const JRP_D:holding} and~\eqref{const JRP_D:backlog} for $\ell=1,2,3$. This means that if $a_\rho \leq t \leq d_\rho$ we have
        \[\alpha_\rho^\ell - \beta_{\rho, t}^\ell \leq h \cdot (d_\rho - t),\]
        and if $t>d_\rho$ we have
        \[\alpha_\rho^\ell - \beta_{\rho, t}^\ell \leq b \cdot (t - d_\rho),\]
        for $\ell=1,2,3$.
        Therefore, if $a_\rho \leq t \leq d_\rho$ we have
        \[
        \alpha_\rho-\beta_{\rho,t}
        =\frac{1}{4}(\alpha_\rho^1-\beta_{\rho,t}^1)+\frac{1}{4}(\alpha_\rho^2-\beta_{\rho,t}^2)+\nu\frac{1}{2}(\alpha_\rho^3 -\beta_{\rho,t}^3)
        \leq (\frac{1}{4}+\frac{1}{4}+\nu \frac{1}{2}) h \cdot (d_\rho-t)
        \leq h \cdot (d_\rho-t),
        \]
        and if $t>d_\rho$ we have
        \[
        \alpha_\rho-\beta_{\rho,t}
        =\frac{1}{4}(\alpha_\rho^1-\beta_{\rho,t}^1)+\frac{1}{4}(\alpha_\rho^2-\beta_{\rho,t}^2)+\nu\frac{1}{2}(\alpha_\rho^3 -\beta_{\rho,t}^3)
        \leq (\frac{1}{4}+\frac{1}{4}+\nu \frac{1}{2}) b \cdot (t-d_\rho)
        \leq b \cdot (t-d_\rho).
        \]
        This proves that in this case Constraints~\eqref{const JRP_D:holding} and~\eqref{const JRP_D:backlog} hold for $\rho$ and $t$.

        The other case is when $\GC(\rho,S_i)$ is called in Dual Assignment $i$ for some $i$.
        By Lemma~\ref{lem:global charge arrival}, we know that $\rho$ has arrived after $t_{i-1}$, i.e., $t_{i-1}<a_\rho$, which means that it was not involved in any of the dual assignments before $i$.
        The routine $\GC(\rho,S_i)$ is only called when $v \in V_{i,1}$, which means that $\rho$ could be involved in at most one charging before $\GC(\rho,S_i)$, which is $\LC(v,S_i)$ in step~\ref{D1} in Dual Assignment $i$.
        In this case, before calling $\GC(\rho,S_i)$, $\alpha_\rho$ and $\beta_{\rho,t}$ are set to $\frac{1}{4}\alpha_\rho^*$ and $\frac{1}{4}\beta_{\rho,t}^*$ obtained by Lemma~\ref{lem:local charging}.
        Lemma~\ref{cond:4} ensures that $\alpha_\rho^*$ and $\beta_\rho^*$ satisfy Constraints~\eqref{const JRP_D:holding} and~\eqref{const JRP_D:backlog}.
        
        Since $\GC(\rho,S_i)$ was called, it means that $\rho \in R_i^{v}\subseteq B_{i,1}^v$, which implies that $d_\rho \leq t_i$.
        So we have $t_{i-1} < a_\rho \leq d_\rho \leq t_i$
        
        In $\GC(\rho,S_i)$, $\alpha_\rho$ and $\beta_{\rho,t}$ for each $a_\rho \leq t \leq t_i$ are increased by the same amount (even if these increases are later scaled down in step~\ref{D223}), which ensures that Constraint~\eqref{const JRP_D:holding} holds if $a_\rho \leq t \leq d_\rho$.
        By the same reasoning, Constraint~\eqref{const JRP_D:backlog} holds for each $\rho$ and $d_\rho \leq t \leq t_i$.
        By design of the $\GC$ function, the final value for $\alpha_\rho$ after the global charging is done is at most $b \cdot (t_i-d_\rho)$. 
        Also, as we will demonstrate momentarily, $\beta_{\rho,t}\geq 0$ for each $t$.
        So for all $t>t_i$ we have:
        \[ \alpha_\rho - \beta_{\rho,t} \leq b \cdot (t_i-d_\rho) - 0 \leq b \cdot (t-d_\rho),\]
        which shows that Constraint~\eqref{const JRP_D:backlog} also holds for $t > t_i$.       
        \item Constraints \eqref{const JRP_D:beta nonnegativity} and \eqref{const JRP_D:gamma nonnegativity}: Since we initially set all the variables to 0 and in the process of constructing the dual solution we only increase variables, these constraints are implied.
    \end{itemize}
\end{proof}


\section{Other Related Work On Aggregation and Inventory Routing Problems}
\label{sec:rel_app}

JRP is closely related to different Online Aggregation and Inventory Routing problems, which have been studied extensively. 

\paragraph{Problems with only delay or backlog costs.} First we review problems with only backlog or delay costs that generally go under the terminology of (online) aggregation problems~\cite{chrobak2014online}. While in JRP, the underlying graph is a two-level tree, the other versions have considered different topologies.

\emph{Multi-Level Tree.}
 When the underlying graph is a tree and we only have backlog (waiting) cost, the problem is known as Multi-Level Aggregation Problem (MLAP), first introduced by Bienkowski et al.~\cite{bienkowski2020online}.
Aggregation problems for trees of arbitrary depth arise in multicasting, sensor networks, communication in organization hierarchies, and supply chain management. 
In~\cite{bienkowski2020online}, an $O(D^42^D)$-competitive algorithm is provided in the online case, where $D$ is the depth of the tree. This ratio is later improved to $O(D^2)$ by Azar and Touitou~\cite{azar2019general}.
Let MLAP-L and MLAP-D denote the variants of MLAP with, respectively, linear waiting cost functions and with hard deadlines. In~\cite{bienkowski2020online}, a $D^22^D$-competitive algorithm is presented for MLAP-D, which was improved later to $6(D+1)$ by Buchbinder et al.~\cite{buchbinder2017depth}. Currently, the best competitive ratio for MLAP-D is $D$~\cite{mcmahan2021d}.
Mari et al.~\cite{mari2024online} study MLAP-L in the stochastic setting, where the requests follow a Poisson arrival process. They provide a deterministic online algorithm that achieves a constant ratio of expectations.
MLAP has also been studied for specific tree topologies, e.g., paths.
Bienkowski et al.~\cite{bienkowski2013online} showed that the competitive ratio of this problem is between 3.618 and 5, and later, Bienkowski et al.~\cite{bienkowski2021new} improved it to 4 for MLAP-D. They also proved a lower bound of 4 even for the MLAP-L case.

\emph{General Metric.} Suppose the underlying graph represents a general metric, and the server can move in the graph to serve the requests. The problem of minimizing the sum of the distance traveled by the server and delay cost is called Online Service with Delay (OSD), first introduced by Azar et al.~\cite{azar2021online}. They give an $O(\log^4 n)$-competitive algorithm for the OSD problem, where $n$ is the number of nodes in the graph. This result was subsequently improved to $O(\log^2 n)$ by Azar and Touitou~\cite{azar2019general}. Both algorithms in~\cite{azar2021online, azar2019general} are randomized. Touitou~\cite{touitou2023improved} presented the first deterministic algorithm for the OSD problem, achieving $O(\log n)$ competitive ratio, which is the best currently known ratio for this problem.

\paragraph{Problems with only holding costs and hard deadlines.}

The classical formulation of JRP falls in this category and includes work surveyed earlier.

\emph{Multi-Level Trees.} Cheung et al.~\cite{cheung2016submodular} 
study the offline problem for multi-level trees with only holding costs. They call the problem tree JRP,
and provide a 3-approximation algorithm for it.

\emph{General Metric.} Suppose there is a general metric and only holding is allowed. This problem in the offline setting is known as the Inventory Routing Problem (IRP). The planning horizon can be finite or infinite.
IRP is a classical problem in Supply Chain Optimization and has a vast literature and many different variants~\cite{federgruen1995analytical, bramel1995location, sarmiento1999review, toth2002vehicle, campbell1998inventory}. No optimal or constant-factor approximations are known in the general case of the problem. It is not difficult to design an $O(\log n)$-approximation algorithm for this problem by reducing to the instances with tree metrics using the metric embedding
technique~\cite{fakcharoenphol2003tight}. When there are $T$ periods in the planning horizon, Nagarajan and Shi~\cite{nagarajan2016approximation} give a $\frac{\log T}{\log \log T}$-approximation algorithm. 
The approximation ratio was later improved to $O(\log \log \min(n,T))$ by Bosman and Olver~\cite{bosman2020improved}.
If we restrict ourselves to periodic schedules, Fukunaga et al.~\cite{fukunaga2014deliver} provide a 2.55-approximation.

\section{Pathological Example for Non-uniform Cost Functions}
\label{app:nonuniform}

In this section, we demonstrate why Algorithm~\ref{alg:single-item} (and as a result, its extension Algorithm~\ref{alg:multiple-item}) does not give a bounded competitive ratio for the case where the cost functions are non-uniform, i.e., each request $\rho$ has its own holding and backlog cost functions $h_\rho(.)$ and $b_\rho(.)$. 

The idea behind the pathological example is the following. After  each service is triggered by the algorithm, there might be a set of active requests with a small holding cost and 0 backlog cost each that exhaust the budget in the holding phase of Algorithm~\ref{alg:single-item}.  This will prevent the algorithm from satisfying the next active request that has a relatively large backlog cost and will trigger a service soon.

More formally, consider the following instance of the problem: The cost functions for each request $\rho$ are linear, with holding and backlog rates $h_\rho$ and $b_\rho$, respectively. This means for each request $\rho$, the holding cost from time $t<d_\rho$ to $d_\rho$ is $h_\rho \cdot (d_\rho-t)$, and the backlog cost from $d_\rho$ to $t>d_\rho$ is $b_\rho \cdot (t-d_\rho)$.
Let the service cost be $s=1$. Fix a large positive integer $N$. 
For each $i=0,1,\ldots,N$, there is a request $\rho_i$ with deadline $i+0.5$, holding rate $h_{\rho_i}=1/N^2$, and backlog rate $b_{\rho_i}=2$.
Also for each $i=1,\ldots,N$, there is a set of $N^3$ requests $R_i$ with deadlines $i+0.1$, holding rate $h_{R_i}=\frac{10}{N^3}$, and backlog rate $b_{R_i}=0$.
All of the requests have arrival time 0.

Algorithm~\ref{alg:single-item} accumulates $b_{\rho_0} \cdot (1-d_{\rho_0}) = 2 \cdot (1-0.5)=1$ backlog cost because of request $\rho_0$ at time 1, and therefore it triggers a service at time 1. 
Then in the holding phase, it satisfies the requests in $R_1$, paying $|R_1| \cdot h_{R_1}\cdot (1.1-1) = N^3 \cdot \frac{10}{N^3} \cdot 0.1 = 1$. 
Similarly, for each $i=1,\ldots,N$, the algorithm triggers a service at time $i+1$ because of request $\rho_i$, and it satisfies the requests in $R_i$ in the holding phase. 
This way, Algorithm~\ref{alg:single-item} triggers $N+1$ services, and pays $\Theta(N)$ cost.

However, one can trigger a service at time 0, satisfying all the requests $\rho_0,\ldots,\rho_N$ with a holding cost $\sum_{i=0}^N \frac{1}{N^2}\cdot (i+0.5)=O(1)$, and another service at time $N+1$ satisfying all the requests in $\bigcup_{i=1}^N R_i$ with backlog cost 0. 
The cost of this solution is $O(1)$.
Taking $N$ to $\infty$, we conclude that the competitive ratio of Algorithm~\ref{alg:single-item} is unbounded for non-uniform cost functions.

\end{sloppypar}
\end{document}